\let\oldsquare\square
\newtheorem{theorem}{Theorem}[section]
\newtheorem{proposition}[theorem]{Proposition}
\newtheorem{lemma}[theorem]{Lemma}
\newtheorem{remark}[theorem]{Remark}
\newtheorem{problem}{Problem}
\newtheorem{assumption}{Assumption}
\newcommand{\col}{{\rm col\;}}
\newcommand{\diag}{{\rm diag\;}}
\newcommand{\maxv}{{\rm vmax}}
\newcommand{\minv}{{\rm vmin}}
\newcommand*{\horzbar}{\rule[.5ex]{2.5ex}{0.5pt}}
\renewcommand*\env@matrix[1][\arraystretch]{%
	\edef\arraystretch{#1}%
	\hskip -\arraycolsep
	\let\@ifnextchar\new@ifnextchar
	\array{*\c@MaxMatrixCols c}}
\newcommand{\Jc}{\mathcal{J}}
\newcommand{\until}[1]{\{1,\dots,#1\}}
\newcommand{\longthmtitle}[1]{\mbox{} \emph{(#1):}}
\def\qed{ \rule{.1in}{.1in}}
\def\proof{\noindent\hspace{2em}{\itshape Proof: }}
\def\BibTeX{{\rm B\kern-.05em{\sc i\kern-.025em b}\kern-.08em
		T\kern-.1667em\lower.7ex\hbox{E}\kern-.125emX}}
\begin{document}




\title{Efficient Reconstruction of Neural Mass Dynamics Modeled by
  Linear-Threshold Networks}


\author{Xuan Wang, \IEEEmembership{Member, IEEE}, Jorge Cort{\'e}s, \IEEEmembership{Fellow, IEEE} 
\thanks{A preliminary version of this work appeared as \cite{XW-JC:21-cdc} at the IEEE
    Conference on Decision and Control. 
    X. Wang is with the Department of Electrical and Computer
    Engineering, George Mason University, Fairfax,
    xwang64@gmu.edu. This work started when he was a postdoctoral
    researcher at University of California, San Diego.  J. Cort{\'e}s
    is with the Department of Mechanical and Aerospace Engineering,
    University of California, San Diego, cortes@ucsd.edu}} %

\maketitle

\begin{abstract}
  This paper studies the data-driven reconstruction of firing rate
  dynamics of brain activity described by linear-threshold network
  models.  Identifying the system parameters directly leads to a large
  number of variables and a highly non-convex objective
  function. Instead, our approach introduces a novel reformulation
  that incorporates biological organizational features and turns the
  identification problem into a scalar variable optimization of a
  discontinuous, non-convex objective function.  We prove that the
  minimizer of the objective function is unique and establish that the
  solution of the optimization problem leads to the identification of
  all the desired system parameters. These results are the basis to
  introduce an algorithm to find the optimizer by searching the
  different regions corresponding to the domain of definition of the
  objective function.  To deal with measurement noise in sampled data,
  we propose a modification of the original algorithm whose
  identification error is linearly bounded by the magnitude of the
  measurement noise.  We demonstrate the effectiveness of the proposed
  algorithms through simulations on synthetic and experimental data.
\end{abstract}

\begin{IEEEkeywords}
Dynamics Reconstruction; Linear-Threshold Networks ; Neural Mass Models; System identification.
\end{IEEEkeywords}

\section{Introduction}
The realization of complex brain functions relies critically on the
interaction among billions of neuron cells.  Such brain activity can
be modeled and analyzed in a quantitative way using neural mass
models, which describe the evolution of the firing rate of neurons
(e.g., number of spikes per second) and have good trial-to-trial
reproducibility~\cite{WG-AKK-HM-AVMH:97} and accessibility.  The
firing activity of single neurons can be recorded through
cell-attached recording techniques \cite{PA-RF-IL-AM:12}, and the
combined firing activity of populations of neurons can be measured by
electrocorticography (ECoG) \cite{NEC-AS-AK:06}, allowing researchers
to analyze brain systems at different scales. In computational
neuroscience, the meso-scale\footnote{A meso-scale model sits between
  the micro-scale, which takes single neurons as entities, and the
  macro-scale which takes brain regions as
  entities.}\cite{COB-DSB-VMP:18} brain neuronal interactions can be
described by network models \cite{DSB-OS:17,MR-OS:10}, where each node
of the network represents a population of adjacent neurons; the state
of the node is governed by local dynamics characterizing the neurons'
average firing rate; and the edges are defined by the connected neuron
populations whose firing rates are interactive.  Such models are
structurally consistent with brain neuronal activities, both
physiologically and anatomically
\cite{MS-ARM-VJ-GD-PR:18}. Nevertheless, determining their edge
weights is usually challenging because of the difficulty of measuring
and quantifying the strength of neuronal interactions.  Motivated by
this, our research focuses on using sampled data to reconstruct the
firing rate dynamics of brain neural networks, with the ultimate goal
of enabling prediction and control of such models.  Since data
collection about neural systems is subject to uncertainty in their
behavior, including firing rates, such reconstruction needs to take
into account the impact of measurement noise and modeling error.

\subsubsection*{Literature review}

In computational neuroscience, firing rates and blood-oxygen-level
dependence (BOLD)\cite{TM-GB-DSB-FP:19} are two common approaches to
quantifying brain neural activity. BOLD signals can be collected by
functional magnetic resonance imaging (fMRI) scans, which have
relatively low spatial ($\sim$ 1$mm^3$) and temporal ($\sim$ 2$s$)
resolutions\cite{PM-RC-JBM:12}. In contrast, although the collection
of firing rates is more challenging and invasive, requiring the
insertion of electrodes through surgery, the spatial ($\sim$
30$\mu m^3$) and temporal ($\sim$ 0.2$ms$) resolutions
\cite{ES-BS-IV-HMP-IN-MS-IS:15} of its measurements are significantly
better.
Recently~\cite{COB-DSB-VMP:18,TM-GB-DSB-FP:19}, BOLD has been
successfully used at the meso scale level of network modeling to
understand interactive brain activities. Likewise, the better
resolutions offered by firing rates allow to build more precise
network models, and based on these, develop prediction or control
schemes to study brain behavior from a dynamical perspective.  Towards
this end, the results in \cite{DS-SVS:14,DE-DS-SVS:15} use linear
network models to describe firing rate dynamics by assuming the
neurons' local dynamics can be linearized around their fixed points.
However, such simplification ignores two important properties of
firing rates, i.e., the values are non-negative and are subject to
saturation constraints.  In this paper,
following~\cite{PD-LFA:01,EN-JC:21-tacI,FC-AA-FP-JC:21-csl}, we employ
a linear-threshold network model to describe the dynamical behavior of
firing rates that takes into account these properties.

A key step in our work is to determine the parameters of the
meso-scale network model, including time constants, edge weights, and
threshold bounds.  This process is closely related to system
identification (SysID)\cite{ZH-ZW:13}. Given an unknown system, SysID
aims to learn the system parameters from its input-output data. With
an abundant literature in this field, powerful methods have been
proposed for the identification of linear systems~\cite{KJA-PE:71}.
However, given the inherent complexity and variety in model structures
of nonlinear systems, unified SysID approaches for them usually lack
provable guarantees on the accuracy of the identified parameters, and
their computational complexity is
significant~\cite{XH-RJM-SC-CJH-KL-GWI:08}.
The identification of linear-threshold network models addressed here
has parallelisms with the determination of weights in the training of
neural networks with the rectified linear unit (RELU) activation
function~\cite{AB-RJ-DPW-19} in the machine learning
literature~\cite{FA-EDS-VM:93,GH:03}.  However, we note that the
research goals and methods are fundamentally different, mostly
stemming from the connection (or lack thereof) to actual physical
processes associated to the identified network model.  Here, we seek to
reconstruct the dynamical behavior of an actual physical system, whose
nodes' states evolve with time, corresponding to their current states
and the system input.  Instead, when training neural networks, the
weights have no physical or dynamical relevance and the static
network model seeks to establish a virtual input-output mapping.


\subsubsection*{Statement of contributions}
We study the reconstruction of firing rate dynamics in a
linear-threshold network model based on discrete-time data samples.
We start by noting that the identification of all model parameters
gives rise to a highly non-convex and non-smooth problem with a large
number of variables. In turn, this means that: (a) solving the problem
directly is computationally expensive; and (b) solutions obtained from
local minimizers are not robust against measurement noise. In order to
address these issues, the contributions of the paper are
two-fold. First, we introduce a new approach with lower computational
complexity for parameter identification. Based on a reformulation of
the linear-threshold model, the proposed approach optimizes a
discontinuous and non-convex function which is only a function of a
scalar variable and is piecewise smooth.  This reformulation can also
take into account Dale's law, which arises from the physiology of
neurotransmission and introduces sign constraints on the model's edge
weights.  We show through analysis that the new objective function has
a unique minimizer, under appropriate conditions on the sampled data,
and that the minimizer can be used to compute all the desired
parameters of the linear-threshold network model. This allows us to
develop an algorithm to obtain system parameters based on the scalar
optimization and analyze its computational complexity.  Our second
contribution deals with the measurement noise in sampled data. For a
general non-convex optimization problem, bounded measurement noise may
lead to unbounded changes to its solution. To avoid this, we modify
the proposed algorithm, making it robust to the impact of measurement
noise. When the sampled data involves bounded noise, our analysis
shows that the identification error of the algorithm is linearly
bounded by the magnitude of the measurement noise.  For both
algorithms, we validate their effectiveness in synthetic and
experimental data from the activity of rodents' brains executing a
selective listening task.
 
\subsubsection*{Notation}
Let ${\bf 1}_r$ denote the vector in $\mathbb{R}^r$ with all entries
equal to $1$. Let $I_r$ denote the $r\times r$ identity matrix.  We
let
$\col\{A_1,A_2,\cdots,A_r\}=\begin{bmatrix} A_1^{\top} &A_2^{\top}&
  \cdots & A_r^{\top}
\end{bmatrix}^{\top}$ be a vertical stack of matrices $A_i$ possessing
the same number of columns. 
Let $\diag\{A_1,A_2,\cdots,A_r\}$ be a block diagonal matrix with $A_i$
the $i$th diagonal block entry. Let $\maxv(x),\minv(x)\in\mathbb{R}$ 
be the component-wise maximum/minimum of vector $x$, respectively. 
Let $x[i]\in\mathbb{R}$ be the $i$th 
entry of vector $x$; correspondingly, let $M[i,j]\in\mathbb{R}$ be the entry
of matrix $M$ on its $i$th row and $j$th column.  Let $M^{\top}$ be
the transpose of a matrix $M$.  Let $|\Omega|$ be the cardinality of 
a set $\Omega$.
For $x\in\mathbb{R}$, define the
threshold function $[x]_0^s$ with $s>0$ as
\begin{align*}
  [x]_0^s=\left\{\begin{matrix*}[l]
      s & \text{for}\quad x>s ,	\\
      x & \text{for}\quad 0\le x\le s ,\\
      0 & \text{for}\quad x< 0 .
\end{matrix*}\right.  
\end{align*}
For a vector $x$, $[x]_0^s$ denotes the component-wise application of
these definitions.  For $x\in\mathbb{R}^{r}$ and $1\le i\le r$,
$x_{-i}$ denotes the vector in $\mathbb{R}^{r-1}$ obtained by removing
the $i$th entry of~$x$.

\section{Problem Formulation}
In this section, we first introduce a continuous-time firing rate
dynamical model for neuronal networks following~\cite{PD-LFA:01} and
then convert it to its discrete-time form.

Consider a network, where each node represents a population of neurons
with similar activation patterns, evolving according to
linear-threshold dynamics, for ${t}\ge0$,
\begin{align}\label{eq_cntmodel}
  \tau\dot{\bm{x}}({t})=- \bm{x}({t}) +
  \left[{W}\bm{x}({t})+B\bm{u}({t})\right]_0^{{s}},
\end{align}
Here, $\tau$ is a time constant capturing the
timescale~\cite{WG-AKK-HM-AVMH:97} of the neuronal system,
$\bm{x}\in\mathbb{R}^n$, $\bm{x}\ge 0$ is the system state,
corresponding to the firing rate of the nodes; and
$W\in\mathbb{R}^{n\times n }$ is the synaptic connectivity matrix,
characterizing the interactions (excitation or inhibition) between
different nodes. For $i \in \until{n}$, we assume $W[i,i]=0$, that is,
the nodes do not have self-loops. $\bm{u}\in \mathbb{R}^m$ and
$B\in\mathbb{R}^{n\times m}$ ($m\le n$) are the external inputs and
the associated input matrix. For each node, the stimulation it
receives from its neighboring nodes and external inputs is
non-negative and bounded by a threshold $s$, denoted by
$[\cdot]_0^s$. 

The discretization of the system~\eqref{eq_cntmodel} by the forward
Euler method with a constant step-size $\delta\ll \tau$ yields
\begin{align}\label{eq_dzmodel}
  \frac{\tau}{\delta}\left(\bm{x}^+- \bm{x}\right)=- \bm{x} +
  \left[{W}\bm{x}+{B}\bm{u}\right]_0^{{s}}.
\end{align}
Here, $\bm{x}$, $\bm{u}$ are the current system state and input, and
$\bm{x}^+$ is the system state after the interval $\delta$.  For
convenience of presentation, let
\begin{align*}
  \alpha\triangleq 1-\frac{\delta}{\tau} \in  (0,1), \; W_D
  \triangleq \frac{\delta}{\tau}{W}, \;
  B_D\triangleq\frac{\delta}{\tau}{B}, \;
  s_D\triangleq\frac{\delta}{\tau}{s}.
\end{align*}
be the parameters of the discrete-time system. Then \eqref{eq_dzmodel} 
can be rewritten into an equivalent form as:
\begin{align}\label{eq_dstmodel} 
  \bm{x}^+=\alpha \bm{x} + \left[W_D\bm{x}+B_D\bm{u}\right]_0^{s_{_D}}.
\end{align}
We assume the system states $\bm{x}$, $\bm{x}^+$, and the system
inputs $\bm{u}$ can be sampled.
%
%
We denote the data samples by
$\bm{x}_d(k)$, $\bm{x}^+_d(k)$ and $\bm{u}_d(k)$, respectively, for
$k \in \until{T_d}$, where $T_d$ is the total number of data
sets. 

\begin{remark}\longthmtitle{Data collection}
  Note that the index $k$ in the notations $\bm{x}_d(k)$,
  $\bm{x}^+_d(k)$ and $\bm{u}_d(k)$ is simply an indicator that
  distinguishes one data sample from another. In fact, for each sample
  set, we only require that the time interval between $\bm{x}^+_d(k)$
  and $\bm{x}_d(k)$ satisfies the discretization step-size $\delta$.
  Of course, it is possible that all the sampling instances of the
  data are chosen consecutively from a system trajectory with a fixed
  interval $\delta$, which means that all the data samples are
  head-tail connected, i.e., $\bm{x}^+_d(k)$ of the former data can be
  used as the $\bm{x}_d(k)$ of the latter one. {However}, in general,
  we allow the data samples to be collected at independent time
  instances, and even from various trajectories of the same
  system. \hfill$\oldsquare$
\end{remark}

\begin{problem}\label{Prob}
  Given data samples $\bm{x}_d(k)$, $\bm{x}^+_d(k)$ and $\bm{u}_d(k)$,
  $k \in \until{T_d}$, identify the parameters $\alpha$, $W_D$, $B_D$,
  and $s_D$ of system \eqref{eq_dstmodel}.
\end{problem}

To solve this problem, one could seek to fit the
model~\eqref{eq_dstmodel} with the given data samples $\bm{x}_d(k)$,
$\bm{x}^+_d(k)$ and $\bm{u}_d(k)$.  However, due to the presence of
the (non-linear, non-convex) threshold function, such an approach
would involve a non-convex minimization problem with a large number of
variables.  Motivated by this observation, we seek to develop a more
efficient approach that exploits the specific structure
of~\eqref{eq_dstmodel}.

\section{Scalar Optimization for Parameter
  Identification}\label{SEC_KI} 
In this section, we reformulate the parameter identification as a
scalar variable optimization problem. This sets the basis for the
development of our algorithmic procedure to identify the parameters of
the firing-rate model.

\subsection{Data-based parameter identification}
For $ k \in \until{T_d}$, bringing the system inputs $\bm{u}_d(k)$ and
states $\bm{x}_d(k)$, $\bm{x}^+_d(k)$ into~\eqref{eq_dstmodel}, we
have
\begin{align}\label{eq_dataF}
  \bm{x}^+_d(k)-\alpha\bm{x}_d(k)= \left[H \bm{p}_d(k)\right]_0^{s_{_D}},
\end{align}
where $\bm{p}_d(k)=\col\{\bm{x}_d(k),\bm{u}_d(k)\}$ and
\begin{align}\label{eq_defzHP}
  H&=\begin{bmatrix} W_D&B_D
  \end{bmatrix}=\begin{bmatrix}[1.2]
    \horzbar 	& h_1^{\top} & \horzbar \\
    \horzbar 	& h_2^{\top} & \horzbar \\
    &	\vdots &\\[.2em]
    \horzbar& 	h_n^{\top} & \horzbar
  \end{bmatrix}\in\mathbb{R}^{n\times (n+m)} .
\end{align}
Note that in \eqref{eq_defzHP}, since the diagonal entries of $W_D$
are zero, i.e., $h_i[i]=0$, not all the entries of $H$ are variables
that need to be parameterized for identification. To characterize
this, for $i \in \until{n}$, define
$\bar{h}_i= (h_i)_{-i} \in\mathbb{R}^{n+m-1}$, which removes the $i$th
entry from $h_i$.  Correspondingly, let
$\bar{\bm{p}}_i (k)=(\bm{p}_d(k))_{-i}$.
Let $\bm{h} =
\col\{\bar{h}_1,\bar{h}_2,\dots,\bar{h}_n\}\in\mathbb{R}^{n(n+m-1)}$
and $\bm{P}_d(k) = \diag\{\bar{\bm{p}}_1^{\top}(k),
\bar{\bm{p}}_2^{\top}(k),\dots,\bar{\bm{p}}_n^{\top}(k)\}
\in\mathbb{R}^{n \times n(n+m-1)}$.
%
%
Then, one can write
\begin{align}\label{eq_dataF2}  
  H \bm{p}_d(k)&=\begin{bmatrix}[1.2]
    h_1^{\top}\bm{p}_d(k)\\
    h_2^{\top}\bm{p}_d(k)\\
    \vdots\\
    h_n^{\top}\bm{p}_d(k)
  \end{bmatrix}=\begin{bmatrix}[1.2]
    ~\bar{h}_1^{\top}\bar{\bm{p}}_1(k)\\
    ~\bar{h}_2^{\top}\bar{\bm{p}}_2(k)\\
    ~\vdots\\
    ~\bar{h}_n^{\top}\bar{\bm{p}}_n(k)
  \end{bmatrix}=\begin{bmatrix}[1.2]
    \bar{\bm{p}}_1^{\top}(k)\bar{h}_1\\
    \bar{\bm{p}}_2^{\top}(k)\bar{h}_2\\
    \vdots\\
    \bar{\bm{p}}_n^{\top}(k)\bar{h}_n
  \end{bmatrix}\nonumber\\
               &=\bm{P}_d(k)\bm{h} ,
\end{align}
where the second equality holds because $h_i[i]=0$. All entries in
$\bm{h}$ are variables to be identified. To proceed, define compact
vectors/matrices:
\begin{align}\label{eq_defXXP} 
  \mathcal{X}=\begin{bmatrix}[1.2]
    \bm{x}_d(1)\\\bm{x}_d(2)\\\vdots\\\bm{x}_d({T_d})
  \end{bmatrix}, 
  \;
  \mathcal{X}^+=\begin{bmatrix}[1.2]
    \bm{x}^+_d(1)\\\bm{x}^+_d(2)\\\vdots\\\bm{x}^+_d({T_d})
  \end{bmatrix},
  \;
  \mathcal{P}=\begin{bmatrix}[1.2]
    \bm{P}_d(1)\\\bm{P}_d(2)\\\vdots\\\bm{P}_d({T_d})
  \end{bmatrix} ,
\end{align}
such that $\mathcal{X}\in\mathbb{R}^{nT_d}$,
$\mathcal{X}^+\in\mathbb{R}^{nT_d}$, and
$\mathcal{P}\in\mathbb{R}^{nT_d\times n(n+m-1)}$.
Then,~\eqref{eq_dataF} reads
\begin{align}\label{eq_compactdataeq}
	\mathcal{X}^+-\alpha \mathcal{X}=\left[\mathcal{P}\bm{h}\right]_0^{s_{_D}}
\end{align}
Now, given variables $v_i\ge0$ to be determined, let
\begin{align}\label{eq_f_2}
  f(\!\mathcal{X}^+\!\!\!-\!\alpha \mathcal{X})[i]\!=\!
  \begin{cases}
    \!\phantom{-}v_i & \text{if } \left(\!\mathcal{X}^+\!\!-\!\alpha
      \mathcal{X}\right)[i] \!=\! \maxv(\mathcal{X}^+\!\!-\!\alpha
    \mathcal{X}) ,
    \\
    \!-v_i & \text{if }\left(\mathcal{X}^+\!\!-\!\alpha
      \mathcal{X}\right)[i]= 0 ,
    \\
    \!\phantom{-}0 & \text{otherwise} ,
  \end{cases}
\end{align}
for $i \in \until{nT_d}$. Note that, with the right choice of $v_i$'s,
one can decompose
$\mathcal{P}\bm{h}=\left[\mathcal{P}\bm{h}\right]_0^{s_{_D}}+f(\mathcal{X}^+-\alpha
\mathcal{X})$, i.e., the role of $f(\mathcal{X}^+-\alpha \mathcal{X})$
is to \textit{compensate} for the parts of $\mathcal{P}\bm{h}$ that
are truncated by the threshold $[\cdot]_0^{s_{_D}}$.
Equation~\eqref{eq_compactdataeq} can then be written as
\begin{align}\label{eq_dataeqf}
  \mathcal{X}^+-\alpha
  \mathcal{X}-\mathcal{P}\bm{h}+f(\mathcal{X}^+-\alpha \mathcal{X})=0 .
\end{align}
To further simplify the non-linear mapping
$f(\mathcal{X}^+-\alpha \mathcal{X})$, we rewrite
\begin{align}\label{eq_defCv}
  f(\mathcal{X}^+-\alpha \mathcal{X})=C(\alpha)v, \qquad v\ge0
\end{align}
where for any fixed $\alpha$,
$C(\alpha)\in\mathbb{R}^{nT_d\times d(\alpha)}$ is a matrix that can
be constructed using~\eqref{eq_f_2} by the following two-step
procedure:
\begin{enumerate}
\item Define a diagonal matrix $E(\alpha)\in\mathbb{R}^{nT_d\times
    nT_d}$ such that for all $i \in \until{nT_d}$,
  \begin{align}\label{eq_defcE}
    &E(\alpha)[i,i] =\!
      \begin{cases}
        \!\phantom{-}1 & \text{if } \left(\!\mathcal{X}^+\!\!-\!\alpha
          \mathcal{X}\right)[i] \!=\!  \maxv(\mathcal{X}^+\!\!-\!\alpha
        \mathcal{X}) ,
        \\
        \!-1 & \text{if }\left(\mathcal{X}^+\!\!-\!\alpha
          \mathcal{X}\right)[i]= 0 ,
        \\
        \!\phantom{-}0 & \text{otherwise} ;
      \end{cases}
  \end{align}
\item Construct $C(\alpha)$ by removing all zero columns in
  $E(\alpha)$.
\end{enumerate}
Note that the number of columns of $C(\alpha)$, denoted by $d(\alpha)$,
is dependent on $\alpha$. This matrix has the following properties
\begin{align}\label{eq_CTC}
  C(\alpha)^{\top}\!C(\alpha)\!=\!I_{d(\alpha)} \quad \text{and} \quad
  C(\alpha)C(\alpha)^{\top}\!\!=\!E(\alpha)^2.
\end{align}
Looking at the expression~\eqref{eq_defCv} and the definition
in~\eqref{eq_f_2}, we see that the vector $v\in\mathbb{R}^{d(\alpha)}$
encodes the magnitudes of the components
of~$f(\mathcal{X}^+-\alpha \mathcal{X})$ whereas the matrix
$C(\alpha)$ encodes the corresponding signs.  The following result is
an immediate consequence of these definitions.

\begin{lemma}\longthmtitle{Matrices $E(\alpha)$ and $C(\alpha)$
    are piecewise constant}\label{LM_choiceEC}
  Given vectors $\mathcal{X}^+, \mathcal{X}\ge0$, the matrices
  $E(\alpha)$  and $C(\alpha)$ are piecewise constant functions of $\alpha$.
\end{lemma}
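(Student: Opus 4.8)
The plan is to exploit the fact that every entry of the vector $\mathcal{X}^+ - \alpha\mathcal{X}$ is an affine function of the scalar $\alpha$, so that the three cases in the definition~\eqref{eq_defcE} of $E(\alpha)$ are governed by finitely many affine conditions in $\alpha$. First I would observe that, for each $i \in \until{nT_d}$, $(\mathcal{X}^+ - \alpha\mathcal{X})[i] = \mathcal{X}^+[i] - \alpha\,\mathcal{X}[i]$ is affine in $\alpha$. The condition $(\mathcal{X}^+ - \alpha\mathcal{X})[i]=0$ is therefore the single affine equation $\mathcal{X}^+[i] - \alpha\,\mathcal{X}[i]=0$, whose solution set in $\alpha$ is either empty, a single point $\alpha = \mathcal{X}^+[i]/\mathcal{X}[i]$ (when $\mathcal{X}[i]\neq 0$), or all of $\mathbb{R}$ (when $\mathcal{X}[i]=\mathcal{X}^+[i]=0$). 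In every case it contributes at most one additional breakpoint in $\alpha$.

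The more delicate case is the $\maxv$ condition. Here I would use the standard fact that $\maxv(\mathcal{X}^+-\alpha\mathcal{X}) = \max_{j}(\mathcal{X}^+[j]-\alpha\,\mathcal{X}[j])$ is the pointwise maximum of finitely many affine functions, hence a piecewise-linear (convex) function of $\alpha$. Entry $i$ attains this maximum precisely when $\mathcal{X}^+[i]-\alpha\,\mathcal{X}[i] \ge \mathcal{X}^+[j]-\alpha\,\mathcal{X}[j]$ for every $j$, i.e. when the finite system of affine inequalities $\alpha(\mathcal{X}[j]-\mathcal{X}[i]) \ge \mathcal{X}^+[j]-\mathcal{X}^+[i]$ holds. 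Each such inequality defines a half-line in $\alpha$ (or all of $\mathbb{R}$, or the empty set, when $\mathcal{X}[i]=\mathcal{X}[j]$), so their intersection---the set of $\alpha$ for which $E(\alpha)[i,i]=1$---is a closed interval whose endpoints lie among finitely many critical values.

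Combining both cases, I would define the finite set of critical values of $\alpha$ as the union, over all $i$, of the zero-crossing points $\mathcal{X}^+[i]/\mathcal{X}[i]$ and, over all pairs $(i,j)$, of the crossing points of the affine functions $\mathcal{X}^+[i]-\alpha\mathcal{X}[i]$ and $\mathcal{X}^+[j]-\alpha\mathcal{X}[j]$. This finite set partitions $\mathbb{R}$ (and in particular the admissible range $(0,1)$) into finitely many open subintervals; on the interior of each subinterval none of the defining conditions changes its truth value, so every diagonal entry $E(\alpha)[i,i]$, and hence the whole matrix $E(\alpha)$, is constant. This establishes that $E(\alpha)$ is piecewise constant.

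Finally, since $C(\alpha)$ is obtained from $E(\alpha)$ purely by deleting its zero columns (step (ii) of the construction), the column count $d(\alpha)$ and the entry pattern of $C(\alpha)$ can change only when the diagonal pattern of $E(\alpha)$ changes, i.e. only at the same finite set of critical values. Hence $C(\alpha)$ is constant---in both its dimension and its entries---on each of the same open subintervals, and is therefore piecewise constant as well. I expect the main obstacle to be the careful bookkeeping of the $\maxv$ condition: specifically, arguing that the set of indices attaining the maximum is piecewise constant and that ties (where two affine entries cross, or where the maximum coincides with zero) occur only at the finitely many breakpoints, so that they leave $E(\alpha)$ unchanged on the open subintervals.
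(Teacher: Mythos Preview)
Your argument is correct and follows essentially the same strategy as the paper's appendix proof: identify finitely many critical values of $\alpha$ (the zero-crossings of the entries and the points where the identity of the maximizing entry changes) and conclude that $E(\alpha)$, and hence $C(\alpha)$, is constant on each open subinterval between them. The only difference is quantitative: you take all $O((nT_d)^2)$ pairwise crossings of the affine entries as breakpoints, whereas the paper exploits the convexity of $\maxv(\mathcal{X}^+-\alpha\mathcal{X})$ to argue that each entry can coincide with the running maximum on at most one contiguous $\alpha$-interval, yielding only $O(nT_d)$ breakpoints and the bound $|\mathtt{E}|\le 4nT_d+1$---this sharper count is not needed for the lemma as stated, but it is exactly what drives the complexity estimate in Theorem~\ref{TH_alg1}\textbf{b}.
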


Note that the structure of $C(\alpha)$ and the value of $v$ depend
nonlinearly on the choice of~$\alpha$. Substituting~\eqref{eq_defCv}
into \eqref{eq_dataeqf},
\begin{align}\label{eq_eqcv}
  \mathcal{X}^+-\alpha\mathcal{X}+C(\alpha)v-\mathcal{P}\bm{h}=0,
  \qquad v\ge0 .
\end{align}
To find $\alpha$, $\bm{h}$, and $v$ that satisfy \eqref{eq_eqcv}, we
can consider them as the critical points of the following objective
function
\begin{align}\label{eq_obj2}
  \mathcal{J}_0(\alpha, v, \bm{h})=\frac{1}{2}\|\mathcal{X}^+\!-\!\alpha
  \mathcal{X}\!+\!C(\alpha)v\!-\!\mathcal{P}\bm{h}\|_2^2
\end{align}
%
%
The minimization of~\eqref{eq_obj2} is subject to the constraint $
v\ge0$.  
Thus, letting
\begin{align}\label{eq_defQxi}
  \mathcal{Q}(\alpha)
  =
  \begin{bmatrix}
    C(\alpha)
    &
    -\mathcal{P}
  \end{bmatrix},
      \;
      \xi=
      \begin{bmatrix}[1.2]
        v
        \\
        \bm{h}
      \end{bmatrix},\;
  S=
  \begin{bmatrix}
    -I_{d(\alpha)}
    & \bm{0}
  \end{bmatrix},
\end{align}
the optimization problem \eqref{eq_obj2} takes the form
\begin{align}\label{eq_obj2b}
  \begin{aligned}
    \min&
    \qquad\mathcal{J}_0(\alpha,\xi)=\frac{1}{2}\|\mathcal{X}^+-\alpha
    \mathcal{X}+ \mathcal{Q}(\alpha)\xi
    \|_2^2.\\
    \text{s.t.}&\qquad S\xi\le0
  \end{aligned}
\end{align}

\begin{remark}\longthmtitle{Incorporating Dale's law}\label{Rm_Dale}
  According to Dale's law, a neuron performs the same chemical action
  at all of its synaptic connections to others, regardless of the
  identity of the target cell\cite{JCE:86}. In model
  \eqref{eq_dstmodel}, this means each column of $W_D$ is either
  non-negative or non-positive.  To characterize such constraint, a
  feasible formulation is to let $W_D={W}_{V}{W}_{S}$, where
  ${W}_{V}\in\mathbb{R}^{n\times n}$ has non-negative entries, i.e.,
  ${W}_{V}[i,j]\ge0$ for all $i,j\in\until{n}$; and
  ${W}_{S}=\diag\{w_S(1),\dots,w_S(n)\}$ is a diagonal matrix with
  $w_S(i)=\pm 1$. Here, the matrix ${W}_{V}$ encodes the magnitudes of
  entries in $W_D$ whereas the matrix ${W}_{S}$ encodes their signs.
  Since ${W}_{S}$ is combinatorial, a possible way to solve the
  problem is by exhausting all possible combinations. However, the
  complexity of this approach grows exponentially with the number of
  nodes $n$.
  Nevertheless, in computational neuroscience, techniques exist to
  determine the excitatory or inhibitory nature of the neurons by
  classifying their spike wave-forms\footnote{Excitatory neurons have
    slower and wider spikes while inhibitory neurons have faster and
    narrower ones~\cite{EN-JC:21-tacII}.}\cite{RMB-DJS:02}.  Based on
  this, we can assume that ${W}_{S}$ is known a priori.  Then, if we
  still parameterize $W_D$ in the form of \eqref{eq_defzHP}, the Dale's
  law can be represented by an inequality $\widehat{W}_S\bm{h}\le0$, 
  where $\widehat{W}_S\in\mathbb{R}^{n(n-1) \times n(n+m-1)}$ and 
  $n(n-1)$ equals to the number of entries in $W_D$ whose signs are subject to
  constraints. (The diagonal entries of $W_D$ are $0$ and have no 
  constraints). To incorporate this new inequality constraint in  
  formulation \eqref{eq_obj2b}, we only need to solve the optimization
  problem with a new matrix $S$ defined by
  %
  %
  %
  %
  \begin{align}\label{eq_defS2}
    S=\begin{bmatrix}
      -I_{d(\alpha)} & \\
      & \widehat{W}_S
    \end{bmatrix}, 
  \end{align}
  to account for Dale's law.  \hfill $\oldsquare$
\end{remark}

Problem~\eqref{eq_obj2b} is a reformulation for the parameter
identification of system \eqref{eq_dstmodel}. Its objective function
is a non-smooth function of $\alpha$, but smooth in $\xi$. Given
Lemma~\ref{LM_choiceEC}, one approach to find the global minimizer is
to repeatedly solve the problem for each possible value of
$\mathcal{Q}(\alpha)$. However, since the objective function is
piecewise linear, and the dimension of $\xi$ is large, such approach
can be computationally expensive. This motivates further investigating
the characterization of the optimizer of~\eqref{eq_obj2b}.

\subsection{Scalar optimization for parameter
  identification}\label{Sec_Scalar}
Given a fixed $\alpha$, the optimizers of~\eqref{eq_obj2b} are
characterized by the KKT equations,
\begin{subequations}\label{eq_dalekkt}
  \begin{align}
    \mathcal{Q}(\alpha)^{\top}\left(\mathcal{X}^+-\alpha
    \mathcal{X}+\mathcal{Q}(\alpha){\xi}\right)+S\mu=0 ,
    \\
    S{\xi}\le0 ,
    \\
    \mu\ge 0  ,
    \\
    \mu^{\top}S{\xi}=0 ,
  \end{align}  
\end{subequations}
where $\mu$ is the dual variable corresponding to the inequality
constraint. Since $ \mathcal{J}_0$ is a quadratic function of $\xi$
and the constraints are linear, from Slater's condition, strong
duality holds~\cite{SB-LV:09}. Thus, any $\widehat{\xi}$,
$\widehat{\mu}$ satisfying \eqref{eq_dalekkt} gives the minimizer
of~\eqref{eq_obj2b} for the given~$\alpha$.  Now, assuming the matrix
$\left(\mathcal{Q}(\alpha)^{\top}\mathcal{Q}(\alpha)\right)^{-1}$ is
non-singular, the first equation in \eqref{eq_dalekkt} yields
\begin{align}\label{eq_daleoptxi}
  \widehat{\xi}=
  -(\mathcal{Q}(\alpha)^{\top}\mathcal{Q}(\alpha))^{-1}
  (\mathcal{Q}(\alpha)^{\top}(\mathcal{X}^+-\alpha
  \mathcal{X})+S^{\top}\widehat{\mu}) .
\end{align}	
Substituting this into \eqref{eq_obj2b}, one has
\begin{align*}
  \mathcal{J}_0(\alpha)
  &=\frac{\|M(\alpha) \!\left(\mathcal{X}^+\!-\!\alpha 
    \mathcal{X}\right) -
    \mathcal{Q}(\alpha)
    \!\left(\mathcal{Q}(\alpha)^{\top}\!\mathcal{Q}(\alpha)\right)^{-1}
    \!S^{\top}\widehat{\mu}   
    \|_2^2}{2} ,
\end{align*}	
where
$ M(\alpha) = I - \mathcal{Q}(\alpha) \left(\mathcal{Q}(\alpha)^{\top}
  \mathcal{Q}(\alpha)\right)^{-1}\mathcal{Q}(\alpha)^{\top}$.  Note
that $M(\alpha)$ is symmetric and $M(\alpha)\mathcal{Q}(\alpha)=0$.
%
%
Thus, $\mathcal{J}_0(\alpha)$ can also be written as
\begin{align*}
  \mathcal{J}_0(\alpha)\!=\!\frac{\|M(\alpha)\!\left(\mathcal{X}^+\!\!-\!\alpha
  \mathcal{X}\right)\!\|_2^2 \!+\!
  \|\mathcal{Q}(\alpha)\!\left(\mathcal{Q}(\alpha)^{\top}\mathcal{Q}(\alpha)\right)^{-1}\!\!S^{\top}\!\widehat{\mu}
  \|_2^2}{2} .
\end{align*}


Consider now the scalar-variable optimization problem,
\begin{align}\label{eq_reducedJ}
  \min_{\alpha}
  \quad\mathcal{J}(\alpha)
  &=\frac{\|M(\alpha)\left(\mathcal{X}^+-\alpha
    \mathcal{X}\right)\|_2^2}{2}.
\end{align}
Clearly, for any $\alpha$, one has
$\mathcal{J}(\alpha)\le\mathcal{J}_0(\alpha)$, and the equality holds
if $\widehat{\mu}=0$. Now, consider the following two statements:
\begin{enumerate}
\item $\mathcal{J}(\alpha)$ has a unique global minimizer
  $\widehat{\alpha}$;
\item for $\alpha=\widehat{\alpha}$, $\widehat{\xi}$ given by
  \eqref{eq_daleoptxi} and $\widehat{\mu}=0$ solve~\eqref{eq_dalekkt}.
\end{enumerate}
If both statements are true, then
the global minimizer of $\mathcal{J}(\alpha)$ must also be the global
minimizer of $\mathcal{J}_0(\alpha)$. Furthermore, by the KKT
condition and strong duality, $\widehat{\alpha}$ must be the solution
to problem \eqref{eq_obj2b}. By comparing \eqref{eq_obj2b} and
\eqref{eq_reducedJ}, the advantage of the latter is that the
optimization problem is unconstrained, and the dimension of its
variables is reduced from $(1+n(n+m)+d(\alpha))$ to $1$. This kind of
elimination of variables is referred to as separable nonlinear least
squares problems~\cite{AR-PW:80}.

Nevertheless, for the above derivation to hold, we need to address
several challenges. First, our reasoning in \eqref{eq_daleoptxi}
requires $\mathcal{Q}(\alpha)^{\top}\mathcal{Q}(\alpha)$ to be
non-singular, which means that $\mathcal{Q}(\alpha)$ must have full
column rank.
Second, we have assumed in i) that the minimizer
of~\eqref{eq_reducedJ} is unique.  Third, we have assumed in ii) that
$\widehat{\xi}$ given by \eqref{eq_daleoptxi} and $\widehat{\mu}=0$
solve~\eqref{eq_dalekkt}.  
%
%
We tackle each of these
challenges next.

\section{Identification of the Firing Rate Model}\label{SEC_MR}
In this section, we address the challenges outlined in
Section~\ref{Sec_Scalar} regarding the reformulation of the parameter
identification as the scalar optimization
problem~\eqref{eq_reducedJ}. This sets the basis for the design of
the algorithm to identify the parameters of system \eqref{eq_dstmodel}.

\subsection{Establishing the validity of scalar
  optimization}\label{sec:validity-scalar-opt}

We first show that the scalar optimization problem~\eqref{eq_reducedJ}
provides a valid reformulation of the parameter identification
problem.  We make the following assumption.

\begin{assumption}\label{AS_Unique}
  Let ${\alpha}^{\star}$ be the true parameter of system
  \eqref{eq_dstmodel}.  Given the measured system states $\bm{x}_d(k)$
  and system inputs $\bm{u}_d(k)$, $k \in \until{T_d}$, the matrix
  $(I-E({\alpha}^{\star})^2) (I-E({\alpha})^2) \begin{bmatrix}
    \mathcal{X}&\mathcal{P}\end{bmatrix}$ has full column rank for all
  $ \alpha \in (0,1)$\footnote{Since the true ${\alpha}^{\star}$ is
    unknown, the condition is required to hold for all
    $ \alpha \in (0,1)$.}.
\end{assumption}

\begin{remark}\longthmtitle{Validity of
    Assumption~\ref{AS_Unique}}\label{RM_Unique}
  Note that in Assumption \ref{AS_Unique}, the matrices
  $E({\alpha}^{\star})$, $E({\alpha})$, $\mathcal{X}$ and
  $\mathcal{P}$ are associated with the measurement
  data. Specifically, $\mathcal{X}$ and $\mathcal{P}$ are defined from
  $\bm{u}_d(k)$ and $\bm{x}_d(k)$ in \eqref{eq_defXXP};
  $E({\alpha}^{\star})$ and $E({\alpha})$ are implicitly determined by
  $\bm{x}_d(k)$ and $\bm{x}_d^+(k)$ in \eqref{eq_defXXP} and
  \eqref{eq_defcE}. Besides, the row dimension of
  $\left(I-E({\alpha}^{\star})^2\right)
  \left(I-E({\alpha})^2\right) \begin{bmatrix}
    \mathcal{X}&\mathcal{P}\end{bmatrix}$ grows with the number of
  data samples $T_d$.  A sufficient way of checking whether
  Assumption~\ref{AS_Unique} holds without knowing ${\alpha}^{\star}$
  is to compute the column rank of
  $\left(I-E_1)^2\right)\left(I-E_2)^2\right)\begin{bmatrix}\mathcal{X}&\mathcal{P}\end{bmatrix}$
  for all $ E_1, E_2 \in \mathtt{E}$, where
  $\mathtt{E}=\{E(\alpha)~|~ \alpha \in (0,1)\}$ is the set of all
  possible $E(\alpha)$, which is finite.
  The cardinality of $\mathtt{E}$ is therefore bounded by
  $|\mathtt{E}|\le3^{nT_d}$, which grows exponentially with the
  dimension and the number of data sets.  As we show later in the
  proof of Theorem~\ref{TH_alg1}\textbf{b}, an improved bound can be
  obtained as $|\mathtt{E}|\le4{nT_d}+2$, which greatly reduces the
  complexity of validating Assumption~\ref{AS_Unique}. Alternatively, 
  in Section~\ref{sec:probabilistic}, we provide a probabilistic
  criterion to validate Assumption~\ref{AS_Unique}.
  \hfill$\oldsquare$
\end{remark}

The following result establishes that the scalar
optimization~\eqref{eq_reducedJ} is a viable way of finding the
parameters of the system~\eqref{eq_dstmodel}.

\begin{proposition}\longthmtitle{Validity of scalar
    optimization}\label{prop:unique} 
  Under Assumption \ref{AS_Unique}, the following statements hold:
  \begin{enumerate}[label=\textbf{\alph*}.]
  \item $\textit{[Invertibility]}:$ For all $\alpha \in (0,1)$,
    $\mathcal{Q}(\alpha)=\begin{bmatrix} C(\alpha) &
      -\mathcal{P} \end{bmatrix}$ has full column rank;
  \item $\textit{[Uniqueness of minimizer]}:$ The objective function
    $\mathcal{J}(\alpha)$ in \eqref{eq_reducedJ} has a unique
    minimizer $\widehat{\alpha}={\alpha}^{\star}$;
  \item $\textit{[Feasibility and Validity]}:$ For
    $\alpha=\widehat{\alpha}$,
    $\widehat{\xi} =[\widehat{v}^{\top} \; \widehat{\bm{h}}^{\top}
    ]^{\top}$ given by \eqref{eq_daleoptxi} and $\widehat{\mu}=0$
    solve~\eqref{eq_dalekkt}.  Furthermore,
    $\widehat{\bm{h}} = {\bm{h}}^{\star}$, where ${\bm{h}}^{\star}$
    corresponds to the true parameters of system~\eqref{eq_dstmodel}.
  \end{enumerate}
  %
  %
\end{proposition}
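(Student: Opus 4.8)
My plan is to prove the three claims in the stated order, with the central device being the identity $(I-E(\alpha)^2)C(\alpha)=0$, valid because $E(\alpha)$ is diagonal with entries in $\{-1,0,1\}$ (so $I-E(\alpha)^2$ is the diagonal projector onto the indices where $E(\alpha)$ vanishes) while $C(\alpha)$ collects exactly the nonzero columns of $E(\alpha)$. This identity will let me annihilate the nonlinear compensation term $C(\cdot)v$ and reduce everything to a linear condition to which Assumption~\ref{AS_Unique} directly applies; I will also use the properties $C(\alpha)^{\top}C(\alpha)=I$ and $C(\alpha)C(\alpha)^{\top}=E(\alpha)^2$ from \eqref{eq_CTC} and the commutativity of the diagonal projectors $I-E(\alpha)^2$ and $I-E({\alpha}^{\star})^2$. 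For part \textbf{a} (invertibility), I would first note that full column rank of $(I-E({\alpha}^{\star})^2)(I-E(\alpha)^2)\begin{bmatrix}\mathcal{X}&\mathcal{P}\end{bmatrix}$ forces $(I-E(\alpha)^2)\mathcal{P}$ to have full column rank, since dropping the left factor and restricting to the columns of $\mathcal{P}$ preserves injectivity. Then, assuming $C(\alpha)a-\mathcal{P}b=0$, I multiply on the left by $I-E(\alpha)^2$ to kill the first term and obtain $(I-E(\alpha)^2)\mathcal{P}b=0$, hence $b=0$; multiplying the original relation by $C(\alpha)^{\top}$ and using $C(\alpha)^{\top}C(\alpha)=I$ then yields $a=0$. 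This step is routine.

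The crux is part \textbf{b}. I would first observe that the true parameters furnish a feasible $v^{\star}\ge0$ together with $\bm{h}^{\star}$ satisfying $\mathcal{X}^+-{\alpha}^{\star}\mathcal{X}+C({\alpha}^{\star})v^{\star}-\mathcal{P}\bm{h}^{\star}=0$, so that $\mathcal{X}^+-{\alpha}^{\star}\mathcal{X}$ lies in the range of $\mathcal{Q}({\alpha}^{\star})$ and therefore $\mathcal{J}({\alpha}^{\star})=0$; since $\mathcal{J}\ge0$, this makes ${\alpha}^{\star}$ a global minimizer. For uniqueness, I suppose $\mathcal{J}(\alpha)=0$ at some $\alpha\in(0,1)$, which produces a (sign-unconstrained) $\xi=(v,\bm{h})$ with $\mathcal{X}^+-\alpha\mathcal{X}+C(\alpha)v-\mathcal{P}\bm{h}=0$. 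Multiplying this $\alpha$-equation, and the $\alpha^{\star}$-equation above, on the left by the common commuting product $(I-E({\alpha}^{\star})^2)(I-E(\alpha)^2)$, both compensation terms vanish by the key identity, and subtracting gives $(I-E({\alpha}^{\star})^2)(I-E(\alpha)^2)\begin{bmatrix}\mathcal{X}&\mathcal{P}\end{bmatrix}\col\{{\alpha}^{\star}-\alpha,\;\bm{h}^{\star}-\bm{h}\}=0$. Assumption~\ref{AS_Unique} then forces the stacked vector to vanish, i.e. $\alpha={\alpha}^{\star}$ and $\bm{h}=\bm{h}^{\star}$, so the minimizer is unique and equals ${\alpha}^{\star}$. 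I expect the bookkeeping here to be the main obstacle: verifying that both compensation terms drop out under the two projectors and assembling the subtracted equation into precisely the matrix appearing in Assumption~\ref{AS_Unique}.

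Finally, for part \textbf{c} I would set $\alpha=\widehat{\alpha}={\alpha}^{\star}$ and $\widehat{\mu}=0$ in \eqref{eq_daleoptxi}, so that $\widehat{\xi}$ is the ordinary least-squares solution. With $\widehat{\mu}=0$, the stationarity condition in \eqref{eq_dalekkt} reduces to the normal equations and holds identically, while dual feasibility $\widehat{\mu}\ge0$ and complementary slackness $\widehat{\mu}^{\top}S\widehat{\xi}=0$ are immediate; the only nontrivial KKT requirement is primal feasibility $S\widehat{\xi}\le0$. Here I would use $\mathcal{J}({\alpha}^{\star})=0$ from part \textbf{b}, which makes the residual exactly zero, $\mathcal{X}^+-{\alpha}^{\star}\mathcal{X}+\mathcal{Q}({\alpha}^{\star})\widehat{\xi}=0$; comparing with the true-parameter identity $\mathcal{X}^+-{\alpha}^{\star}\mathcal{X}+\mathcal{Q}({\alpha}^{\star})\xi^{\star}=0$ and invoking the full column rank established in part \textbf{a} gives $\widehat{\xi}=\xi^{\star}$, hence $\widehat{v}=v^{\star}\ge0$ and $\widehat{\bm{h}}=\bm{h}^{\star}$. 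Since the true $\bm{h}^{\star}$ respects the prescribed sign pattern, $S\widehat{\xi}\le0$ holds in both the basic case and the Dale's-law case with $S$ as in \eqref{eq_defS2}, completing the argument.
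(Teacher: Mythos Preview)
Your proposal is correct and follows essentially the same route as the paper: both hinge on the identity $(I-E(\alpha)^2)C(\alpha)=0$, multiply by the commuting diagonal projectors $(I-E({\alpha}^{\star})^2)(I-E(\alpha)^2)$ to annihilate the compensation terms, and then invoke Assumption~\ref{AS_Unique}. Your presentation of parts \textbf{a} and \textbf{b} is slightly more streamlined---a direct kernel argument instead of the paper's block row-reduction for \textbf{a}, and directly extracting a pair $(v,\bm{h})$ from $\mathcal{J}(\alpha)=0$ instead of introducing the auxiliary $\theta(\alpha)=(\mathcal{Q}^{\top}\mathcal{Q})^{-1}\mathcal{Q}^{\top}(\mathcal{X}^+-\alpha\mathcal{X})$ for \textbf{b}---but the key devices and logical structure coincide with the paper's proof.
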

\begin{proof}
  \textbf{a}.  From its definition, $C(\alpha)$ must have full column
  rank. Furthermore, since each of its column has exactly one $1$ or
  $-1$, by elementary row operations, the matrix $\mathcal{Q}(\alpha)$
  can be transformed into:
  \begin{align*}
    \widetilde{\mathcal{Q}}(\alpha)=\left[\begin{array}{c|c}
        I&-\widetilde{\mathcal{P}}_I \\\bm{0} &
        -\widetilde{\mathcal{P}}_R
      \end{array}\right] ,
  \end{align*}
  where $-\widetilde{\mathcal{P}}_R$ is composed of some of the rows
  in matrix $-\mathcal{P}$, which are associated with the zero rows of
  $C(\alpha)$ in $\mathcal{Q}(\alpha)$, and $\bm{0}$ is a zero matrix
  of proper size. Clearly, the rank of
  $\widetilde{\mathcal{Q}}(\alpha)$ equals to that of
  $\mathcal{Q}(\alpha)$. Thus, to show $\mathcal{Q}(\alpha)$ has full
  column rank, we only need to show $\widetilde{\mathcal{P}}_R$ has
  full column rank.  By Assumption~\ref{AS_Unique},
  $\left(I-E({\alpha}^{\star})^2\right)
  \left(I-E({\alpha})^2\right) \begin{bmatrix}\mathcal{X}&\mathcal{P}\end{bmatrix}$
  has full column rank for all $ \alpha \in (0,1)$. 
  %
  %
  %
  %
  As a necessary condition, the
  matrix $\left(I-E({\alpha})^2\right)\mathcal{P}$ must also have full
  column rank for all $ \alpha \in (0,1)$. Now, from the definition of
  $E({\alpha})$ in \eqref{eq_defcE}, we know that $E({\alpha})^2$ is a
  diagonal matrix, with either $0$ or $1$ entries. 
  %
  %
  If we left multiplying $\mathcal{P}$ by
  $\left(I-E({\alpha})^2\right)$,
  the rows of $\mathcal{P}$ associated with the $1$ diagonal entries
  of $E({\alpha})^2$ become zero rows; and the remaining rows of
  $\mathcal{P}$ are kept unchanged.  Comparing
  $\left(I-E({\alpha})^2\right)\mathcal{P}$ and
  $\widetilde{\mathcal{P}}_R$, we observe that the two matrices share
  exactly the same non-zero rows.  Since
  $\left(I-E({\alpha})^2\right)\mathcal{P}$ has full column rank, it
  follows that $\widetilde{\mathcal{P}}_R$ must also have full column
  rank. This shows that $\widetilde{\mathcal{Q}}(\alpha)$, and hence
  $\mathcal{Q}(\alpha)$, has full column rank for
  all~$\alpha \in (0,1)$.

  \smallskip
  \noindent \textbf{b}.  We first show that the true parameter
  ${\alpha}^{\star}$ of system \eqref{eq_dstmodel} is a minimizer of
  \eqref{eq_reducedJ}. From equation \eqref{eq_eqcv}, there exists
  ${v}^{\star}$ such that
  $ \mathcal{X}^+-{\alpha}^{\star} \mathcal{X} +
  C({\alpha}^{\star}){v}^{\star}-\mathcal{P}{\bm{h}^{\star}}=0$.
  From~\eqref{eq_defQxi}, this is equivalent to
  \begin{align}\label{eq_XQxi}
    \mathcal{X}^+-{\alpha}^{\star}
    \mathcal{X}= - Q({\alpha}^{\star})\xi^{\star} .
  \end{align}
  Using the fact that $M({\alpha})Q({\alpha})=0$, we obtain
  \begin{align}\label{eq_Japlhavhstar}
    M(\alpha^{\star})\left(\mathcal{X}^+-\alpha^{\star}  
      \mathcal{X}\right)=-M(\alpha^{\star})Q({\alpha}^{\star})\xi^{\star}=0.
  \end{align}
  From~\eqref{eq_reducedJ}, one has $\mathcal{J}({\alpha}^{\star})=0$.
  Since $\mathcal{J}(\alpha)\ge0$, $\widehat{\alpha}={\alpha}^{\star}$
  is a minimizer of~\eqref{eq_reducedJ}.

  Next, we show that the minimizer of~\eqref{eq_reducedJ} is
  unique. Let $\widehat{\alpha}$ be any minimizer
  of~\eqref{eq_reducedJ}.  Then
  \begin{align}\label{eq_tideaplhavh}
    M(\widehat\alpha)\left(\mathcal{X}^+-\widehat\alpha 
    \mathcal{X}\right)=0.
  \end{align} 
  Subtracting equations \eqref{eq_Japlhavhstar} and
  \eqref{eq_tideaplhavh}, and using the definition of
  $M({\alpha})=I - \mathcal{Q}(\alpha)
  \left(\mathcal{Q}(\alpha)^{\top}
    \mathcal{Q}(\alpha)\right)^{-1}\mathcal{Q}(\alpha)^{\top}$, yields
  \begin{align*}
    \left(\widehat{\alpha}\!-\!{\alpha}^{\star}\right)
    \mathcal{X} +
    \mathcal{Q}(\widehat\alpha)\theta(\widehat\alpha) -
    \mathcal{Q}(\alpha^{\star})\theta(\alpha^{\star}) 
    =0 ,
  \end{align*}
  where
  $\theta(\alpha)=\left(\mathcal{Q}(\alpha)^{\top}
    \mathcal{Q}(\alpha)\right)^{-1}\mathcal{Q}(\alpha)^{\top}\left(\mathcal{X}^+
    - \alpha \mathcal{X}\right)$.  We multiply this equation on the
  left
  by the matrices $\left(I-E({\alpha}^{\star})^2\right)$ and
  $\left(I-E(\widehat{\alpha})^2\right)$, which are diagonal and hence
  commute, to obtain
  \begin{align}\label{eq_hataplhavh2}
    &\left(I-E({\alpha}^{\star})^2\right)\left(I-E(\widehat{\alpha})^2\right)
      \mathcal{X}\left(\widehat{\alpha}-{\alpha}^{\star}\right)\nonumber
    \\
    &\quad + \left(I-E({\alpha}^{\star})^2\right)
      \left(I-E(\widehat{\alpha})^2\right) \mathcal{Q}(\widehat\alpha)\theta(\widehat\alpha)
      \nonumber
    \\
    &\quad-\left(I-E(\widehat{\alpha})^2\right)
      \left(I-E({\alpha}^{\star})^2\right)
      \mathcal{Q}(\alpha^{\star})\theta(\alpha^{\star})=0.  
  \end{align}
  Using~\eqref{eq_CTC}, we have
  $\left(I-E({\alpha})^2\right)\!C({\alpha})\! =
  \left(I-C({\alpha})C({\alpha})^{\top}\right)\!C({\alpha})\! =
  \!C({\alpha})-C({\alpha})I = 0$. Thus,
  \begin{align}\label{eq_EQT}
    \left(I-E({\alpha})^2\right)\mathcal{Q}(\alpha)\theta(\alpha)
    &=\left(I-E({\alpha})^2\right)\begin{bmatrix} C(\alpha) & -\mathcal{P}
    \end{bmatrix}\theta(\alpha)\nonumber\\
    &=\left(I-E({\alpha})^2\right)\begin{bmatrix} \bm{0} & -\mathcal{P}
    \end{bmatrix}\theta(\alpha)\nonumber\\
    &=-\left(I-E({\alpha})^2\right)\mathcal{P}~
      \widetilde\theta(\alpha) ,
  \end{align}
  where $\widetilde\theta(\alpha)=\begin{bmatrix}
    \bm{0} &  I_{n(n+m-1)}
  \end{bmatrix} \theta(\alpha)$.  Using~\eqref{eq_EQT}
  in~\eqref{eq_hataplhavh2},
  \begin{align*}
    \left(I-E({\alpha}^{\star})^2\right)\left(I-E(\widehat{\alpha})^2\right)
    \begin{bmatrix} 
      \mathcal{X}& \mathcal{P}
    \end{bmatrix}
    \begin{bmatrix}[1.3]
      \widehat{\alpha}-\alpha^{\star}\\
      \widetilde\theta(\alpha^{\star})-\widetilde\theta(\widehat\alpha)
    \end{bmatrix}=0.
  \end{align*}
  Using Assumption~\ref{AS_Unique},
  we deduce that $\widehat{\alpha}-{\alpha}^{\star}=0$ and thus the
  minimizer of~\eqref{eq_reducedJ} is unique.

  \noindent \textbf{c}.  Let
  $\alpha = \widehat{\alpha} = \alpha^\star$.  Since
  $\mathcal{Q}(\alpha^{\star})^{\top}\mathcal{Q}(\alpha^{\star})$ is
  non-singular, \eqref{eq_daleoptxi} is equivalent to the first
  equation in \eqref{eq_dalekkt}. If $\widehat{\mu}=0$, the last two
  equations in \eqref{eq_dalekkt} are automatically satisfied and
  $\widehat{\xi}$ takes the form
  $ \widehat{\xi}= -\left(\mathcal{Q}(\alpha^{\star})^{\top}
    \mathcal{Q}(\alpha^{\star})\right)^{-1}\mathcal{Q}(\alpha^{\star})^{\top}
  \left(\mathcal{X}^+-\alpha^{\star} \mathcal{X}\right)$.  Left
  multiplying by $Q(\alpha^{\star})^{\top}$ on \eqref{eq_XQxi}, we
  obtain
  \begin{align*}
    {\xi^{\star}}= -\left(\mathcal{Q}(\alpha^{\star})^{\top}
    \mathcal{Q}(\alpha^{\star})\right)^{-1}
    \mathcal{Q}(\alpha^{\star})^{\top}\left(\mathcal{X}^+ -
    \alpha^{\star} \mathcal{X}\right) = \widehat{\xi}.
  \end{align*}
  Therefore, $S\widehat{\xi} \le 0$, and hence $\widehat{\xi}$ and
  $\hat{\mu}=0$ solve~\eqref{eq_dalekkt}.  Finally,
  $\widehat{\xi} = \xi^{\star}$ implies
  $\widehat{\bm{h}} = {\bm{h}}^{\star}$, completing the proof.
\end{proof}

\subsection{Probabilistic condition for validating
  Assumption~\ref{AS_Unique}}\label{sec:probabilistic}

Assumption~\ref{AS_Unique} is critical for establishing
Proposition~\ref{prop:unique}. However, according to
Remark~\ref{RM_Unique}, verifying it directly can be computationally
expensive.  Here, we discuss a probabilistic condition under which
Assumption~\ref{AS_Unique} holds.

From model~\eqref{eq_dstmodel}, since $\alpha\in(0,1)$, we know that
the system states must be bounded by
$\bm{x}_{\max}={s_D}/{(1-\alpha)}$.  Based on this, we make the
following assumption.

\begin{assumption}\label{AS_excitation}
  The data samples $\bm{x}_d(k)$, $\bm{x}^+_d(k)$, $\bm{u}_d(k)$,
  $k \in \until{T_d}$, have the following statistical properties:
  \begin{enumerate}[label=\textbf{\alph*}.]
  \item $\textit{[Linear independence]}:$ 
    For any $\Omega\subset \until {T_d}$ with $|\Omega|=m+n$, the set
    of vectors $\{\bm{p}_d(k)~|~k\in\Omega\}$ is linearly independent,
    where $\bm{p}_d(k)=\col\{\bm{x}_d(k),\bm{u}_d(k)\}$;
    %
    %

  \item $\textit{[Distribution of variables]}:$ The collected data samples 
  $\bm{x}_d(k)\in[0,~\bm{x}_{\max}]^n$ are independent and
    identically distributed (i.i.d.) in terms of $k\in\until {T_d}$; and
    $\bm{u}_d(k)\in[\bm{u}_{\min},~\bm{u}_{\max}]^m$ are i.i.d. in terms 
    of $k\in\until {T_d}$; 
    %
    %
		
  \item $\textit{[Probability]}:$ There exists $\gamma>0$ such that,
    $\forall k\in\until {T_d}$, the distribution of data samples
    satisfy:
    \begin{align}\label{eq_probxd}
      &\Pr\left(\maxv(\bm{x}^+_d(k)-\bm{x}_d(k))\ge
        \gamma\right)=\sigma_1>0 , \nonumber
      \\
      &\Pr\left(\minv(\bm{x}^+_d(k)\!-\!\bm{x}_d(k))>0 \
        \textstyle{\bigwedge}\
        \maxv(\bm{x}^+_d(k))<\gamma\right)\nonumber
      \\ 
      &~~~~~~~~=\sigma_2>0 .
    \end{align}
  \end{enumerate}
\end{assumption}
%
%


\begin{remark}\longthmtitle{Validity of Assumption~\ref{AS_excitation}}
  One can interpret statement \textbf{a} as a variation of the
  \textit{persistent excitation} condition~\cite{JCW-PR-IM-BLMDM:05},
  which is a widely used assumption in system identification and
  data-driven control. This statement is generically true in the sense
  that the set of vectors which do not satisfy
  Assumption~\ref{AS_excitation}\textbf{a} has zero Lebesgue measure
  in $\mathbb{R}^{m+n}$.  Statement~\textbf{b} is satisfied if the
  sampling times are taken randomly from multiple system trajectories
  under dynamics~\eqref{eq_dstmodel}. For statement \textbf{c},
  since $\bm{x}^+_d(k)$ is a deterministic function of two i.i.d. variables 
  $\bm{x}_d(k)$ and $\bm{u}_d(k)$, then $\bm{x}^+_d(k)$ are also
  i.i.d. in terms of $k\in\until {T_d}$. 
  Given dynamics~\eqref{eq_dstmodel}, as long as the distributions of 
  the variables are strictly positive (i.e., all states are possible), 
  there must exist $\gamma$ such that the two probabilities in 
  \eqref{eq_probxd} are strictly positive.  \hfill$\oldsquare$
\end{remark}

Based on Assumption~\ref{AS_excitation}, the following result 
describes a probability for Assumption~\ref{AS_Unique} to hold. 

\begin{proposition}\label{PR_probU}
  Given Assumption~\ref{AS_excitation}, let $\rho$ denote the
  probability that Assumption~\ref{AS_Unique} holds. Then,
  \begin{align*}
    \rho\ge\left(\!1-(1-\sigma_1)^{\left\lfloor
    \frac{T_d}{2}\right\rfloor}\right) \!\left[
    \sum_{\ell=m+n}^{\left\lceil \frac{T_d}{2}\right\rceil}
    {\left\lceil \frac{T_d}{2}\right\rceil \choose
    \ell}\sigma_2^{\ell}(1-\sigma_2)^{\left\lceil
    \frac{T_d}{2}\right\rceil-\ell} \right]
  \end{align*}
  Furthermore, as $T_d\to\infty$, $\rho$ converges to $1$
  exponentially fast.
\end{proposition}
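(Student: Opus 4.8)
The plan is to partition the $T_d$ data indices into two disjoint groups $\Omega_A$ and $\Omega_B$ with $|\Omega_A| = \lfloor T_d/2\rfloor$ and $|\Omega_B| = \lceil T_d/2\rceil$, and to define two events whose joint occurrence forces Assumption~\ref{AS_Unique} to hold. Event $A$ is that at least one $k \in \Omega_A$ satisfies $\maxv(\bm{x}^+_d(k) - \bm{x}_d(k)) \ge \gamma$; event $B$ is that at least $m+n$ indices $k \in \Omega_B$ satisfy both $\minv(\bm{x}^+_d(k) - \bm{x}_d(k)) > 0$ and $\maxv(\bm{x}^+_d(k)) < \gamma$. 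Since the samples are i.i.d. (Assumption~\ref{AS_excitation}\textbf{b}) and the groups are disjoint, $A$ and $B$ are independent, so $\Pr(A \wedge B) = \Pr(A)\Pr(B)$, and I will lower bound $\rho$ by this product. The decisive step, which I expect to be the main obstacle, is the deterministic implication that $A \wedge B$ implies Assumption~\ref{AS_Unique}, and in particular showing that the favorable rows survive the filter $(I - E(\alpha^\star)^2)(I - E(\alpha)^2)$ simultaneously for every $\alpha \in (0,1)$.

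To set up the rank argument, I note that the row of $\begin{bmatrix}\mathcal{X} & \mathcal{P}\end{bmatrix}$ indexed by the pair $(k,j)$ (sample $k$, node $j$) equals $\begin{bmatrix}\bm{x}_d(k)[j] & \cdots & \bar{\bm{p}}_j^{\top}(k) & \cdots\end{bmatrix}$, with $\bar{\bm{p}}_j^{\top}(k)$ sitting in the $j$-th block of $\mathcal{P}$ and zeros elsewhere. Writing a candidate null vector as $\col\{\beta, \eta_1, \dots, \eta_n\}$ with $\beta\in\mathbb{R}$ and $\eta_j\in\mathbb{R}^{n+m-1}$, and recalling that $\bm{x}_d(k)[j]$ is exactly the $j$-th entry removed from $\bm{p}_d(k)$ to form $\bar{\bm{p}}_j(k)$, the equation contributed by row $(k,j)$ reads $\tilde\eta_j^{\top}\bm{p}_d(k)=0$, where $\tilde\eta_j\in\mathbb{R}^{n+m}$ reinserts $\beta$ at position $j$ into $\eta_j$. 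Thus a trivial null space, equivalently full column rank, follows if, for every $j$, the surviving rows $(k,j)$ supply $m+n$ vectors $\bm{p}_d(k)$ spanning $\mathbb{R}^{n+m}$, since that forces $\tilde\eta_j=0$, hence $\eta_j=0$ and $\beta=0$.

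Next I would show that under $A \wedge B$ the surviving rows are abundant. Fix any $\alpha\in(0,1)$ and let $k\in\Omega_B$ be one of the favorable indices. For every node $j$, the relation $\minv(\bm{x}^+_d(k) - \bm{x}_d(k))>0$ together with $\bm{x}_d(k)\ge 0$ and $\alpha<1$ gives $\bm{x}^+_d(k)[j] - \alpha\bm{x}_d(k)[j] = (\bm{x}^+_d(k)[j]-\bm{x}_d(k)[j]) + (1-\alpha)\bm{x}_d(k)[j] > 0$, so this entry is not zero; and $\maxv(\bm{x}^+_d(k))<\gamma$ gives $\bm{x}^+_d(k)[j] - \alpha\bm{x}_d(k)[j] \le \bm{x}^+_d(k)[j] < \gamma$. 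Meanwhile event $A$ supplies some $(k_0,j_0)$ with $\bm{x}^+_d(k_0)[j_0] - \alpha\bm{x}_d(k_0)[j_0] \ge \bm{x}^+_d(k_0)[j_0] - \bm{x}_d(k_0)[j_0] \ge \gamma$, so $\maxv(\mathcal{X}^+ - \alpha\mathcal{X})\ge\gamma$. Hence every favorable row $(k,j)$ has value strictly in $(0,\gamma)$, so it is neither the maximizer nor a zero of $\mathcal{X}^+-\alpha\mathcal{X}$; by \eqref{eq_defcE} it carries a $1$ on the diagonal of $I-E(\alpha)^2$. Since these bounds hold for every $\alpha\in(0,1)$, they hold in particular for $\alpha^\star$, so the same rows survive both factors of $(I-E(\alpha^\star)^2)(I-E(\alpha)^2)$. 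Consequently all rows $(k,j)$ with $k$ favorable survive, and for each fixed $j$ they provide $\ge m+n$ vectors $\bm{p}_d(k)$ which, by Assumption~\ref{AS_excitation}\textbf{a}, are linearly independent and span $\mathbb{R}^{n+m}$; by the previous paragraph the filtered matrix then has full column rank for every $\alpha\in(0,1)$, i.e., Assumption~\ref{AS_Unique} holds.

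Finally I would assemble the probability. By independence within $\Omega_A$, $\Pr(A) = 1 - (1-\sigma_1)^{\lfloor T_d/2\rfloor}$. The count of favorable indices in $\Omega_B$ is distributed as $\mathrm{Binomial}(\lceil T_d/2\rceil, \sigma_2)$, so $\Pr(B) = \sum_{\ell=m+n}^{\lceil T_d/2\rceil}\binom{\lceil T_d/2\rceil}{\ell}\sigma_2^\ell(1-\sigma_2)^{\lceil T_d/2\rceil-\ell}$; multiplying $\Pr(A)\Pr(B)$ yields the stated lower bound on $\rho$. For the limiting claim, $(1-\sigma_1)^{\lfloor T_d/2\rfloor}\to 0$ geometrically, so $\Pr(A)\to 1$ exponentially, while $\Pr(B) = \Pr(\mathrm{Binomial}(\lceil T_d/2\rceil,\sigma_2)\ge m+n)$ is a fixed-threshold lower-tail complement whose failure probability decays exponentially by a Chernoff/Hoeffding bound, since the mean $\lceil T_d/2\rceil\sigma_2\to\infty$ while $m+n$ is fixed; hence $\rho\to 1$ exponentially fast.
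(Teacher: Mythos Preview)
Your proposal is correct and follows essentially the same approach as the paper: both split the data into two halves, use the first half to guarantee $\maxv(\mathcal{X}^+-\alpha\mathcal{X})\ge\gamma$ for all $\alpha$, use the second half to produce at least $m+n$ ``favorable'' samples whose rows survive the filter $(I-E(\alpha^\star)^2)(I-E(\alpha)^2)$ for every $\alpha$, and then invoke Assumption~\ref{AS_excitation}\textbf{a} together with the identification of each surviving row with a permutation of $\bm{p}_d(k)$ to obtain full column rank. The only cosmetic differences are that you phrase the rank step as a null-space argument (the paper equivalently partitions $[\mathcal{X}\;\mathcal{P}]$ into column blocks $[\mathcal{X}_i\;\mathcal{P}_i]$) and you invoke Chernoff/Hoeffding for the exponential tail where the paper writes out an explicit elementary bound.
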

	

\begin{proof}
  Assumption~\ref{AS_Unique} requires that the matrix
  $(I-E({\alpha}^{\star})^2) (I-E({\alpha})^2) \begin{bmatrix}
  	\mathcal{X}&\mathcal{P}\end{bmatrix}$ has full column rank for all
  $ \alpha \in (0,1)$. To verify this,
  we start by introducing a partition of the matrix
  $\begin{bmatrix}\mathcal{X}&\mathcal{P}\end{bmatrix}$. Recall from
  definition \eqref{eq_defXXP} that $\mathcal{P}$ is a column stack of
  $\bm{P}_d(k)$ for $k\in\until{T_d}$. Since each $\bm{P}_d(k)$ is
  block diagonal, its columns are inherently linearly independent. Based on
  this observation, we partition the matrix $\mathcal{P}$ into $n$
  column blocks of size $\mathbb{R}^{nT_d\times (n+m-1)}$ as follows
  \begin{align*}
    \mathcal{P}
    &=
      \text{\scriptsize $\left[\begin{array}{c |c |c |c}
                            \bar{\bm{p}}^{\top}_1(1)& & &\\
                                                    & \bar{\bm{p}}^{\top}_2(1)& &\\
                                                    & & \ddots&\\
                                                    & & &\bar{\bm{p}}^{\top}_n(1)\\
                            \bar{\bm{p}}^{\top}_1(2)& & &\\
                                                    & \bar{\bm{p}}^{\top}_2(2)& &\\
                                                    & & \ddots&\\
                                                    & & &\bar{\bm{p}}^{\top}_n(2)\\
                            \bm{\vdots}& \bm{\vdots}& &\bm{\vdots}\\
                            \bar{\bm{p}}^{\top}_1({T_d})& & &\\
                                                    & \bar{\bm{p}}^{\top}_2({T_d})& &\\
                                                    & & \ddots&\\
                                                    & & &\bar{\bm{p}}^{\top}_n({T_d})\\
        \end{array}
      \right]$ }
  \end{align*}
  where we can conveniently write as
  $\mathcal{P} = [ \mathcal{P}_1 \; \mathcal{P}_2 \; \cdots \;
  \mathcal{P}_n] $, and study the column rank of each block
  independently.  This also induces a decomposition of $\mathcal{X}$ corresponding to the non-zero rows of $\mathcal{P}_i$, which leads to vectors $\mathcal{X}_i$, $i\in\until{n}$ as
  \begin{align*}
    \mathcal{X}_1 &= \text{\scriptsize
                    $\left[\begin{matrix}[1.2]
                        \mathcal{X}[1]\\
                        0\\
                        \vdots\\
                        0\\
                        \mathcal{X}[n+1]\\
                        0\\
                        \vdots\\
                        0\\
                        \bm{\vdots}\\
                        \mathcal{X}[(T_d-1)n+1]\\
                        0\\
                        \vdots\\
                        0\\
		\end{matrix}
		\right]$	},
	\; \mathcal{X}_2 = \text{\scriptsize $\left[\begin{matrix}[1.2]
			0\\
			\mathcal{X}[2]\\
			\vdots\\
			0\\
			0\\
			\mathcal{X}[n+2]\\
			\vdots\\
			0\\
			\bm{\vdots}\\
			0\\
			\mathcal{X}[(T_d-1)n+2]\\
			\vdots\\
			0\\
		\end{matrix}
		\right]$	},\bm{\cdots}    
  \end{align*}
  Now, due to the
  row separation, if for all $i\in\until{n}$, the matrices
  $\begin{bmatrix}\mathcal{X}_i&\mathcal{P}_i\end{bmatrix}\in\mathbb{R}^{nT_d\times
    (n+m)}$ have full column rank, then the matrix
  $\begin{bmatrix}\mathcal{X}&\mathcal{P}\end{bmatrix}$ also has full
  column rank. Furthermore, due to the diagonal structure of
  $\left(I-E({\alpha})^2\right)$, if all matrices
  $\left(I-E({\alpha}^{\star})^2\right)
  \left(I-E({\alpha})^2\right)\begin{bmatrix}\mathcal{X}_i&
    \mathcal{P}_i\end{bmatrix}\in\mathbb{R}^{nT_d\times (n+m)}$,
  $i\in\until{n}$ have full column rank, then the matrix
  $\left( I-E({\alpha}^{\star})^2\right)
  \left(I-E({\alpha})^2\right)\begin{bmatrix}\mathcal{X}&\mathcal{P}\end{bmatrix}$
  has full column rank.

  To proceed, consider the rank of matrix
  $\left(I-E({\alpha}^{\star})^2\right)$ $\left(I-E({\alpha})^2\right)\begin{bmatrix}\mathcal{X}_i&\mathcal{P}_i\end{bmatrix}$.
  By definition,
  $\begin{bmatrix}\mathcal{X}_i&\mathcal{P}_i\end{bmatrix}$ is sparse,
  with only $T_d$ number of non-zero rows. Furthermore, each non-zero row is
  associated with one piece of data for $k\in\until{T_d}$, and has the
  following form
  \begin{align}\label{eq_rowXP}
    \begin{bmatrix}\mathcal{X}[n(k-1)+i]&\bar{\bm{p}}_i(k)\end{bmatrix}.
  \end{align}
  Recall that $\bar{\bm{p}}_i(k)=\left(\bm{p}_d(k)\right)_{-i}$, and
  $\mathcal{X}[n(k-1)+i]=\bm{p}_d(k)[i]$ is exactly the entry removed 
  from $\bm{p}_d(k)$. Therefore, the non-zero rows of
  $\begin{bmatrix}\mathcal{X}_i&\mathcal{P}_i\end{bmatrix}$ can be
  obtained by performing an elementary column operation on the matrix
  $\col\{\bm{p}^{\top}_d(1),\cdots,\bm{p}^{\top}_d({T_d})\}$ which 
  shifts its $i$th column to the first position.  This fact,
  together with Assumption \ref{AS_excitation}\textbf{a}, indicates
  that any $n+m$ number of non-zero rows in
  $\begin{bmatrix}\mathcal{X}_i&\mathcal{P}_i\end{bmatrix}$ are
  linearly independent.  Now, from the definition of $E({\alpha})$ in
  \eqref{eq_defcE}, we know that
  $\left(I-E({\alpha})^2\right)\left(I-E({\alpha}^{\star})^2\right)$
  is a diagonal matrix with $0$, $1$ entries.  To make sure
  $\left(I-E({\alpha}^{\star})^2\right)\left(I-E({\alpha})^2\right) \begin{bmatrix}
    \mathcal{X}_i & \mathcal{P}_i\end{bmatrix}$ has full column rank,
  a sufficient condition is that at least $m+n$ number of non-zero
  rows \eqref{eq_rowXP} in
  $\begin{bmatrix}\mathcal{X}_i&\mathcal{P}_i\end{bmatrix}$ are
  associated with the entry~$1$ in
  $\left(I-E({\alpha}^{\star})^2\right)\left(I-E({\alpha})^2\right)$. This calls 
  for the following derivation.
	
 Based on the definition of $E(\alpha)$ in \eqref{eq_defcE}, the diagonal 
 entries in $\left(I-E({\alpha}^{\star})^2\right)\left(I-E({\alpha})^2\right)$
 equal to $1$ if and only if the corresponding entries in $E({\alpha}^{\star})$
 and $E({\alpha})$ are zeros. That is, the first two conditions
 in  \eqref{eq_defcE} must not hold. Since these conditions involve $\maxv(\mathcal{X}^+-\alpha
 \mathcal{X})$, we first assume $\maxv(\mathcal{X}^+-\alpha
 \mathcal{X})$ has a lower bound $\gamma$ and computes a probability 
 such that this bound is valid. To this end,
  %
  %
    %
  %
  from Assumption~\ref{AS_excitation}\textbf{c}, one has
  \begin{align}\label{eq_Pr1} \nonumber
    \Pr \big( \! \max_{k \in\left\{ 1,\dots, \left\lfloor
      \frac{T_d}{2}\right\rfloor\right\}}&
      (\maxv(\bm{x}^+_d(k) \! -\!\bm{x}_d(k)))\ge\gamma  \big)\\
      &= 1 -(1-\sigma_1)^{\left\lfloor \frac{T_d}{2}\right\rfloor}, 
  \end{align}
  where the probabilities are multiplicable due to the data
  independence in Assumption \ref{AS_excitation}\textbf{b}. 
  Furthermore, since $\alpha\in(0,1)$, and $\bm{x}_d(k)\ge0$, by definition,
  \begin{align*}
  	\maxv(\mathcal{X}^+-\alpha\mathcal{X})
  	&=\max_{k \in\left\{ 1,\dots, {T_d}\right\}}
  	(\maxv(\bm{x}^+_d(k) -\alpha\bm{x}_d(k)))\\
  	&\ge\max_{k \in\left\{ 1,\dots, \left\lfloor
  		\frac{T_d}{2}\right\rfloor\right\}}
  	(\maxv(\bm{x}^+_d(k) -\bm{x}_d(k))).
  \end{align*}
 This, together with \eqref{eq_Pr1} yields
 \begin{align}\label{eq_Pr1_1}
 	\Pr \big(\maxv(\mathcal{X}^+-\alpha\mathcal{X})\ge\gamma  \big)\ge 1 -(1-\sigma_1)^{\left\lfloor \frac{T_d}{2}\right\rfloor}. 
 \end{align}
 Equation \eqref{eq_Pr1_1} gives the probability that $\maxv(\mathcal{X}^+-\alpha
 \mathcal{X})$ is lower bounded by $\gamma$.
  %
  %
  Now, to verify that the first two conditions in \eqref{eq_defcE} do not hold, 
  we define a set
  %
  %
  \begin{align*}
    \mathcal{B}\!=\!\left\{k \left|\  \begin{matrix}
          \minv(\bm{x}^+_d(k)\!-\!\bm{x}_d(k))>0 \ \textstyle{\bigwedge}\
          \maxv(\bm{x}^+_d(k))<\gamma,
          \\
          \textstyle k \in\left\{\left\lfloor
            \frac{T_d}{2}\right\rfloor+1,\dots, T_d\right\}
        \end{matrix}\right.
    \right\}. 
  \end{align*}
  Clearly, for $\alpha\in(0,1)$ and any $k \in\left\{ 1,\dots, {T_d}\right\}$, there holds
  \begin{align*}
    \minv(\bm{x}^+_d(k)\!-\!\bm{x}_d(k))\le\bm{x}^+_d(k)-\alpha\bm{x}_d(k)
    \le
    \maxv(\bm{x}^+_d(k))
  \end{align*}
  Thus, for any $k\in\mathcal{B}$, based on \eqref{eq_defcE}, the 
  corresponding entries in $E({\alpha}^{\star})$
  and $E({\alpha})$ are zeros.
  Furthermore, from the second equation of Assumption 
  \ref{AS_excitation}\textbf{c}, one has
  \begin{align}\label{eq_Pr2}
    \Pr\left(|\mathcal{B}|\ge m+n\right)=\sum_{\ell=m+n}^{\left\lceil
    \frac{T_d}{2}\right\rceil} {\left\lceil
    \frac{T_d}{2}\right\rceil \choose
    \ell}\sigma_2^{\ell}(1-\sigma_2)^{\left\lceil
    \frac{T_d}{2}\right\rceil-\ell},
  \end{align}
  %
  %
  which describes the probability when at least $m+n$ number of non-zero rows 
  \eqref{eq_rowXP} in $\begin{bmatrix}\mathcal{X}_i&\mathcal{P}_i\end{bmatrix}$ 
  are associated with $1$ entries in
  $\left(I-E({\alpha})^2\right)\left(I-E({\alpha}^{\star})^2\right)$.
  Since the construction of $\mathcal{B}$
  builds on $\gamma$, by combining~\eqref{eq_Pr1_1} and~\eqref{eq_Pr2}, one has
  \begin{align*}
    \rho\ge\left( 1-(1-\sigma_1)^{\left\lfloor
    \frac{T_d}{2}\right\rfloor}\right)\! \left[
    \sum_{\ell=m+n}^{\left\lceil \frac{T_d}{2}\right\rceil} {\left\lceil
    \frac{T_d}{2}\right\rceil \choose
    \ell}\sigma_2^{\ell}(1-\sigma_2)^{\left\lceil
    \frac{T_d}{2}\right\rceil-\ell} \right]
  \end{align*}
	where all the matrices
	$\left(I-E({\alpha})^2\right)
	\left(I-E({\alpha}^{\star})^2\right) \begin{bmatrix}\mathcal{X}_i&\mathcal{P}_i\end{bmatrix}$,
	$i\in\until{n}$ have full column rank. This is the first 
	statement in Proposition \ref{PR_probU}.
  
  To proceed, we show that $\rho\to 1$ exponentially fast as
  $T_d\to\infty$. This is clearly the case for the first term
  $( 1-(1-\sigma_1)^{\left\lfloor \frac{T_d}{2}\right\rfloor})$. 
  Additional, note that
  \begin{align*}
    &\sum_{\ell=m+n}^{\left\lceil \frac{T_d}{2}\right\rceil}
      {\left\lceil \frac{T_d}{2}\right\rceil \choose
      \ell}\sigma_2^{\ell}(1-\sigma_2)^{\left\lceil
      \frac{T_d}{2}\right\rceil-\ell}\nonumber
    \\
    =\ &1 - \sum_{\ell=0}^{m+n-1} {\left\lceil \frac{T_d}{2}\right\rceil
         \choose \ell}\sigma_2^{\ell}(1-\sigma_2)^{\left\lceil
         \frac{T_d}{2}\right\rceil-\ell}\nonumber
    \\
    \ge\ & 1- \sum_{\ell=0}^{m+n-1} \left\lceil
           \frac{T_d}{2}\right\rceil ^{\ell}
           \sigma_2^{\ell}(1-\sigma_2)^{\left\lceil
           \frac{T_d}{2}\right\rceil-\ell}\nonumber
    \\
    \ge\ & 1- (m+n-1) \left\lceil \frac{T_d}{2}\right\rceil ^{m+n-1}
           (1-\sigma_2)^{\left\lceil \frac{T_d}{2}\right\rceil-\ell}  .
  \end{align*}
  Since $m$, $n$ are constants and $1-\sigma_2<1$, this converges to
  $1$ exponentially fast, completing the proof.
\end{proof}

Note that by examining the indexes of $k$ in ~\eqref{eq_Pr1_1}
and~\eqref{eq_Pr2} respectively, we use the first half of data samples
to determine a lower bound $\gamma$ of
$\maxv(\mathcal{X}^+-\alpha\mathcal{X})$, then use the second half of
data samples to guarantee the non-zero entries in
$\left(I-E({\alpha}^{\star})^2\right)\left(I-E({\alpha})^2\right)$.


\subsection{Algorithm for parameter identification}

Given our discussion in Sections~\ref{sec:validity-scalar-opt}
and~\ref{sec:probabilistic}, all the parameters of
system~\eqref{eq_dstmodel} can be determined by solving the
minimization~\eqref{eq_reducedJ}.  The latter is challenging given the
piecewise-constant nature of $M(\alpha)$ as a function of $\alpha$,
which in general makes $\Jc$ discontinuous and non-convex. Here we
tackle this problem and design an algorithm to solve it.

We start by observing that the feasible region of~\eqref{eq_reducedJ}
can be refined. From \eqref{eq_compactdataeq}, we know
$\mathcal{X}^+-\alpha \mathcal{X} =
\left[\mathcal{P}\bm{h}\right]_0^{s_{_D}}\ge0$. Thus, given data sets
$\mathcal{X}^+$ and $\mathcal{X}$, the feasible region of $\alpha$ can
be shrunk to $ (0,\alpha_{\max}]$, where
\begin{align*}
  \alpha_{\max}=
  \min\left(1,\min_{i \in \until{nT_d}, {\mathcal{X}[i]}\neq0}\left(\frac{\mathcal{X}^+[i]}{\mathcal{X}[i]}\right)\right)
\end{align*}
%
%
Note that if $\alpha_{\max}=1$, this procedure actually enlarges the
feasible region of $\alpha$ by adding the point $\alpha=1$. However,
since the extra point has no impact in the result of
Proposition~\ref{prop:unique}\textbf{b}, it does not change the
solution to the optimization problem~\eqref{eq_reducedJ}.

The key idea of the algorithm proposed below to solve the optimization
problem~\eqref{eq_reducedJ} is to identify the domains where
$M(\alpha)$ are constant matrices. Within each domain,
\eqref{eq_reducedJ} is a quadratic optimization problem, so its
solution can be directly obtained.  We then compare all the solutions
to get the global optimum.  In order to do so, the challenge is to
determine the boundary points on $(0,\alpha_{\max}]$ that separate the
domains on which $M(\alpha)$ is constant. As we show next, the
number of boundary points is linear in~$nT_d$.
%
%

\begin{algorithm*}[htb!]
  \setstretch{1.2}
  \label{Algorithm_Main}
  \caption{Scalar Optimization via Domain Partition}
  \SetAlgoLined
  \textbf{Input} $\mathcal{X}^+$, $\mathcal{X}$ and $\mathcal{P}$\;
  {Define} $\mathcal{S}=\mathcal{M}=\mathcal{Z}=\emptyset$;
  $\mathcal{T}= \{1, \dots ,nT_d\}$\;
  {Initial values:} $\psi_0=0$, $\ell=0$\;
  {Initial sets:}
  $\mathcal{S}=\left\{i~\big|~\mathcal{X}^+[i]=\maxv\left(\mathcal{X}^+\right),~
    i\in\mathcal{T}\right\}$;
  $\mathcal{Z}=\left\{i~\big|~\mathcal{X}^+[i]=0,~
    i\in\mathcal{T}\right\}$;
  $\mathcal{M} = \mathcal{T} \setminus
  (\mathcal{S}\bigcup\mathcal{Z})$\;
	
  \While{$\psi_{\ell}<\alpha_{\max}$}
  {Find the smallest
    $\widehat\psi>\psi_{\ell}$ such that
    $\displaystyle\max_{j\in\mathcal{M}} (\mathcal{X}^+[j] -
    \widehat\psi\mathcal{X}[j])=\maxv(\mathcal{X}^+ -
    \widehat\psi\mathcal{X})$ or $\displaystyle\min_{j\in\mathcal{M}}
    (\mathcal{X}^+[j] -
    \widehat\psi\mathcal{X}[j])=0$ \tcp*{\small \color{blue}
      $C(\alpha)$ takes different values for
      $\alpha$ on different sides of
      $\widehat\psi$} \label{step_find} Let
    $\psi_{\ell+1}=\widehat\psi$\;
		
    Compute $\displaystyle
    C_{A\ell}=C\big(\frac{\psi_{\ell}+\psi_{\ell+1}}{2}\big)$, 
    $\mathcal{Q}_{A\ell}= [C_{A\ell} \;
      -\mathcal{P}] $, and
      $M_{A\ell}=I-\mathcal{Q}_{A\ell}\left(\mathcal{Q}_{A\ell}^{\top}\mathcal{Q}_{A\ell}\right)^{-1}\mathcal{Q}_{A\ell}^{\top}$
      \label{step_optstart}
      \tcp*{\small \color{blue} $\mathcal{Q}_{A\ell}$ is
      constant  for $\alpha\in(\psi_{\ell},\psi_{\ell+1})$} 
    
    Solve $\displaystyle  \widehat{\alpha}_{A\ell}=
    \underset{{\alpha_{A\ell}\in(\psi_{\ell},\psi_{\ell+1})}}{\arg\minv}
    ~\frac{1}{2}\|M_{A\ell}\left(\mathcal{X}^+-\alpha_{A\ell}
      \mathcal{X}\right)\|_2^2$  \label{step_optend} \tcp*{\small
      \color{blue} Solve the optimization problem} 
    
    Compute ${\mathcal{J}}(\widehat{\alpha}_{A\ell})=\frac{1}{2}\|M_{A\ell}\left(\mathcal{X}^+-\widehat{\alpha}_{A\ell} \mathcal{X}\right)\|_2^2$\;
    
    Compute $\displaystyle C_{B\ell}=C(\psi_{\ell+1})$,
    $\mathcal{Q}_{B\ell}=\begin{bmatrix}C_{B\ell} &
      -\mathcal{P}\end{bmatrix} $, and
    $M_{B\ell}=I-\mathcal{Q}_{B\ell}\left(\mathcal{Q}_{B\ell}^{\top}\mathcal{Q}_{B\ell}\right)^{-1}\mathcal{Q}_{B\ell}^{\top}$
     \label{step_cpstart}\;
    
    Let $\widehat{\alpha}_{B\ell}=\psi_{\ell+1}$. Compute ${\mathcal{J}}(\widehat{\alpha}_{B\ell})=\frac{1}{2}\|M_{B\ell}\left(\mathcal{X}^+-\widehat{\alpha}_{B\ell} \mathcal{X}\right)\|_2^2$  \label{step_cpend}\;
    
    Update $\mathcal{S}=\left\{i~\big|~ \left(\mathcal{X}^+[i] - \psi_{\ell+1}\mathcal{X}[i]\right)=\maxv\left(\mathcal{X}^+ - \psi_{\ell+1}\mathcal{X}\right),~ i\in\mathcal{T}\right\}$; 
    $\mathcal{Z}=\left\{i~\big|~ \left(\mathcal{X}^+[i] -
        \psi_{\ell+1}\mathcal{X}[i]\right)=0 ,~
      i\in\mathcal{T}\right\}$;
      $\mathcal{M} = \mathcal{T} \setminus
      (\mathcal{S}\bigcup\mathcal{Z})$   \tcp*{\small \color{blue}Update sets for $\alpha=\psi_{\ell+1}$}\label{step_update} 	
    
    $\ell=\ell+1$ \label{step_itend}\;
  }
  \textbf{Output} $\displaystyle \widehat{\alpha}=  \underset{\alpha\in\{\widehat{\alpha}_{A\ell}\}\bigcup\{\widehat{\alpha}_{B\ell}\}}{\arg\minv} ~    \mathcal{J}(\alpha)$
\end{algorithm*}
%

%
%

\begin{theorem}\longthmtitle{Properties of
    Algorithm~\ref{Algorithm_Main}}\label{TH_alg1}
  Suppose Assumption~\ref{AS_Unique}
  holds. Algorithm~\ref{Algorithm_Main} has the following properties:
  \begin{enumerate}[label=\textbf{\alph*}.]
  \item $\textit{[Minimizer]}$ The output value $\widehat{\alpha}$ is
    the minimizer of problem~\eqref{eq_reducedJ};
  \item $\textit{[Complexity]}$ Algorithm \ref{Algorithm_Main}
    terminates in at most $2nT_d+1$ number of iterations.  The
    computational complexity of the algorithm is
    $\mathcal{O}(nT_d)^{3.34}$, where $n$ is the number of 
    nodes and $T_d$ is the number of sampled data;
  \item $\textit{[Identification]}$ Given 
    $\widehat{\alpha}=\alpha^{\star}$, the variables ${v}^{\star}$ and
    ${\bm{h}}^{\star}$ can be computed as
    \begin{align*}
      \begin{bmatrix}[1.3]
        {v}^{\star}\\ {\bm{h}}^{\star}
      \end{bmatrix}
      =-\left(\mathcal{Q}(\alpha^{\star})^{\top}
      \mathcal{Q}(\alpha^{\star})\right)^{-1}
      \mathcal{Q}(\alpha^{\star})^{\top}\left(\mathcal{X}^+-\alpha^{\star}
      \mathcal{X}\right) ,
    \end{align*}
    yielding the matrices $W_D$ and $B_D$ of
    system~\eqref{eq_dstmodel}. Finally, $s_D$ can be estimated by
    $s_D=\maxv\left(\mathcal{X}^+ - \alpha^{\star}\mathcal{X}\right)$.
    If any entry of $\mathcal{P}\bm{h}$ reaches the upper saturation threshold, 
    this estimate is exactly~$s_D$.
  \end{enumerate}
\end{theorem}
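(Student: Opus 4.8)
The plan is to treat the three claims in turn, all building on the piecewise-constant structure of $M(\alpha)$ established in Lemma~\ref{LM_choiceEC} and the uniqueness result of Proposition~\ref{prop:unique}.

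For part \textbf{a}, I would first argue that the breakpoints $\psi_0<\psi_1<\cdots$ generated in Step~\ref{step_find} are exactly the values of $\alpha$ at which the membership of some index in the max-set $\mathcal{S}$ or the zero-set $\mathcal{Z}$ changes, and hence, through \eqref{eq_defcE}, exactly the points where $E(\alpha)$, $C(\alpha)$, and $M(\alpha)$ change value. On each open interval $(\psi_\ell,\psi_{\ell+1})$ the matrix $M(\alpha)\equiv M_{A\ell}$ is constant, so $M_{A\ell}(\mathcal{X}^+-\alpha\mathcal{X})$ is affine in $\alpha$ and $\mathcal{J}(\alpha)$ is a single quadratic, whose minimizer over that interval is computed in Steps~\ref{step_optstart}--\ref{step_optend}. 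Because $M(\alpha)$ may take yet a different value exactly at the boundary $\psi_{\ell+1}$, the algorithm separately evaluates $\mathcal{J}$ there (Steps~\ref{step_cpstart}--\ref{step_cpend}). Since the candidate set $\{\widehat\alpha_{A\ell}\}\cup\{\widehat\alpha_{B\ell}\}$ then contains a minimizer of $\mathcal{J}$ restricted to each maximal interval of constancy and every breakpoint, its $\arg\min$ is the global minimizer over $(0,\alpha_{\max}]$; by Proposition~\ref{prop:unique}\textbf{b} this equals $\alpha^\star$, giving the claim.

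For part \textbf{b}, the number of iterations equals the number of breakpoints, so I would bound how often $\mathcal{S}$ or $\mathcal{Z}$ can change. Writing $\ell_i(\alpha)=\mathcal{X}^+[i]-\alpha\mathcal{X}[i]$, the zero-set events occur only at the isolated values $\alpha=\mathcal{X}^+[i]/\mathcal{X}[i]$, giving at most $nT_d$ of them. The max-set events are governed by the upper envelope $\max_i\ell_i(\alpha)$ of the $nT_d$ affine functions $\ell_i$; since this envelope is convex and piecewise affine, it has at most $nT_d$ linear pieces and hence linearly many vertices, which bounds the max-set changes. Summing the two counts yields the $2nT_d+1$ iteration bound and, counting open intervals together with boundary points, the refined $|\mathtt{E}|\le 4nT_d+2$ quoted in Remark~\ref{RM_Unique}. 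The per-iteration cost is dominated by forming the projector $M_{A\ell}=I-\mathcal{Q}_{A\ell}(\mathcal{Q}_{A\ell}^{\top}\mathcal{Q}_{A\ell})^{-1}\mathcal{Q}_{A\ell}^{\top}$, a matrix operation on $O(nT_d)$-dimensional matrices; combining its cost with the $O(nT_d)$ iterations yields the stated $\mathcal{O}(nT_d)^{3.34}$. I expect this envelope-counting step to be the main obstacle, since it is precisely where the naive exponential bound $3^{nT_d}$ on $|\mathtt{E}|$ is reduced to a linear one.

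For part \textbf{c}, I would observe that the displayed formula for $[v^{\star\top}\,\bm{h}^{\star\top}]^\top$ is exactly $\widehat\xi$ from \eqref{eq_daleoptxi} evaluated at $\alpha=\widehat\alpha=\alpha^\star$ with $\widehat\mu=0$; Proposition~\ref{prop:unique}\textbf{c} already establishes that this equals $\xi^\star$, so $\widehat{\bm{h}}=\bm{h}^\star$ recovers the true $H=[\,W_D\;B_D\,]$ and hence both $W_D$ and $B_D$. For $s_D$, I would invoke the identity $\mathcal{X}^+-\alpha^\star\mathcal{X}=[\mathcal{P}\bm{h}^\star]_0^{s_D}$ from \eqref{eq_compactdataeq}: every entry of a thresholded vector is at most $s_D$, so $\maxv(\mathcal{X}^+-\alpha^\star\mathcal{X})\le s_D$, with equality holding exactly when some entry of $\mathcal{P}\bm{h}^\star$ attains the upper saturation level, which is the asserted exactness condition.
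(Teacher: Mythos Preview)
Your proposal is correct and follows essentially the same route as the paper: the piecewise-constant structure of $M(\alpha)$ is exploited to partition $(0,\alpha_{\max}]$ at the critical points $\psi_\ell$, the convexity of the upper envelope $\maxv(\mathcal{X}^+-\alpha\mathcal{X})$ is used to bound the $\mathcal{S}$-changes by $nT_d$ (the paper phrases this as ``each line intersects the convex envelope only once'' rather than counting envelope vertices, but the content is identical), and part~\textbf{c} is reduced to Proposition~\ref{prop:unique}\textbf{c} together with~\eqref{eq_compactdataeq}. The only detail the paper adds that you leave implicit is that the $2.34$ in the per-iteration exponent comes specifically from the Coppersmith--Winograd matrix-multiplication bound, which then combines with the $O(nT_d)$ iteration count to give $3.34$.
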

\begin{proof}
  \textbf{a}.  Since $M(\alpha)$ is a piecewise constant
  function of $\alpha$, define $\psi_{\ell}\in(0,\alpha_{\max}]$
  as the \textit{critical points}
  %
  %
  such that the matrix $M(\alpha)$ changes its value when $\alpha$
  passes across $\psi_{\ell}$. Then, on each domain
  $(\psi_{\ell},\psi_{\ell+1})$, the matrix $M(\alpha)$ must be a
  constant. In order to determine the values of $\psi_{\ell}$, note
  that $M(\alpha)$ depends on~$E(\alpha)$.  From~\eqref{eq_defcE},
  $E(\alpha)$ is diagonal and its values are determined by the three
  types of entries in
  $\left(\mathcal{X}^+ - \alpha\mathcal{X}\right)$: the ones that
  reach the upper saturation threshold ($E(\alpha)[i,i]=1$ if
  $\left(\!\mathcal{X}^+[i]\!\!-\!\alpha
    \mathcal{X}[i]\right)=\maxv\left(\mathcal{X}^+ -
    \alpha\mathcal{X}\right)$); the ones that reach the zero
  saturation threshold ($E(\alpha)[i,i]=-1$ if
  $\left(\!\mathcal{X}^+[i]\!\!-\!\alpha \mathcal{X}[i]\right)=0$); and
  the ones in between ($E(\alpha)[i,i]=0$).  For convenience of
  presentation, we use $\mathcal{S}$, $\mathcal{Z}$, $\mathcal{M}$ to
  denote the indices of entries corresponding to $1$, $-1$, $0$,
  respectively. Note that $E(\alpha)$ changes as a function of
  $\alpha$ only if the sets $\mathcal{S}$, $\mathcal{M}$,
  $\mathcal{Z}$ change.  To detect such changes, in step
  \ref{step_find} of Algorithm \ref{Algorithm_Main}, we compute the
  smallest value bigger than $\psi_l$
  %
  %
  such that certain entries of the vector
  $\left(\mathcal{X}^+ - \alpha\mathcal{X}\right)$ shift from set
  $\mathcal{M}$ to set $\mathcal{S}$ or set $\mathcal{Z}$
  (correspondingly, some other entries of
  $\left(\mathcal{X}^+ - \alpha\mathcal{X}\right)$ may leave
  %
  %
  the set $\mathcal{S}$ or set $\mathcal{Z}$ and join set
  $\mathcal{M}$). From steps \ref{step_optstart}-\ref{step_optend} of
  Algorithm~\ref{Algorithm_Main}, we take the middle point of
  $(\psi_{\ell},\psi_{\ell+1})$ to find $C(\alpha)$ and then solve the
  optimization problem on this region.  The union of these open sets
  $(\psi_{\ell},\psi_{\ell+1})$ does not include the critical points
  $\psi_{\ell}$: this is because $E(\psi_{\ell})$ is different from
  $E(\alpha)$ evaluated on $\alpha<\psi_{\ell}$ or on
  $\alpha>\psi_{\ell}$ (since two or more entries of
  $\left(\mathcal{X}^+ - \alpha\mathcal{X}\right)$ may simultaneously
  be the greatest/zero entry).  Thus, in steps
  \ref{step_cpstart}-\ref{step_cpend} of
  Algorithm~\ref{Algorithm_Main}, we compute the values of
  $\mathcal{J}({\alpha})$ on such critical points $\psi_{\ell}$
  separately.
  After finding all possible $\psi_{\ell}$ and the corresponding
  minimizers for ${\mathcal{J}}({\alpha})$, we finally compute the
  global minimizer $\widehat{\alpha}$ by comparing all the obtained
  $ \{\mathcal{J}(\alpha)\}_{\alpha\in\{\widehat{\alpha}_{A\ell}\}
    \bigcup \{\widehat{\alpha}_{B\ell}\}}$.

  \smallskip
  \noindent \textbf{b}.  The complexity of
  Algorithm~\ref{Algorithm_Main} depends on the number of critical
  points~$\{\psi_{\ell}\}$. From the discussion in part \textbf{a},
  these points are determined by changes in the composition of the
  sets $\mathcal{S}$, $\mathcal{M}$, and $\mathcal{Z}$. Based on this
  fact, we first consider element exchanges between $\mathcal{S}$ and
  $\mathcal{M}$. Note that
  $\maxv(\mathcal{X}^+-\alpha \mathcal{X})=\max_{i\in\{1,\dots,nT_d\}}(\mathcal{X}^+[i]-\alpha
  \mathcal{X}[i])$.
  %
  %
  Since $\maxv(\mathcal{X}^+-\alpha \mathcal{X})$ is convex on
  $\alpha$, the line $\mathcal{X}^+[i]-\alpha \mathcal{X}[i]$ for a
  given $i$ can intersect it only once, either for a continuous
  interval of $\alpha$ or at a particular point. Thus, for values of
  $\alpha$ on different domains $(\psi_{\ell},\psi_{\ell+1})$, the
  corresponding sets $\mathcal{S}$ do not have the same elements. In
  other words, if $i\in\mathcal{S}$ for
  $\alpha\in(\psi_{\ell_1},\psi_{\ell_1+1})$, then
  $i\notin\mathcal{S}$ for $\alpha\in(\psi_{\ell_2},\psi_{\ell_2+1})$,
  $\ell_1\neq \ell_2$.  Now, since the vector
  $(\mathcal{X}^+-\alpha \mathcal{X})$ has $nT_d$ entries, the element
  exchange between $\mathcal{S}$ and $\mathcal{M}$ can lead to at most
  $nT_d$ number of $\psi_{\ell}$. A similar argument shows that the
  element exchange between $\mathcal{Z}$ and $\mathcal{M}$
  can create at most $nT_d$ number of $\psi_{\ell}$. Bringing these
  two cases together, one has at most $2nT_d$ number of $\psi_{\ell}$
  on $(0,1]$. Furthermore, since the Algorithm~\ref{Algorithm_Main}
  starts with $\psi_{0}=0$, which requires an extra round of
  execution, it terminates in at most $2nT_d+1$ number of
  iterations. Since each iteration of Algorithm~\ref{Algorithm_Main}
  leads to two different $E(\alpha)$ (e.g., the ones obtained for
  $\alpha=\frac{\psi_\ell+\psi_{\ell+1}}{2}$ and
  $\alpha={\psi_{\ell+1}}$), the number of different matrices
  $E(\alpha)$ is bounded by $|\mathtt{E}|\le4{nT_d}+2$.

  Now, to determine the computational complexity of
  Algorithm~\ref{Algorithm_Main}, we evaluate the computational
  complexity of each of its iterations, i.e., step \ref{step_find} to
  step \ref{step_itend}. Particularly, we are interested in steps
  \ref{step_find}, \ref{step_optstart}, and \ref{step_optend}, since
  the other steps are either trivial or simply duplicate one of these
  steps.  For step~\ref{step_find}, one needs to solve
  $\displaystyle\max_{j\in\mathcal{M}} (\mathcal{X}^+[j] -
  \widehat\psi\mathcal{X}[j])=\maxv(\mathcal{X}^+ -
  \widehat\psi\mathcal{X})$ or
  $\displaystyle\min_{j\in\mathcal{M}} (\mathcal{X}^+[j] -
  \widehat\psi\mathcal{X}[j])=0$.  Each process needs to solve at most
  $2(T_d-1)$ linear, single-variable equations, so the complexity is
  $\mathcal{O}(T_d)$.
  For step \ref{step_optstart}, the computational complexity mainly
  comes from matrix inverse and multiplication. Since the row and
  column size of the matrices is no larger than $nT_d$, by using the
  Coppersmith–Winograd algorithm\cite{DC-SW:87}, the complexity is
  characterized by $\mathcal{O}(nT_d)^{2.34}$.  For
  step~\ref{step_optend}, we can write
  $\frac{1}{2}\|M_{A\ell}\left(\mathcal{X}^+-\alpha_{A\ell}
    \mathcal{X}\right)\|_2^2=c_0+c_1\alpha_{A\ell}+c_2
  \alpha_{A\ell}^2$, where
  \begin{align}\label{eq_C012}
    c_0
    &=\frac{1}{2}(M_{A\ell}
      \mathcal{X}^+)^{\top}M_{A\ell}\mathcal{X}^+,
      \;
      c_1=-(M_{A\ell}\mathcal{X}^+)^{\top} 
      (M_{A\ell}\mathcal{X}), \nonumber
    \\
    c_2
    &=\frac{1}{2}(M_{A\ell}\mathcal{X})^{\top}(M_{A\ell}\mathcal{X}).
  \end{align}
  Thus, the optimization problem only requires to find the minimizer
  of a parabola on a given interval, which is $\mathcal{O}(1)$. For
  this step, the major complexity comes from the matrix multiplication
  in \eqref{eq_C012} which, again by the Coppersmith–Winograd
  algorithm, has complexity $\mathcal{O}(nT_d)^{2.34}$.
  Finally, since the number of iterations is bounded by $2nT_d+1$
  times, the computational complexity of
  Algorithm~\ref{Algorithm_Main} is $\mathcal{O}(nT_d)^{3.34})$.

  \smallskip
  \noindent \textbf{c}. The fact that ${\alpha}^{\star}$ and
  ${\bm{h}}^{\star}$ correspond to the parameters of the system
  follows directly from Proposition~\ref{prop:unique}. Regarding
  $s_D$, from~\eqref{eq_compactdataeq}, we see that
  $\maxv\left(\mathcal{X}^+ - \alpha^{\star}\mathcal{X}\right)$ is the
  best estimate of $s_D$ that one can get from the data.  If any entry
  of $\mathcal{P}\bm{h}$ reaches the upper saturation threshold, this
  estimate is exactly~$s_D$.
  %
  %
\end{proof}

From Remark~\ref{RM_Unique} and Proposition \ref{PR_probU}, we see
that more data is is beneficial to make Assumption~\ref{AS_Unique}
hold. On the other hand, according to Theorem~\ref{TH_alg1}\textbf{b},
more data leads to higher computational complexity of
Algorithm~\ref{Algorithm_Main}. In the absence of measurement noise,
it is sufficient to consider the smallest amount of data that
satisfies Assumption~\ref{AS_Unique}.

\section{Impact of Measurement Noise}
The results in Section \ref{SEC_MR} are established based on the
assumption that the measured data $\bm{x}_d(k)$, $\bm{x}^+_d(k)$,
$\bm{u}_d(k)$ do not involve any measurement noise. However in the
context of neuronal activities, the presence of measurement noise is
inevitable. Although data processing methods
\cite{EN-JS-LC-EJC-XH-MAB-ASM-GJP-SDB:20} such as spatial averaging
(average the readings over aggregated neuron groups) or temporal
averaging (average the readings over small time windows) mitigate the
effect of measurement noise, such impact can never be completely
eliminated. Motivated by this, in this section we introduce a modified
algorithm still based on the scalar optimization problem
\eqref{eq_reducedJ} to handle the presence of noise.  We show that if
the sampled data involves bounded noise, the identification error of
the new algorithm is linearly bounded by the magnitude of the
measurement noise.

Suppose we have noisy data $\bm{x}^+_\epsilon(k)$,
$\bm{x}_\epsilon(k)$, $\bm{u}_\epsilon(k)$, such that
\begin{align}
  \bm{x}^+_\epsilon(k)=\bm{x}^+_d(k)
  &+\epsilon_{x^+}(k),\quad
    \bm{x}_\epsilon(k)=\bm{x}_d(k)+\epsilon_x(k) ,
\end{align}
where $\bm{x}^+_d(k), \bm{x}_d(k)$ correspond to the true states of
the system, and $\epsilon_{x^+}(k), \epsilon_x(k)$ are the associated
measurement noises. Suppose
\begin{align*}
  &\bm{u}_\epsilon(k)=\bm{u}_d(k)+\epsilon_u(k) ,
\end{align*}
where $\bm{u}_\epsilon(k)$ is the input data sample and $\bm{u}_d(k)$
is the practical input that was feed into the system. For all noise
terms, we have the following assumption.

\begin{assumption}\label{AS_Err}
  For $k=1,\cdots,T_d$, the infinity norms for all noises are
  bounded by $\overline\epsilon\in\mathbb{R}$, i.e.,
  $\|\epsilon_{x^+}(k)\|_{\infty}\le\overline\epsilon$,
  $\|\epsilon_{x}(k)\|_{\infty}\le\overline\epsilon$, and
  $\|\epsilon_{u}(k)\|_{\infty}\le \overline\epsilon$.
\end{assumption}

\medskip In order to estimate the system parameter under measurement
noise, one can still follow the idea in Section~\ref{SEC_KI} by
formulating the identification as a scalar optimization
problem. 
Recall that in \eqref{eq_defCv}, we introduce the matrix $C(\alpha)$ to
characterize the threshold nonlinearity of the system by detecting
the $\maxv$ or $0$ entries in vector
$\mathcal{X}^+-\alpha \mathcal{X}$. However, in the presence of
measurement noise, the conditions in \eqref{eq_defcE} no longer
strictly hold. Instead, we need to construct a new matrix
$C_{\epsilon}(\alpha)$ with the following relaxation:
\begin{enumerate}
\item Define a diagonal matrix $E_{\epsilon}(\alpha)\in\mathbb{R}^{nT_d\times nT_d}$ such that for all $i=1,2,\cdots, {nT_d}$,
  \begin{align}\label{eq_defcEnoise}
    E_{\epsilon}(\alpha)[i,i]\!\!=\!\!\left\{\begin{matrix*}[l]
        \!\phantom{-}1 ~~ \text{for} \left(\mathcal{X}_\epsilon^+\!\!-\!\alpha \mathcal{X}_\epsilon\right)\![i] \!\ge\! \maxv(\mathcal{X}_\epsilon^+\!\!-\!\alpha \mathcal{X}_\epsilon)\\~~~~~~~~~~~~~~~~~~~~~~~~~~~~~~~~~-2(1+\alpha)\overline\epsilon	\\
        \!-1~~ \text{for}\left(\mathcal{X}_\epsilon^+\!-\!\alpha \mathcal{X}_\epsilon\right)\![i]\!\le\! (1+\alpha)\overline\epsilon\\
        \!\phantom{-}0 ~~  \text{otherwise}
      \end{matrix*}\right.
  \end{align}
\item Obtain $C_{\epsilon}(\alpha)$ from $E_{\epsilon}(\alpha)$ by removing all zero columns in $E_{\epsilon}(\alpha)$.
\end{enumerate}
The relaxed $C_{\epsilon}(\alpha)$ is capable of characterizing the
threshold property in equation \eqref{eq_dataeqf} by taking into
account the impact of
noise. 
For the new $C_\epsilon(\alpha)$ and $E_\epsilon(\alpha)$, we still
assume that Assumption \ref{AS_Unique} holds, and the satisfaction of
this assumption can still be characterized by Proposition
\ref{PR_probU}.  Furthermore, note that if we bring this new
$C_{\epsilon}(\alpha)$ and the noisy data into equation
\eqref{eq_eqcv}, i.e.,
$C_\epsilon({\alpha}){v}=\mathcal{P}_{\epsilon}{\bm{h}}-\mathcal{X}_{\epsilon}^+
+{\alpha} \mathcal{X}_{\epsilon}$, the condition $v\ge0$ may not
strictly hold. Thus, when we solve the optimization problem with
measurement noise, the corresponding constraint can be removed, and
the $S$ in \eqref{eq_defS2} is simplified to
$S_\epsilon=\begin{bmatrix} \bm{0} & \widehat{W}_S
\end{bmatrix}$.

Following from \eqref{eq_obj2b}, a straightforward way to identify the
system parameter with noisy data is to solve the following problem
with modified $\mathcal{Q}_{\epsilon}(\alpha)$ and $S_\epsilon$,
\begin{align}\label{eq_obj2bn}
  \begin{aligned}
    \min&\qquad\mathcal{J}_0(\alpha,\xi)=\frac{1}{2}\|\mathcal{X}_{\epsilon}^+-\alpha \mathcal{X}_{\epsilon}+
    \mathcal{Q}_{\epsilon}(\alpha)\xi
    \|_2^2\\
    \text{s.t.}&\qquad S_{\epsilon}\xi\le0	
  \end{aligned}
\end{align}
Here, we employ an idea similar to the one we used in
Section~\ref{SEC_KI} and introduce a two-step minimization approach to
solve \eqref{eq_obj2bn}. First, consider the scalar optimization
problem
\begin{align}\label{eq_ModifiedJ}
  \min_{\alpha}  \quad
  \mathcal{J}(\alpha)=\frac{1}{2}\|M_{\epsilon}(\alpha)\left(\mathcal{X}_{\epsilon}^+-\alpha
  \mathcal{X}_{\epsilon}\right)\|_2^2, 
\end{align}
where
$M_{\epsilon}(\alpha)=I - \mathcal{Q}_{\epsilon}(\alpha)
\left(\mathcal{Q}_{\epsilon}(\alpha)^{\top}
  \mathcal{Q}_{\epsilon}(\alpha)\right)^{-1}\mathcal{Q}_{\epsilon}(\alpha)^{\top}$,
$\mathcal{Q}_{\epsilon}(\alpha)=\begin{bmatrix} C_{\epsilon}(\alpha) &
  -\mathcal{P}_{\epsilon}
\end{bmatrix}$, and $\mathcal{P}_{\epsilon}$ is defined the same way
as \eqref{eq_defXXP}, but with noisy data. Due to the impact of
measurement noise, when solving \eqref{eq_ModifiedJ}, we modify the
feasible region of $\alpha\in(0,\alpha_{\max}]$ by redefining
$$\alpha_{\max}=
\min\left(1,\min_{i \in \until{nT_d}, {\mathcal{X}_\epsilon[i]-\overline\epsilon}\neq0}\left(\frac{\mathcal{X}_\epsilon^+[i]+\overline\epsilon}{\mathcal{X}_\epsilon[i]-\overline\epsilon}\right)\right)$$
Algorithm
\ref{Algorithm_Main2} presents the pseudocode for the case when the data
has measurement noise.

After obtaining the minimizer $\widehat{\alpha}$ of
\eqref{eq_ModifiedJ} from Algorithm \ref{Algorithm_Main2}, we further
compute
$\widehat{\xi}=\begin{bmatrix} \widehat{v}^{\top} &
  \widehat{\bm{h}}^{\top}
\end{bmatrix}^{\top}$ by
\begin{align}\label{eq_compvh2}
  \widehat{\xi} =
  \underset{S_{\epsilon}\xi\le0}{\arg\min}~\frac{1}{2}\|\mathcal{X}_{\epsilon}^+-\widehat{\alpha}  
  \mathcal{X}_{\epsilon} + 
  \mathcal{Q}_{\epsilon}(\widehat{\alpha})\xi
  \|_2^2			
\end{align}

\begin{remark}\longthmtitle{Two-step minimization}
  In our approach, we first identify $\widehat{\alpha}$ by solving a
  reduced problem \eqref{eq_ModifiedJ}. Second, we bring
  $\widehat{\alpha}$ into \eqref{eq_compvh2} to compute
  $\widehat{\xi}$.  In the presence of measurement noise, it is
  possible that the minimizer of \eqref{eq_ModifiedJ} is different
  from that of \eqref{eq_obj2bn}. However, as we show below, our
  two-step approach provides a valid estimation of the system
  parameters, in the sense that the estimation error is linearly
  bounded by~$\overline\epsilon$.  \hfill$\oldsquare$
\end{remark}


\begin{algorithm*}[htb]
  \setstretch{1.2}
  \label{Algorithm_Main2}
  \caption{Scalar Optimization via Domain Partition under Noise}
  \SetAlgoLined
  \textbf{Input}  $\mathcal{X}_\epsilon^+$, $\mathcal{X}_\epsilon$ and $\mathcal{P}_\epsilon$\;	
  {Define} $\mathcal{S}=\mathcal{M}=\mathcal{Z}=\emptyset$;  $\mathcal{T}= \{1,\cdots,nT_d\}$\;
  {Initial values:} $\psi_0=0$, $\ell=0$\;
  {Initial sets:} $\mathcal{S}=\left\{i~\big|~\mathcal{X}_\epsilon^+[i]\ge\maxv\left(\mathcal{X}_\epsilon^+\right)-2\overline\epsilon,~ i\in\mathcal{T}\right\}$;  $\mathcal{Z}=\left\{i~\big|~\mathcal{X}_\epsilon^+[i]\le\overline\epsilon,~ i\in\mathcal{T}\right\}$; $\mathcal{M}=\left\{i~\big|~i\not\in\mathcal{S}\bigcup\mathcal{Z},~ i\in\mathcal{T}\right\}$ \label{step_init2}\;
  
  \While{$\psi_{\ell}<\alpha_{\max}$}
  { 
    Find the smallest $\widehat\psi>\psi_{\ell}$, such that $\displaystyle \min_{j\in\mathcal{S}} \left(\mathcal{X}_\epsilon^+ [j]- \widehat\psi\mathcal{X}_\epsilon[j]\right)\le \maxv \left(\mathcal{X}_\epsilon^+ - \widehat\psi\mathcal{X}_\epsilon\right)- 2(1+\widehat\psi)\overline\epsilon$ or $\displaystyle \max_{j\in\mathcal{M}} \left(\mathcal{X}_\epsilon^+[j] - \widehat\psi\mathcal{X}_\epsilon[j]\right)\ge\maxv \left(\mathcal{X}_\epsilon^+ - \widehat\psi\mathcal{X}_\epsilon\right)-2(1+\widehat\psi)\overline\epsilon$ or $\displaystyle\min_{j\in\mathcal{M}} \left(\mathcal{X}_\epsilon^+ [j]- \widehat\psi\mathcal{X}_\epsilon[j]\right) \le (1+\widehat\psi)\overline\epsilon$ \label{step_find2}\;
    Let $\psi_{\ell+1}=\widehat\psi$\;	
    
    Obtain $\displaystyle C_{A\ell}=C_{\epsilon}(\alpha=\frac{\psi_{\ell}+\psi_{\ell+1}}{2})$ \label{step_optstart2}\;	
    
    Compute $\mathcal{Q}_{A\ell}=\begin{bmatrix}C_{A\ell} & -\mathcal{P}_\epsilon\end{bmatrix} $ and $M_{A\ell}=I-\mathcal{Q}_{A\ell}\left(\mathcal{Q}_{A\ell}^{\top}\mathcal{Q}_{A\ell}\right)^{-1}\mathcal{Q}_{A\ell}^{\top}$ 
    \label{step_CpQ2}\;
		
		Solve $\displaystyle  \widehat{\alpha}_{A\ell}= \underset{{\alpha_{A\ell}\in(\psi_{\ell},\psi_{\ell+1})}}{\arg\minv} ~\frac{\|M_{A\ell}\left(\mathcal{X}_\epsilon^+-\alpha_{A\ell} \mathcal{X}_\epsilon\right)\|_2^2}{2}$. 
		
		Compute $\displaystyle{\mathcal{J}}(\widehat{\alpha}_{A\ell})=\frac{1}{2}{\|M_{A\ell}\left(\mathcal{X}_\epsilon^+-\widehat{\alpha}_{A\ell} \mathcal{X}_\epsilon\right)\|_2^2}$ \label{step_optend2} \;
		
		Obtain $\displaystyle C_{B\ell}=C_{\epsilon}(\alpha=\psi_{\ell+1})$ \label{step_cpstart2}\;	
		
		Compute $\mathcal{Q}_{B\ell}=\begin{bmatrix}C_{B\ell} & -\mathcal{P}_\epsilon\end{bmatrix} $ and $M_{B\ell}=I-\mathcal{Q}_{B\ell}\left(\mathcal{Q}_{B\ell}^{\top}\mathcal{Q}_{B\ell}\right)^{-1}\mathcal{Q}_{B\ell}^{\top}$\;
		
		Let $\alpha_{B\ell}=\psi_{\ell+1}$. Compute $\displaystyle{\mathcal{J}}(\widehat{\alpha}_{B\ell})=\frac{1}{2}{\|M_{B\ell}\left(\mathcal{X}_\epsilon^+-\widehat{\alpha}_{B\ell} \mathcal{X}_\epsilon\right)\|_2^2}$  \label{step_cpend2}\;
		
		Update $\mathcal{S}=\left\{i~\big|~ \left(\mathcal{X}_\epsilon^+[i] - \psi_{\ell+1}\mathcal{X}_\epsilon[i]\right)\le\maxv\left(\mathcal{X}_\epsilon^+ - \psi_{\ell+1}\mathcal{X}_\epsilon\right)-2(1+\psi_{\ell+1})\overline\epsilon,~ i\in\mathcal{T}\right\}$; 
		$\mathcal{Z}=\left\{i~\big|~ \left(\mathcal{X}_\epsilon^+[i] - \psi_{\ell+1}\mathcal{X}_\epsilon[i]\right)\le (1+\psi_{\ell+1})\overline\epsilon ,~ i\in\mathcal{T}\right\}$; $\mathcal{M}=\left\{i~\big|~i\not\in\mathcal{S}\bigcup\mathcal{Z} ,~ i\in\mathcal{T}\right\}$  \label{step_update2} 	\;
		
		$\ell=\ell+1$ \label{step_itend2}\;
	}
	\textbf{Output} $\displaystyle \widehat{\alpha}=  \underset{\alpha\in\{\widehat{\alpha}_{A\ell}\}\bigcup\{\widehat{\alpha}_{B\ell}\}}{\arg\minv} ~    \mathcal{J}(\alpha)$
\end{algorithm*}

\medskip
\begin{theorem}\longthmtitle{Properties of
    Algorithm~\ref{Algorithm_Main2}}\label{TH_alg2}
  Suppose Assumption \ref{AS_Unique} holds for $E_{\epsilon}(\alpha)$,
  $\mathcal{X}_{\epsilon}$ and $\mathcal{P}_{\epsilon}$. Further
  assume that the smallest eigenvalue of the matrix
  $\begin{bmatrix}
     \mathcal{X}_\epsilon
     &
       \mathcal{P}_\epsilon
   \end{bmatrix}^{\top}
   \left(I-E_\epsilon({\alpha}^{\star})^2\right)
   \left(I-E_\epsilon({\widehat\alpha})^2\right)
   \begin{bmatrix}
     \mathcal{X}_\epsilon
     &
       \mathcal{P}_\epsilon
   \end{bmatrix}$
   is lower bounded by $\lambda^2_{\min}>0$.
   Then, the Algorithm \ref{Algorithm_Main2} has the following properties:
   \begin{enumerate}[label=\textbf{\alph*}.]
   \item $\textit{[Minimizer]}$ The output $\widehat{\alpha}$ is the
     minimizer to problem \eqref{eq_ModifiedJ}.
   \item $\textit{[Complexity]}$ Algorithm \ref{Algorithm_Main2}
     terminates in at most $3nT_d+1$ number of iterations. The
     computational complexity of the algorithm is
     $\mathcal{O}(nT_d)^{3.34}$, where $n$ is the number of
     system nodes and $T_d$ is the number of sampled data.
   \item $\textit{[Identification]}$ Given $\widehat{\alpha}$
     obtained from Algorithm \ref{Algorithm_Main2}, one can
     compute
     $\widehat{\xi}=
     \begin{bmatrix}
       \widehat{v}^{\top}
       &
         \widehat{\bm{h}}^{\top}
     \end{bmatrix}^{\top}$ by solving \eqref{eq_compvh2}. 
     Let ${\alpha}^{\star}$ and ${\bm{h}}^{\star}$ be the true
     parameters of system \eqref{eq_dstmodel}. Then, there exists a
     constant $\zeta>0$ such that
     \begin{align}\label{eq_alphaherr}
       \left\|
       \begin{bmatrix}
         \widehat{\alpha}-\alpha^{\star}
         \\
         \widehat{\bm{h}}-\bm{h}^{\star}
       \end{bmatrix}
       \right\|_2
       \le \zeta~ \overline\epsilon
     \end{align}
     where $\overline\epsilon$ is defined in
     Assumption~\ref{AS_Err}. Furthermore, $s_D$ can be estimated by
     \begin{align}\label{eq_sdnoise}
       s_D=
       \frac{1}{|\mathcal{S}(\widehat{\alpha})|}
       \sum_{i\in\mathcal{S}(\widehat{\alpha})}\left(\mathcal{X}_\epsilon^+ 
       - \widehat{\alpha}\mathcal{X}_\epsilon\right)[i] 
     \end{align}
     where
     $\mathcal{S}(\widehat{\alpha}) \!=\!
     \Big\{i~\big|\left(\mathcal{X}_\epsilon^+[i] -
       \widehat{\alpha}\mathcal{X}_\epsilon[i]\right)\ge\maxv\left(\mathcal{X}_\epsilon^+
       - \widehat{\alpha}\mathcal{X}_\epsilon\right) -
     2(1+\widehat\alpha)\overline\epsilon,~ i\in\mathcal{T}\Big\}$ is
     the set of the entries of
     $\left(\mathcal{X}_\epsilon^+ -
       \widehat{\alpha}\mathcal{X}_\epsilon\right)$ that reach the
     upper saturation threshold.
   \end{enumerate}
 \end{theorem}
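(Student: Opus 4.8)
For parts \textbf{a} and \textbf{b}, the plan is to reuse the scheme of Theorem~\ref{TH_alg1} with the band-relaxed classification of \eqref{eq_defcEnoise}. For \textbf{a}, I would note that $M_\epsilon(\alpha)$ depends on $\alpha$ only through the diagonal matrix $E_\epsilon(\alpha)$, which is piecewise constant and changes exactly at the values $\psi_\ell$ detected in step~\ref{step_find2}, where one component of $(\mathcal{X}_\epsilon^+-\alpha\mathcal{X}_\epsilon)$ crosses one of the three noise-widened thresholds. On each open interval $(\psi_\ell,\psi_{\ell+1})$ the matrix $M_\epsilon(\alpha)$ is constant, so $\mathcal{J}(\alpha)$ reduces to a parabola whose interval minimizer is found in closed form (steps~\ref{step_optstart2}--\ref{step_optend2}), while the breakpoints $\psi_{\ell+1}$, on which $E_\epsilon$ takes a third value, are handled separately (steps~\ref{step_cpstart2}--\ref{step_cpend2}); comparing the finite candidate list returns the global minimizer of \eqref{eq_ModifiedJ}. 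For the iteration count in \textbf{b}, writing $g_i(\alpha)=\mathcal{X}_\epsilon^+[i]-\alpha\mathcal{X}_\epsilon[i]$, membership in $\mathcal{Z}$ is governed by the affine inequality $g_i(\alpha)\le(1+\alpha)\overline\epsilon$, whose solution is a half-line, giving at most one transition per component and hence $nT_d$ critical points; membership in $\mathcal{S}$ is governed by $\maxv(\mathcal{X}_\epsilon^+-\alpha\mathcal{X}_\epsilon)-g_i(\alpha)\le 2(1+\alpha)\overline\epsilon$, whose left-hand side is convex and right-hand side affine, so this sublevel set is a single interval and each component enters and leaves $\mathcal{S}$ at most once, contributing $2nT_d$ points. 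With the initialization $\psi_0=0$ this gives the $3nT_d+1$ bound, and the per-iteration cost is dominated, as in Theorem~\ref{TH_alg1}\textbf{b}, by one inverse and products of matrices of size at most $nT_d$, i.e.\ $\mathcal{O}(nT_d)^{2.34}$ via Coppersmith--Winograd, for an overall $\mathcal{O}(nT_d)^{3.34}$.

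For \textbf{c}, the governing idea is that the band widths in \eqref{eq_defcEnoise} are calibrated exactly to the worst-case perturbation $(\mathcal{X}_\epsilon^+-\alpha\mathcal{X}_\epsilon)[i]=(\mathcal{X}_d^+-\alpha\mathcal{X}_d)[i]+(\epsilon_{x^+}-\alpha\epsilon_x)[i]$, which under Assumption~\ref{AS_Err} is bounded by $(1+\alpha)\overline\epsilon$. I would first show that any row classified as linear, $E_\epsilon(\alpha^\star)[i,i]=0$, is genuinely in the linear region of the true system, so that there $(\mathcal{X}_d^+-\alpha^\star\mathcal{X}_d)[i]=(\mathcal{P}_d\bm h^\star)[i]$ holds with no clipping. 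Since $\mathcal{X}_\epsilon,\mathcal{P}_\epsilon$ differ from $\mathcal{X}_d,\mathcal{P}_d$ by $\mathcal{O}(\overline\epsilon)$, it follows that $(\mathcal{X}_\epsilon^+-\alpha^\star\mathcal{X}_\epsilon-\mathcal{P}_\epsilon\bm h^\star)[i]=\mathcal{O}(\overline\epsilon)$ on the linear rows, while on the saturated rows the columns of $C_\epsilon(\alpha^\star)$, each a signed standard basis vector, absorb the residual exactly. Consequently $\|M_\epsilon(\alpha^\star)(\mathcal{X}_\epsilon^+-\alpha^\star\mathcal{X}_\epsilon)\|=\mathcal{O}(\overline\epsilon)$, so $\mathcal{J}(\alpha^\star)=\mathcal{O}(\overline\epsilon^2)$, and by optimality $\mathcal{J}(\widehat\alpha)\le\mathcal{J}(\alpha^\star)=\mathcal{O}(\overline\epsilon^2)$.

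With both residuals $\bm r_0:=M_\epsilon(\alpha^\star)(\mathcal{X}_\epsilon^+-\alpha^\star\mathcal{X}_\epsilon)$ and $\bm r_1:=M_\epsilon(\widehat\alpha)(\mathcal{X}_\epsilon^+-\widehat\alpha\mathcal{X}_\epsilon)$ of size $\mathcal{O}(\overline\epsilon)$, I would replay the subtraction in the proof of Proposition~\ref{prop:unique}\textbf{b} while retaining the now-nonzero right-hand side, obtaining $(\widehat\alpha-\alpha^\star)\mathcal{X}_\epsilon+\mathcal{Q}_\epsilon(\widehat\alpha)\theta_\epsilon(\widehat\alpha)-\mathcal{Q}_\epsilon(\alpha^\star)\theta_\epsilon(\alpha^\star)=\bm r_0-\bm r_1$, where $\theta_\epsilon(\alpha)=(\mathcal{Q}_\epsilon(\alpha)^\top\mathcal{Q}_\epsilon(\alpha))^{-1}\mathcal{Q}_\epsilon(\alpha)^\top(\mathcal{X}_\epsilon^+-\alpha\mathcal{X}_\epsilon)$. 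Left-multiplying by the commuting projectors and using the analog of \eqref{eq_EQT}, which annihilates the $C_\epsilon$ block, reduces this to $(I-E_\epsilon(\alpha^\star)^2)(I-E_\epsilon(\widehat\alpha)^2)\begin{bmatrix}\mathcal{X}_\epsilon&\mathcal{P}_\epsilon\end{bmatrix}\begin{bmatrix}\widehat\alpha-\alpha^\star\\\widetilde\theta_\epsilon(\alpha^\star)-\widetilde\theta_\epsilon(\widehat\alpha)\end{bmatrix}=\mathcal{O}(\overline\epsilon)$. The assumed lower bound $\lambda_{\min}^2$ on the smallest eigenvalue of $\begin{bmatrix}\mathcal{X}_\epsilon&\mathcal{P}_\epsilon\end{bmatrix}^\top(I-E_\epsilon(\alpha^\star)^2)(I-E_\epsilon(\widehat\alpha)^2)\begin{bmatrix}\mathcal{X}_\epsilon&\mathcal{P}_\epsilon\end{bmatrix}$ then inverts this relation (note the outer projector is idempotent, so this Gram matrix equals $A^\top A$ for $A$ the displayed product), yielding $|\widehat\alpha-\alpha^\star|=\mathcal{O}(\overline\epsilon)$ and $\|\widetilde\theta_\epsilon(\alpha^\star)-\widetilde\theta_\epsilon(\widehat\alpha)\|=\mathcal{O}(\overline\epsilon)$.

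To close \textbf{c}, I would note that at $\alpha^\star$ the unconstrained fit recovers $\widetilde\theta_\epsilon(\alpha^\star)=\bm h^\star+\mathcal{O}(\overline\epsilon)$, since on the linear rows $\mathcal{P}_\epsilon\bm h=\mathcal{X}_\epsilon^+-\alpha^\star\mathcal{X}_\epsilon+\mathcal{O}(\overline\epsilon)$ and the full column rank guaranteed by Assumption~\ref{AS_Unique} makes the least-squares solution for $\bm h$ stable; combined with $\widehat{\bm h}=\widetilde\theta_\epsilon(\widehat\alpha)$ the triangle inequality gives $\|\widehat{\bm h}-\bm h^\star\|=\mathcal{O}(\overline\epsilon)$, which together with the $\alpha$-bound proves \eqref{eq_alphaherr} with a suitable $\zeta$. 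The estimate \eqref{eq_sdnoise} follows because each row in $\mathcal{S}(\widehat\alpha)$ equals $s_D$ up to $\mathcal{O}(\overline\epsilon)$, so their average is $s_D+\mathcal{O}(\overline\epsilon)$. The main obstacle is not the final eigenvalue inversion but the uniform control of the right-hand side: one must (i) certify via the band argument that noisy-linear rows are truly linear, (ii) reconcile the mismatch between $C_\epsilon(\alpha^\star)$ and the true saturation pattern so that every accumulated term stays $\mathcal{O}(\overline\epsilon)$, and (iii) justify $\widehat{\bm h}=\widetilde\theta_\epsilon(\widehat\alpha)$ under the constraint $S_\epsilon\widehat\xi\le0$ of \eqref{eq_compvh2}; since $\bm h^\star$ is feasible and the constrained objective is strongly convex with minimizer within $\mathcal{O}(\overline\epsilon)$ of $\bm h^\star$, the constrained solution stays $\mathcal{O}(\overline\epsilon)$-close to $\bm h^\star$ for $\overline\epsilon$ small, even when the Dale constraint is active.
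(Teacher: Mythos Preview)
Your treatment of parts \textbf{a} and \textbf{b} matches the paper's proof essentially verbatim: the partition argument, the convex-epigraph/affine-hypograph count giving $2nT_d+nT_d$ breakpoints, and the Coppersmith--Winograd cost per iteration are exactly what the paper does.

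For part \textbf{c}, your core machinery (band calibration so that truly saturated rows are always flagged by $E_\epsilon(\alpha^\star)$, the bound $\mathcal{J}(\widehat\alpha)\le\mathcal{J}(\alpha^\star)=\mathcal{O}(\overline\epsilon^2)$, the projected-subtraction identity, and the inversion via $\lambda_{\min}$) coincides with the paper's. The genuine difference is in how you bridge from the scalar result to the constrained estimate $\widehat{\bm h}$. You route through the \emph{unconstrained} least-squares solutions: first $\widetilde\theta_\epsilon(\widehat\alpha)\approx\widetilde\theta_\epsilon(\alpha^\star)$ via $\lambda_{\min}$, then $\widetilde\theta_\epsilon(\alpha^\star)\approx\bm h^\star$ via stability of the pseudoinverse, and finally $\widehat{\bm h}\approx\widetilde\theta_\epsilon(\widehat\alpha)$ via strong convexity of the constrained problem. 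The paper instead works directly with $\widehat\xi$ and $\xi^\star$: it builds an explicit feasible comparison point $\bar\xi$ (the $v$-block of $\theta_\epsilon(\widehat\alpha)$ glued to $\bm h^\star$), so $S_\epsilon\bar\xi\le 0$ trivially, bounds $\mathcal{J}_0(\widehat\alpha,\bar\xi)$ by $\eta\|\Gamma(\bm\epsilon)\|^2$, and then lower-bounds $\mathcal{J}_0(\widehat\alpha,\widehat\xi)$ by left-multiplying with $(I-E_\epsilon(\alpha^\star)^2)(I-E_\epsilon(\widehat\alpha)^2)$, which annihilates both $C_\epsilon$ blocks and leaves exactly $\begin{bmatrix}\mathcal{X}_\epsilon&\mathcal{P}_\epsilon\end{bmatrix}\begin{bmatrix}\widehat\alpha-\alpha^\star\\\widehat{\bm h}-\bm h^\star\end{bmatrix}$ in one shot. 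The paper's route is more direct: $\lambda_{\min}$ controls $\widehat{\bm h}-\bm h^\star$ in a single inequality, whereas your route chains three perturbation bounds and implicitly needs separate control of $\sigma_{\min}(\mathcal{Q}_\epsilon(\alpha^\star))$ and $\sigma_{\min}(\mathcal{Q}_\epsilon(\widehat\alpha))$ in addition to $\lambda_{\min}$. These are available under Assumption~\ref{AS_Unique}, so your argument closes, but the constant $\zeta$ you obtain depends on more quantities than the paper's. Your obstacle (iii), where you write ``$\widehat{\bm h}=\widetilde\theta_\epsilon(\widehat\alpha)$'', is the one place where the sketch is loose: that equality is false when the Dale constraint binds, and making your strong-convexity fix rigorous in fact requires exactly the feasible comparison point $\bar\xi$ that the paper constructs explicitly.
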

 
 
 \proof The argument to establish the properties of
 Algorithm~\ref{Algorithm_Main2} follows a similar path to that of
 Algorithm~\ref{Algorithm_Main} by introducing the critical points
 $\psi_{\ell}$ to determine the domains where $C_{\epsilon}(\alpha)$
 is constant, and then solve the piece-wise optimization problem on
 each domain. Our key efforts aims to prove statement \textbf{c}.

 \noindent
 \textbf{a}.
 The proof of Theorem \ref{TH_alg2} \textbf{a} is a direct
 generalization of Theorem \ref{TH_alg1} \textbf{a}, and is omitted
 for brevity.

\smallskip
\noindent
\textbf{b}. 
In Algorithm~\ref{Algorithm_Main2}, $\psi_{\ell}$ are still defined as
the points where the sets $\mathcal{S}$, $\mathcal{M}$, $\mathcal{Z}$
change their elements. But due to the modified definition for matrix
$C_{\epsilon}(\alpha)$, we have a slightly different way to determine
the values for $\psi_{\ell}$. As a consequence, the upper bound for
the number of $\psi_{\ell}$ changes from $2nT_d$ to $3nT_d$.  To
justify this change, consider the coordinate plane shown in
Fig. \ref{Fig_Alg2}, with $\alpha\in(0,\alpha_{\max}]$ being the
horizontal axis. We use the epigraph of
$\maxv \left(\mathcal{X}_\epsilon^+ -
  \alpha\mathcal{X}_\epsilon\right)- 2(1+\alpha)\overline\epsilon$ to
characterize the upper threshold of
$\left(\mathcal{X}_\epsilon^+-\alpha \mathcal{X}_\epsilon\right)[i]$;
and the hypograph of $(1+\alpha)\overline\epsilon$ to characterize the
lower threshold of
$\left(\mathcal{X}_\epsilon^+-\alpha \mathcal{X}_\epsilon\right)[i]$.
Note that a new $\psi_{\ell}$ is generated, when the lines
$\left(\mathcal{X}_\epsilon^+-\alpha \mathcal{X}_\epsilon\right)[i]$,
$i\in\until{nT_d}$ intersect with these two areas. For the epigraph of
$\maxv \left(\mathcal{X}_\epsilon^+ -
  \alpha\mathcal{X}_\epsilon\right)- 2(1+\alpha)\overline\epsilon$,
since it is convex, each line can intersect this area at most twice,
thus, for $i\in\until{nT_d}$ one has at most $2nT_d$ number of
$\psi_{\ell}$. For the hypograph of $(1+\alpha)\overline\epsilon$,
since $(1+\alpha)\overline\epsilon$ increases with $\alpha$ and
$\left(\mathcal{X}_\epsilon^+-\alpha \mathcal{X}_\epsilon\right)[i]$
decreases with $\alpha$, for $i\in\until{nT_d}$ one has at most $nT_d$
number of $\psi_{\ell}$. Furthermore, the algorithm starts with
$\psi_{0}=0$, which requires an extra round of execution. By putting
them together, Algorithm \ref{Algorithm_Main2} terminates in at most
$3nT_d+1$ number of iterations.

For the computational complexity, according to the proof of Theorem
\ref{TH_alg1} \textbf{b}, each step in Algorithm \ref{Algorithm_Main2}
has a complexity no larger than $\mathcal{O}(nT_d)^{2.34}$. Since
Algorithm \ref{Algorithm_Main2} requires no more than $3nT_d+1$ number
of iterations, its overall computational complexity is
$\mathcal{O}(nT_d)^{3.34}$.

\begin{figure}[h!]
  \vspace{-0.1cm}
  \centering
  \includegraphics[width=6.5 cm]{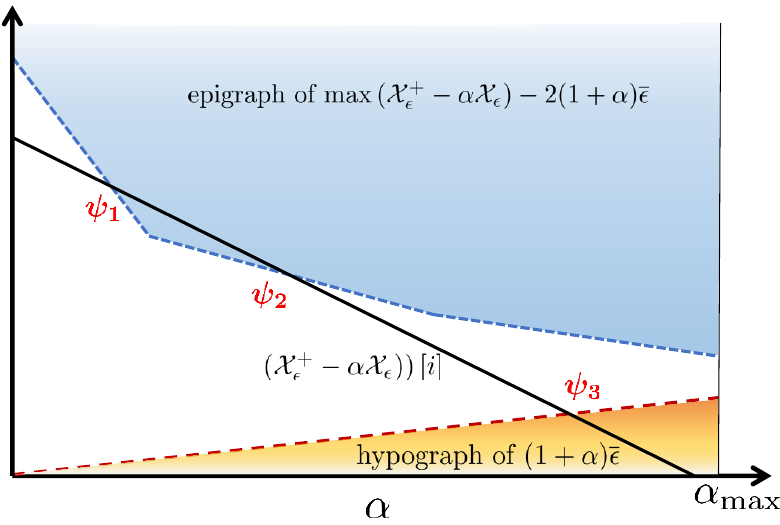}
  \caption{Each
    $\big(\mathcal{X}_\epsilon^+-\alpha \mathcal{X}_\epsilon\big)[i]$
    intersects at most twice with the epigraph of
    $\maxv \big(\mathcal{X}_\epsilon^+ -
    \alpha\mathcal{X}_\epsilon\big)- 2(1+\alpha)\overline\epsilon$ and
    one time with the hypograph of $(1+\alpha)\overline\epsilon$.}
  \label{Fig_Alg2}
\end{figure}

\smallskip
\noindent
\textbf{c}.  
From the system model, since ${\alpha}^{\star}$ and $\bm{h}^{\star}$
are associated with the true parameters of the system, there exists
${v}^{\star}$ such that
\begin{align}\label{eq_aplhavhstarnoise}
  \mathcal{X}^+-{\alpha}^{\star}
  \mathcal{X}+C_\epsilon({\alpha}^{\star}){v}^{\star}-\mathcal{P}{\bm{h}^{\star}}=0,
\end{align}
where $\mathcal{X}^+$, $\mathcal{X}$, $\mathcal{P}$ are data without
measurement noise.  By the definitions of $\mathcal{X}_\epsilon^+$,
$\mathcal{X}_\epsilon$, and $\mathcal{P}_\epsilon$, there holds
\begin{align}\label{eq_noiseequality}
  \left(\mathcal{X}_\epsilon^+\!-\!\epsilon_{\mathcal{X}^+}\right)\!-\!{\alpha}^{\star}
  \left(\mathcal{X}_\epsilon-\epsilon_{\mathcal{X}}\right)\!+\!C_\epsilon({\alpha}^{\star}){v}^{\star}\!-\!\left(\mathcal{P}_\epsilon-\epsilon_{\mathcal{P}}\right){\bm{h}^{\star}}=0 ,
\end{align}
where $\epsilon_{\mathcal{P}}=\mathcal{P}-\mathcal{P}_\epsilon$, and
from Assumption \ref{AS_Err}, we know
$|\epsilon_{\mathcal{P}}|_{\infty}\le\overline\epsilon$.
Equation \eqref{eq_noiseequality} yields
\begin{align}\label{eq_XXQxi}
  \mathcal{X}_\epsilon^+ -{\alpha}^{\star} \mathcal{X}_\epsilon
  &=\epsilon_{\mathcal{X}^+} -{\alpha}^{\star}\epsilon_{\mathcal{X}}-C_\epsilon({\alpha}^{\star}){v}^{\star}+\left(\mathcal{P}_\epsilon-\epsilon_{\mathcal{P}}\right){\bm{h}^{\star}}\nonumber\\
  &=-Q_{\epsilon}({\alpha}^{\star})\xi^{\star}+\Gamma(\bm{\epsilon})
\end{align}
where
$\Gamma(\bm{\epsilon})\triangleq\epsilon_{\mathcal{X}^+}-
{\alpha}^{\star}\epsilon_{\mathcal{X}}-\epsilon_{\mathcal{P}}{\bm{h}^{\star}}$
and
$\bm{\epsilon}=\col\{\epsilon_{\mathcal{X}^+}, \epsilon_{\mathcal{X}},
\epsilon_{\mathcal{P}}\}$.  Now, since
$M_{\epsilon}(\alpha^{\star})Q_{\epsilon}({\alpha}^{\star})=0$ and
$\|M_{\epsilon}(\alpha^{\star})\|_2\le1$, one has
\begin{align}\label{eq_Jastar}
  \mathcal{J}(\alpha^{\star})&=\frac{\|M_{\epsilon}(\alpha^{\star})(\mathcal{X}_{\epsilon}^+-\alpha^{\star}\mathcal{X}_{\epsilon})\|_2^2}{2}\nonumber\\
                             &=\frac{\|M_{\epsilon}(\alpha^{\star})\Gamma(\bm{\epsilon})\|_2^2}{2}\le\frac{\|\Gamma(\bm{\epsilon})\|_2^2}{2}.
\end{align}
Recall that $\widehat{\alpha}$ is the minimizer to problem
\eqref{eq_ModifiedJ} and thus
\begin{align}\label{eq_JleJ}
  \mathcal{J}(\widehat\alpha)\le \mathcal{J}(\alpha^{\star})\le\frac{\|\Gamma(\bm{\epsilon})\|_2^2}{2}.
\end{align}
Since $I-E_{\epsilon}({\alpha}^{\star})^2$ and
$I-E_{\epsilon}({\widehat\alpha})^2$ are diagonal matrices with only
$1$ or $0$ entries, one has
\begin{align}\label{eq_IEIEnorm}
  &\frac{\|\left(I-E_{\epsilon}({\alpha}^{\star})^2\right)\left(I-E_{\epsilon}({\widehat\alpha})^2\right)M_{\epsilon}(\widehat\alpha)(\mathcal{X}_{\epsilon}^+\!-\widehat\alpha\mathcal{X}_{\epsilon})\|_2^2}{2}\nonumber\\
  &~~~\le  \frac{\|M_{\epsilon}(\widehat\alpha)(\mathcal{X}_{\epsilon}^+\!-\widehat\alpha\mathcal{X}_{\epsilon})\|_2^2}{2}
    =\mathcal{J}(\widehat\alpha)\le \frac{\|\Gamma(\bm{\epsilon})\|_2^2}{2}.
\end{align}
Define
$\theta_\epsilon(\alpha)=\left(\mathcal{Q}_\epsilon(\alpha)^{\top}
  \mathcal{Q}_\epsilon(\alpha)\right)^{-1}\mathcal{Q}_\epsilon(\alpha)^{\top}\left(\mathcal{X}_\epsilon^+-\alpha
  \mathcal{X}_\epsilon\right)$.  From the definition of
$M_{\epsilon}(\alpha)$,
\begin{align}\label{eq_MXAX}
  &M_{\epsilon}(\widehat\alpha)(\mathcal{X}_{\epsilon}^+-\widehat\alpha\mathcal{X}_{\epsilon})=(\mathcal{X}_{\epsilon}^+-\widehat\alpha\mathcal{X}_{\epsilon})-Q_{\epsilon}(\widehat\alpha)\theta_{\epsilon}(\widehat\alpha)\nonumber\\
  &=(\alpha^{\star}-\widehat\alpha)\mathcal{X}_{\epsilon}+Q_{\epsilon}(\widehat\alpha)\theta_{\epsilon}(\widehat\alpha)-Q_{\epsilon}({\alpha}^{\star})\xi^{\star}+\Gamma(\bm{\epsilon}) .
\end{align}
The last equality is obtained by introducing equation
\eqref{eq_XXQxi}. Now, due to the fact that
$\left(I-E_{\epsilon}({\alpha})^2\right)\!C_{\epsilon}({\alpha}) =
\left(I-C_{\epsilon}({\alpha})C_{\epsilon}({\alpha})^{\top}\right)C_{\epsilon}({\alpha})
= C_{\epsilon}({\alpha})-C_{\epsilon}({\alpha})I=0$, similar to the
derivation in \eqref{eq_EQT}, one has
\begin{align}\label{eq_IEIEMXAX}
  &\left(I-E_{\epsilon}({\alpha}^{\star})^2\right)\left(I-E_{\epsilon}({\widehat\alpha})^2\right)M_{\epsilon}(\widehat\alpha)(\mathcal{X}_{\epsilon}^+\!-\widehat\alpha\mathcal{X}_{\epsilon})\nonumber\\
  =&\left(I-E_{\epsilon}({\alpha}^{\star})^2\right)\left(I-E_{\epsilon}({\widehat\alpha})^2\right)\Big(\mathcal{X}_{\epsilon}(\alpha^{\star}-\widehat\alpha)+\mathcal{P}_{\epsilon}\widetilde\theta_{\epsilon}(\widehat\alpha)\nonumber\\
  &~~~~~~\qquad\qquad\qquad\qquad\qquad\qquad-\mathcal{P}_{\epsilon}\widetilde\xi^{\star}+\Gamma(\bm{\epsilon})\Big)\nonumber\\
  =&\left(I-E_{\epsilon}({\alpha}^{\star})^2\right)\left(I-E_{\epsilon}({\widehat\alpha})^2\right)
     \left(\begin{bmatrix}
         \mathcal{X}_{\epsilon}& \mathcal{P}_{\epsilon}
	\end{bmatrix}\begin{bmatrix}
          \alpha^{\star}-\widehat\alpha\\
          \widetilde
          \theta_{\epsilon}(\widehat\alpha)-\widetilde\xi^{\star}
	\end{bmatrix}\right)\nonumber\\
	&~~~~+\left(I-E_{\epsilon}({\alpha}^{\star})^2\right)\left(I-E_{\epsilon}({\widehat\alpha})^2\right)\Gamma(\bm{\epsilon}) ,
\end{align}
where
$\widetilde\theta_{\epsilon}(\widehat\alpha)
=
\begin{bmatrix}
  \bm{0}_{d(\widehat\alpha)\times n(n+m-1)} & I_{n(n+m-1)}
\end{bmatrix} \theta_{\epsilon}(\widehat\alpha)$ and
$\widetilde\xi^{\star}=
\begin{bmatrix}
  \bm{0}_{d(\alpha^{\star})\times n(n+m-1)} &  I_{n(n+m-1)}
\end{bmatrix} \xi^{\star}$.
Bringing this back to \eqref{eq_IEIEnorm}, since $\|\left(I-E_{\epsilon}({\alpha}^{\star})^2\right)\left(I-E_{\epsilon}({\widehat\alpha})^2\right)\|_2\le1$, there holds 
\begin{align}\label{eq_IEIEXP}
  &\left\|\left(I-E_{\epsilon}({\alpha}^{\star})^2\right)\left(I-E_{\epsilon}({\widehat\alpha})^2\right)
    \begin{bmatrix}
      \mathcal{X}_{\epsilon}
      & \mathcal{P}_{\epsilon}
    \end{bmatrix}
    \begin{bmatrix}
      \alpha^{\star}-\widehat\alpha\\
      \widetilde\theta_{\epsilon}(\widehat\alpha)-\widetilde\xi^{\star}
    \end{bmatrix}\right\|_2^2\nonumber\\
  &\qquad\qquad\qquad\qquad\qquad\qquad\qquad\qquad\le 2 \|\Gamma(\bm{\epsilon})\|_2^2.
\end{align}

In \eqref{eq_compvh2},
$\widehat{\xi}=
\begin{bmatrix}
  \widehat{v}^{\top}
  &
    \widehat{\bm{h}}^{\top}
\end{bmatrix}^{\top}$
is obtained based on $\alpha=\widehat{\alpha}$. Note that,
$\widehat\alpha$, $\widehat{\xi}$ is not a set of minimizers to
$\mathcal{J}_0(\alpha,{\xi})$ in \eqref{eq_obj2bn}, as
$\widehat\alpha$ is pre-determined by minimizing
$\mathcal{J}(\alpha)$ in \eqref{eq_ModifiedJ}. However, given
$\widehat\alpha$ is fixed, since \eqref{eq_compvh2} minimizes
$\mathcal{J}_0(\alpha=\widehat{\alpha},{\xi})$ w.r.t. $\xi$, we have 
$\mathcal{J}_0(\widehat\alpha,\widehat{\xi})\le
\mathcal{J}_0(\widehat\alpha,\xi)$,
for any vector $\xi$ that satisfies the constraint
$S_{\epsilon}\xi\le0$.  Now, consider
\begin{align*}
  \overline\xi=\begin{bmatrix}
    I_{d(\widehat\alpha)} &  \\
    &  \bm{0}_{n(n+m-1)}
  \end{bmatrix} \theta_{\epsilon}(\widehat\alpha) + \begin{bmatrix}
    \bm{0}_{d(\alpha)^{\star}} &  \\
    &  I_{n(n+m-1)}
  \end{bmatrix} \xi^{\star}.
\end{align*}
Given $
S_{\epsilon}=\begin{bmatrix}
  \bm{0} & D
\end{bmatrix}
$, one has $S_{\epsilon} \overline\xi=S_{\epsilon}\xi^{\star}$. Since
$\xi^{\star}$ is the true parameter of the model,
$S_{\epsilon}\xi^{\star}\le0$ inherently holds. Thus,
$S_{\epsilon}\overline\xi\le0$ and
\begin{align}\label{eq_J0J0}
  \mathcal{J}_0(\widehat\alpha,\widehat{\xi})\le \mathcal{J}_0(\widehat\alpha,\overline\xi).
\end{align}

For the right-hand side of \eqref{eq_J0J0}, by definition,
\begin{align*}
  \mathcal{J}_0(\widehat\alpha,\overline\xi)=\frac{1}{2}\left\|\mathcal{X}_\epsilon^+-\widehat{\alpha}
  \mathcal{X}_\epsilon+Q_\epsilon(\widehat{\alpha})\overline\xi\right\|_2^2 
\end{align*}
where
\begin{align*}
  &\mathcal{X}_\epsilon^+-\widehat{\alpha}
    \mathcal{X}_\epsilon+Q_\epsilon(\widehat{\alpha})\overline\xi
  \\
  =\
  & \left(\mathcal{X}_\epsilon^+-\widehat{\alpha}
    \mathcal{X}_\epsilon+Q_\epsilon(\widehat{\alpha}){\theta_\epsilon(\widehat\alpha)}
    \right)  
    +Q_\epsilon(\widehat{\alpha})\left(\overline\xi -
    {\theta_\epsilon(\widehat\alpha)}  \right). 
\end{align*}
Thus,
\begin{align}\label{eq_J0JQxitheta}
  \mathcal{J}_0(\widehat\alpha,\overline\xi)\le 2\mathcal{J}(\widehat\alpha)+{
  \| Q_\epsilon(\widehat{\alpha})\left(\overline\xi - {\theta_\epsilon(\widehat\alpha)}  \right) \|_2^2  }
\end{align}
where
$2\mathcal{J}(\widehat\alpha) \le \|\Gamma(\bm{\epsilon})\|_2^2$.
Furthermore, by  $\widetilde\theta_{\epsilon}(\widehat\alpha)$ and
$\widetilde\xi^{\star}$ in \eqref{eq_IEIEMXAX}, there holds
\begin{align*}
  &Q_\epsilon(\widehat{\alpha})\left(\overline\xi - {\theta_\epsilon(\widehat\alpha)}  \right)\\
  =\ &Q_\epsilon(\widehat{\alpha})\!\left(\begin{bmatrix}
      \bm{0}_{d(\widehat\alpha)} &  \\
      &  \!\!\!\!\!I_{n(n+m-1)}
    \end{bmatrix} \!\theta_{\epsilon}(\widehat\alpha) \!+\! \begin{bmatrix}
      \bm{0}_{d(\alpha^{\star})} &  \\
      & \!\!\!\!\!\! I_{n(n+m-1)}
    \end{bmatrix}\! \xi^{\star} \! \right)\\
  =\ &
       \mathcal{P}_\epsilon
       \left(\widetilde\theta_{\epsilon}(\widehat\alpha)-\widetilde\xi^{\star}\right) .
\end{align*}
From \eqref{eq_IEIEXP}, we know
$\left\|\widetilde\theta_{\epsilon}(\widehat\alpha)-\widetilde\xi^{\star}\right\|_2^2$
is linearly bounded by $\|\Gamma(\bm{\epsilon})\|_2^2$. Thus, there
must exist $\eta>0$ such that
\begin{align}\label{eq_JAXIH}
  \mathcal{J}_0(\widehat\alpha,\overline\xi) \le
  \eta\|\Gamma(\bm{\epsilon})\|_2^2. 
\end{align}

For the left-hand side of \eqref{eq_J0J0}, by definition,
\begin{align}\label{eq_JnoiseM}
  \mathcal{J}_0(\widehat\alpha,\widehat{\xi})=\frac{1}{2}\left\|\mathcal{X}_\epsilon^+-{\widehat\alpha}
  \mathcal{X}_\epsilon+C_\epsilon({\widehat\alpha})\widehat{v}-\mathcal{P}_\epsilon{\widehat{\bm{h}}}\right\|_2^2 .
\end{align}	
By introducing equation \eqref{eq_noiseequality} and
$\Gamma(\bm{\epsilon})=\epsilon_{\mathcal{X}^+}-
{\alpha}^{\star}\epsilon_{\mathcal{X}}-\epsilon_{\mathcal{P}}{\bm{h}^{\star}}$
into \eqref{eq_JnoiseM}, there holds
\begin{align*}	
  \mathcal{J}_0(\widehat\alpha,\widehat{\xi})\!=\!\frac{1}{2}\!\left\|
  \text{\footnotesize $\begin{bmatrix}
                         -\mathcal{X}_\epsilon & \!\!\!-\mathcal{P}_\epsilon
    \end{bmatrix}\!\begin{bmatrix}
      \widehat\alpha\!-\!\alpha^{\star}\\
      \widehat{\bm{h}}\!-\!\bm{h}^{\star}
    \end{bmatrix}
  \!-\!C_\epsilon({\alpha}^{\star}){v}^{\star}
  \!\!+\!C_\epsilon({\alpha}){v} \!+\! \Gamma(\bm{\epsilon})$}
  \right\|_2^2 .
\end{align*}
Recalling the facts that $\|I-E_\epsilon({\alpha})^2\|_2^2\le1$,
$I-E_\epsilon({\alpha})^2$ is diagonal, and
$\left(I-E_\epsilon({\alpha})^2\right)C_\epsilon({\alpha})=0$, then
\begin{align}\label{eq_Jnoise2M}
  &\mathcal{J}_0(\widehat\alpha,\widehat{\xi})
    \ge 
    \text{\footnotesize$
    \frac{1}{2}\Big\|\left(I-E_\epsilon({\alpha}^{\star})^2\right)\left(I-E_\epsilon({\widehat\alpha})^2\right)$}\nonumber\\	&~~~~~~~~~~~~\text{\footnotesize $\left(\begin{bmatrix}
        -\mathcal{X}_\epsilon & \!\!\!-\mathcal{P}_\epsilon
      \end{bmatrix}\!\begin{bmatrix}
        \widehat\alpha\!-\!\alpha^{\star}\\
        \widehat{\bm{h}}\!-\!\bm{h}^{\star}
      \end{bmatrix}
  \!-\!C_\epsilon({\alpha}^{\star}){v}^{\star} \!\!+\!C_\epsilon({\alpha}){v} \!+\! \Gamma(\bm{\epsilon})\right)\Big\|_2^2$}\nonumber\\
  =&
     \text{\footnotesize $\frac{1}{2}\left\|\left(I-E_\epsilon({\alpha}^{\star})^2\right)\!\!\left(I-E_\epsilon({\widehat\alpha})^2\right)\!\!\left(\!\begin{bmatrix}
             -\mathcal{X}_\epsilon & \!\!-\mathcal{P}_\epsilon
           \end{bmatrix}\!\begin{bmatrix}
             \widehat\alpha\!-\!\alpha^{\star}\\
             \widehat{\bm{h}}\!-\!\bm{h}^{\star}
           \end{bmatrix}\!+\!\Gamma(\bm{\epsilon})\right)\right\|_2^2$} .
\end{align}
The last equality holds because
$\left(I-E_\epsilon({\alpha})^2\right)C_\epsilon({\alpha}) =
\left(I-C_\epsilon({\alpha})C_\epsilon({\alpha})^{\top}\right)C_\epsilon({\alpha})
= C_\epsilon({\alpha})-C_\epsilon({\alpha})I=0$.

Now, by combining equations \eqref{eq_J0J0}, \eqref{eq_JAXIH} and
\eqref{eq_Jnoise2M}, one has
\begin{align}\label{eq_noise_neq1M}
  \left\|\left(I-E_\epsilon({\alpha}^{\star})^2\right)\!\!\left(I-E_\epsilon({\widehat\alpha})^2\right)\!\begin{bmatrix}
      -\mathcal{X}_\epsilon & \!\!-\mathcal{P}_\epsilon
    \end{bmatrix}\!
                              \begin{bmatrix}
      \widehat\alpha\!-\!\alpha^{\star}
      \\
      \widehat{\bm{h}}\!-\!\bm{h}^{\star}
    \end{bmatrix}\right\|_2^2\nonumber
  \\
  \le2\left(1+\eta \right)\left\|\Gamma(\bm{\epsilon})\right\|_2^2 .
\end{align}
Since we have assumed that the smallest eigenvalue of the matrix
$\begin{bmatrix} \mathcal{X}_\epsilon & \mathcal{P}_\epsilon
\end{bmatrix}^{\top}\left(I-E_\epsilon({\alpha}^{\star})^2\right)\left(I-E_\epsilon({\widehat\alpha})^2\right)\begin{bmatrix}
  \mathcal{X}_\epsilon & \mathcal{P}_\epsilon
\end{bmatrix}$ is lower bounded by $\lambda^2_{\min}>0$, then
\begin{align}
  \left\| \begin{bmatrix}
      \widehat\alpha-\alpha^{\star}\\
      \widehat{\bm{h}}-\bm{h}^{\star}
    \end{bmatrix} \right\|_2^2\le
  \frac{2}{\lambda^2_{\min}}\left(1+\eta
  \right)\left\|\Gamma(\bm{\epsilon})\right\|_2^2.
\end{align}
Since $\|\bm{\epsilon}\|_{\infty}\le\overline\epsilon$, and the
dimension of $\epsilon$ is finite, there must exist a constant
$\zeta>0$ such that for all $\overline\epsilon\ge0$,
\begin{align}
  \left\|\begin{bmatrix}
      \widehat{\alpha}-\alpha^{\star}\\
      \widehat{\bm{h}}-\bm{h}^{\star}
    \end{bmatrix}\right\|_2\le \zeta~ \overline\epsilon
\end{align}

Finally, for $s_D$, compared with Algorithm \ref{Algorithm_Main},
where we directly use
$s_D=\maxv\left(\mathcal{X}^+ - \alpha^{\star}\mathcal{X}\right)$, in
the presence of noise, we only know that
$\maxv\left(\mathcal{X}^+ -
  \alpha^{\star}\mathcal{X}\right)-2(1+\alpha^{\star})\overline\epsilon\le
s_D\le\maxv\left(\mathcal{X}^+ -
  \alpha^{\star}\mathcal{X}\right)+2(1+\alpha^{\star})\overline\epsilon$.
Thus, in \eqref{eq_sdnoise}, we use the average of all the entries
that reach the upper saturation threshold to estimate $s_D$.
\hfill\qed

\begin{remark}\longthmtitle{Size of the data and computational
    complexity, continued.}
  In the presence of random measurement noise, the matrix
  $E_{\epsilon}(\alpha)$ in \eqref{eq_defcEnoise} has more $\pm1$
  entries (due to relaxation) than $E(\alpha)$ in
  \eqref{eq_defcE}. Thus, it requires a larger $T_d$ to make
  Assumption \ref{AS_Unique} hold. Furthermore, since the result of
  Algorithm \ref{Algorithm_Main2} is an approximation to the true
  parameter of the system, in general, adding more data sets may also
  be beneficial for obtaining more accurate system parameters. In
  future work, we plan to study the trade off between computational
  complexity and estimation accuracy. \hfill$\oldsquare$
\end{remark}

\section{Examples}
We present simulation results here to validate the effectiveness of the
proposed results.
%
%

\subsection{Simulation with synthetic data}
We consider a network with $n=10$ nodes. The dimension of the input
$\bm{u}$ is $m=10$. Given the state/input dimensions of the system, we
first create matrices $W_D\in\mathbb{R}^{10\times 10}$ and
$B_D\in\mathbb{R}^{10\times 10}$.  By definition, $W_D$ is a matrix
with $0$ diagonal entries.  For the non-zero entries of $W_D$, we make
sure they are consistent with \textit{Dale's law}, cf.
Remark~\ref{Rm_Dale}, i.e., each column of $W_D$ is either
non-negative or non-positive depending on the excitatory or inhibitory
properties of the nodes. The values of these entries are randomly
chosen from $[0 ~~0.1]$ or $[-0.05~~0]$ with uniform distributions.
For $B_D\in\mathbb{R}^{10\times 10}$, all its entries are randomly
chosen from $[-0.04 ~~0.06]$ with uniform distributions. We set
$\alpha^{\star}=0.9$ and $s_D=2$.  Based on $W_D$, $B_D$, $\alpha$ and
$s_D$, we create data samples, for $k\in \{1,\dots,T_d\}$ and
$T_d=250$. In this simulation, for different $k$, $\bm{x}_d(k)$ and
$\bm{u}_d(k)$ are chosen independently, i.e., the entries of
$\bm{x}_d(k)$ are randomly chosen from $[0 ~~4]$; the entries of
$\bm{u}_d(k)$ are randomly chosen from $[0 ~~6]$, with uniform
distributions. For each pair of $\bm{x}_d(k)$ and $\bm{u}_d(k)$, we
compute $\bm{x}_d^+(k)$ based on the discrete-time system
model~\eqref{eq_dataF}. It is worth pointing out that the obtained
data set satisfies Assumption \ref{AS_Unique} for all
$\alpha\in(0,1)$.

%

%
%


\subsubsection{Parameter identification with Algorithm
  \ref{Algorithm_Main2} under measurement noise}
To simulate the impact of measurement noise, we introduce
$\bm{x}^+_\epsilon(k)=\bm{x}^+_d(k)+\epsilon_{x^+}(k)$,
$ \bm{x}_\epsilon(k)=\bm{x}_d(k)+\epsilon_x(k)$, and
$\bm{u}_\epsilon(k)=\bm{u}_d(k)+\epsilon_u(k)$ where
$\epsilon_{x^+}(k), \epsilon_x(k), \epsilon_u(k)$ are the noises and
they satisfy $\|\epsilon_{x^+}(k)\|_{\infty}\le\overline\epsilon$,
$\|\epsilon_{x}(k)\|_{\infty}\le\overline\epsilon$, and
$\|\epsilon_{u}(k)\|_{\infty}\le \overline\epsilon$. Here,
$\overline\epsilon=0.1$.  By running Algorithm \ref{Algorithm_Main2},
we obtain $\alpha_{\max}=1$, and the function value of
$\mathcal{J}(\alpha)$ is $4.4428$ at $\widehat{\alpha}=0.9012$. The
estimation error for $\alpha$ is $0.0012$.  Given this
$\widehat{\alpha}$, one can obtain $\widehat{\bm{h}}$
from~\eqref{eq_compvh2}, then decode it into matrices $W_D$ and $B_D$
via~\eqref{eq_defzHP}.  To compare $\widehat{\bm{h}}$ with the true
$\bm{h}^{\star}$ of the system, the Root Mean Square Error (RMSE) of
$\bm{h}$ is,
$$\text{RMSE}(\bm{h})=\sqrt{\frac{\|\widehat{\bm{h}}-\bm{h}^{\star}\|}{n(n+m-1)}}=0.0039.$$
Finally, we identify
$$s_{_D}=
\frac{1}{|\mathcal{S}(\widehat{\alpha})|}\sum_{i\in\mathcal{S}(\widehat{\alpha})}\left(\mathcal{X}_\epsilon^+
  - \widehat{\alpha}\mathcal{X}_\epsilon\right)[i]=1.989.$$

\noindent\begin{figure}[h!]
  \vspace{-0.1cm}
  \centering
  \includegraphics[width=7.8 cm]{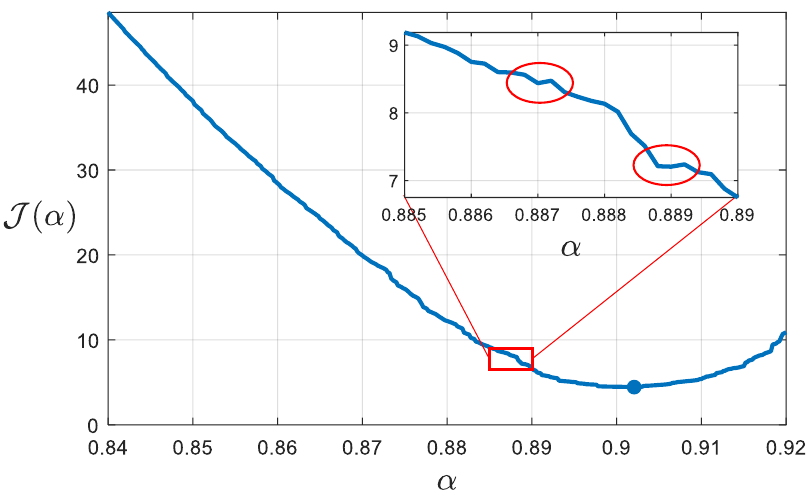}
  \caption{Identification of the system parameter $\alpha$ on a
    10-node network, with measurement noise bounded by
    $\overline\epsilon=0.1$.}
  \label{Fig_With_Noise}
\end{figure}

Fig. \ref{Fig_With_Noise} shows how $\mathcal{J}(\alpha)$ changes as a
function of~$\alpha$. Although the function appears roughly convex, the
magnified area reveals it to be non-smooth and
non-convex. Using gradient descent methods to solve for $\alpha$
can easily result in getting trapped at a local minimum.

\subsubsection{Comparing Algorithms~\ref{Algorithm_Main} and~\ref{Algorithm_Main2}}
Here, we show that Algorithm~\ref{Algorithm_Main2} outperforms
Algorithm~\ref{Algorithm_Main} in terms of estimation accuracy when
the data is subject to measurement noise. We use the same system model
introduced above but generate measurement noise with different
magnitudes, for
$\overline\epsilon\in\{0.02,~0.04,~0.06,~0.08,~0.1\}$. For each
$\overline\epsilon$, the noise is randomly generated for $70$ times
and the parameter is identified for each generated data. This allows
us to statistically analyze the estimation errors of the two
algorithms, as shown in Fig. \ref{Fig_Modified}.  Compared with
Algorithm~\ref{Algorithm_Main}, the advantage of
Algorithm~\ref{Algorithm_Main2} is remarkable when $\overline\epsilon$
is small; and the gap decreases as $\overline\epsilon$ goes large.

\noindent\begin{figure}[h!]
  \vspace{-0.1cm}
  \centering
  \includegraphics[width=9 cm]{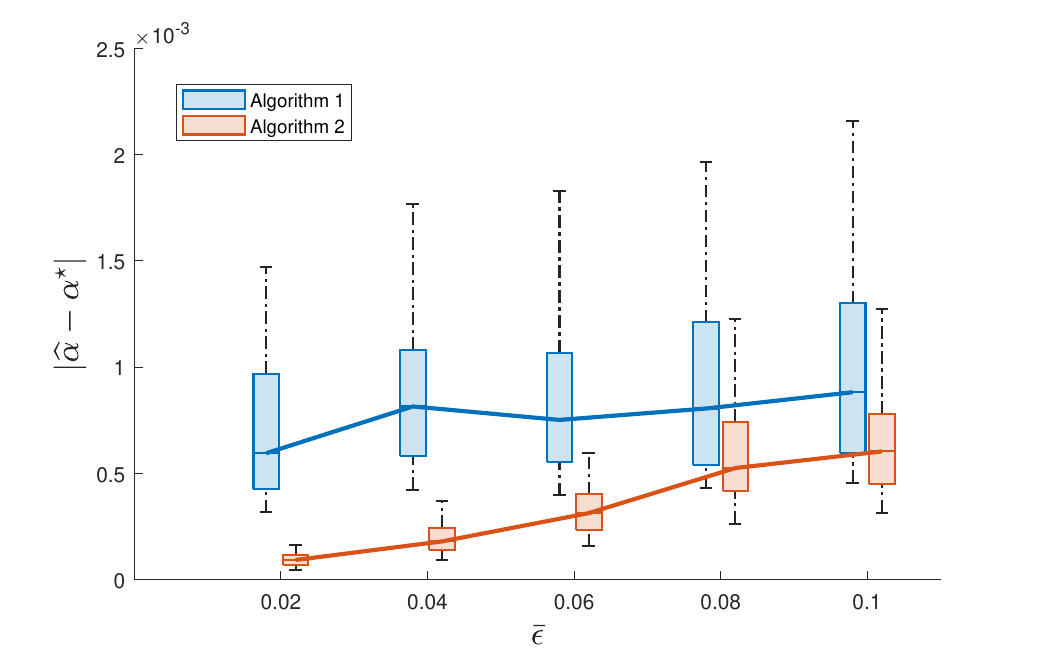}
  \caption{A box-plot with whiskers that compares the estimation
    errors of Algorithm \ref{Algorithm_Main} and Algorithm
    \ref{Algorithm_Main2}. The bottom and top of each box are the 25\%
    and 75\% of the samples, respectively. For each algorithm, the
    solid line connects the medians of the estimation errors.}
  \label{Fig_Modified}
\end{figure}

\subsubsection{Comparison with general nonlinear optimization solver}
The proposed algorithms are based on the reformulation in
Section~\ref{Sec_Scalar} that simplifies an optimization problem with
a large number of variables into a scalar optimization
\eqref{eq_reducedJ}. To demonstrate the advantage of such
reformulation, in terms of both accuracy and computational complexity,
we take $\overline\epsilon=0.04$ and compare in Table
\ref{Tab_compare} the proposed algorithms (ALG1,2) with two nonlinear
optimization solvers (NOS1,2) based on interior-point
methods~\cite{KAM-CLM-DS:89}. In particular, NOS1 aims to directly
minimize \eqref{eq_obj2b} with variables $\alpha, \bm{h}$. Although
the execution time is reasonable, the NOS1 is very unstable and
usually converges to a local minimum with the value of the objective
function larger than $100$. The obtained parameters also have large
errors ($\text{RMSE}(\bm{h})>1$).  The difficulty of solving NOS1 may
come from the fact that $\mathcal{Q}(\alpha)$ is a complex nonlinear
function of $\alpha$. To address this issue, we simplify the problem
by considering NOS2 with an objective function
$\frac{1}{2}\|\mathcal{X}^+-\alpha
\mathcal{X}-\left[\mathcal{P}\bm{h}\right]_0^{2}\|_2^2$ with variables
$\alpha,\bm{h}$. This function characterizes the mismatch of equation
\eqref{eq_compactdataeq} by assuming $s_{_D}=2$ is known. While NOS2
shows a clear performance improvement over NOS1, it takes a
significant amount of execution time due to the large number of
variables. In addition, the NOS2 still have a large performance gap
compared with the algorithms proposed here. Finally, the results for
Algorithms~\ref{Algorithm_Main} and~\ref{Algorithm_Main2} in Table
\ref{Tab_compare} match the ones in Fig. \ref{Fig_Modified}. The
number of $\psi_{\ell}$ are smaller than the bounds identified in
Theorem \ref{TH_alg2} \textbf{b}.

{
  \begin{table}[h]
    \small
    \renewcommand{\arraystretch}{1.3} 
    \centering
    \caption{Comparing the proposed algorithms with nonlinear optimization}
    \begin{tabular}{l|cccc}   \toprule 
      & ALG1  &ALG2&NOS1&NOS2\\ 
      \midrule
      execution & $<10s$ &$\sim250s$&$>100s$&$>1000s$\\
      $|\widehat{\alpha}-\alpha^{\star}|$ &$\sim8\times10^{-4}$ &$\sim2\times 10^{-4}$& $>0.1$ & $>10^{-3}$ \\
      Obj. value & $\sim20$    & $\sim4$  &$>100$  &$\sim40$\\
      $\text{RMSE}(\bm{h})$ & $\sim6\times10^{-2}$ & $\sim4\times10^{-3}$ &$>1$& $>0.1$ \\
      $\#$ of $\psi_{\ell}$ & $46$ &$1256$&$-$ & $-$\\ 
      \bottomrule \label{Tab_compare}
    \end{tabular}\label{Tab_obj}
    {\footnotesize Testing platform uses MATLAB with intel Core
      i9-9900kf CPU and 32 GB of RAM. The NOS1/2 employs
      \texttt{fmincon} solver with interior-point option.} 
\end{table}}

\subsection{Reconstruction of firing rate dynamics in rodents' brain}
We also apply the proposed algorithms to a real-world example on
goal-driven attention. The data we use is from a carefully designed
experimental paradigm \cite{CCR-MRD:14,CCR-MRD:14-crcns} that involves
selective listening in rodents. During the experiment, the rodents are
subject simultaneously to a (left/right) white noise burst and a
(high/low pitch) narrow-band warble. Which of the two sounds is
relevant and which is a distraction depends on the ``rule'' of the
trial.  During the experiments, the firing rates of the neuron cells
are recorded from two different regions of their brains, i.e.,
Prefrontal Cortex (PFC) and Primary Auditory Cortex (A1).  By using
the classification method introduced in \cite{EN-JC:21-tacII}, cf.
Fig.~\ref{Fig_neurons}, we classify all the neuron cells into $2^3=8$
groups based on a combination of the following properties: region
(PFC, A1); type (excitatory, inhibitory); and encoding (task relevant,
irrelevant). Then we consider each class of neurons as a node of the
system, and we assume the obtained network has the following
properties:
\begin{itemize}
\item The nodes are interconnected, the excitatory nodes have positive
  outgoing edges; and inhibitory nodes have negative outgoing
  edges. This is consistent with Dale's law~\cite{JCE:86}.
\item Excitatory nodes may have self-loops; inhibitory nodes do not
  have self-loops \cite{PS-GT-RL-EHB:98}.
\item The excitatory neurons in PFC and the inhibitory neurons in A1
  share similar time constants
  \cite{JDM-AB-DJF-RR-JDW-XC-CP-TP-HS-DL-XW:14}.
\end{itemize}
We use the average firing rate of the populated neurons as the state
of the node. The sampling duration in our example is 14 seconds, for
$t\in[-7 , 7]$ with an interval $\delta t=0.1s$. The stimuli happens
at time $t=0$.
\begin{figure}[h!]
  \centering
  \includegraphics[width=8.5 cm]{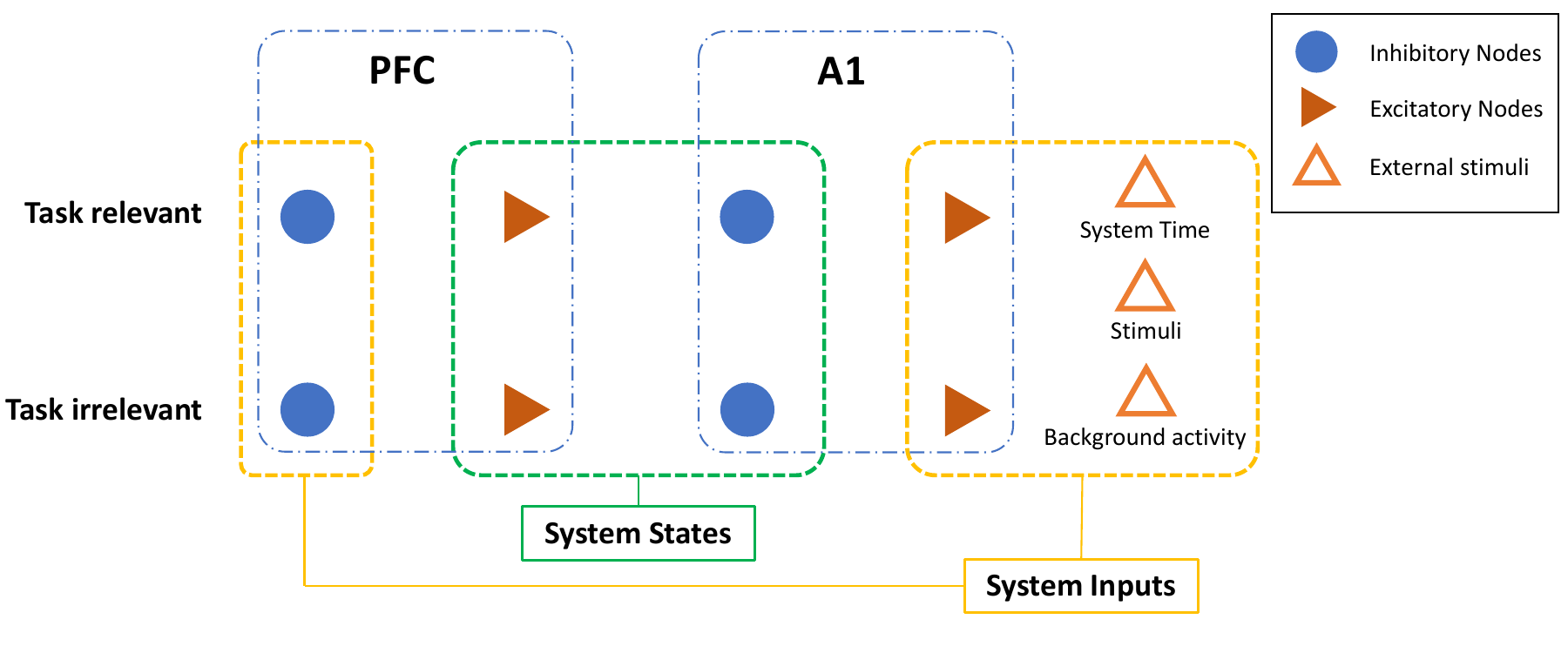}
  \caption{The model includes 8 groups of neuron cells and 3 external
    stimuli. The green box gives the system states; and the yellow
    boxes give the system inputs.}
  \label{Fig_neurons}
\end{figure}
Then, we choose 4 nodes as the system states $(n=4)$, which correspond
to the A1-inhibitory-relevant (A1-IH-TR); the A1-inhibitory-irrelevant
(A1-IH-TI); the PFC-excitatory-relevant (PFC-EX-TR); and
PFC-excitatory-irrelevant (PFC-EX-TI) groups of neurons. These neurons
share similar time constants
\cite{JDM-AB-DJF-RR-JDW-XC-CP-TP-HS-DL-XW:14}.  Finally, we take the
readings of the other 4 nodes, along with three extra signals (i.e.,
system time $u_t=t$, impulse stimuli $u_s=\Delta(t)$, and a constant
background activity $u_b=1$) as system inputs. Thus, the dimension of
the input is $m=7$.

\begin{figure}[h!]
  \vspace{-0.1cm}
  \centering
  \includegraphics[width=8 cm]{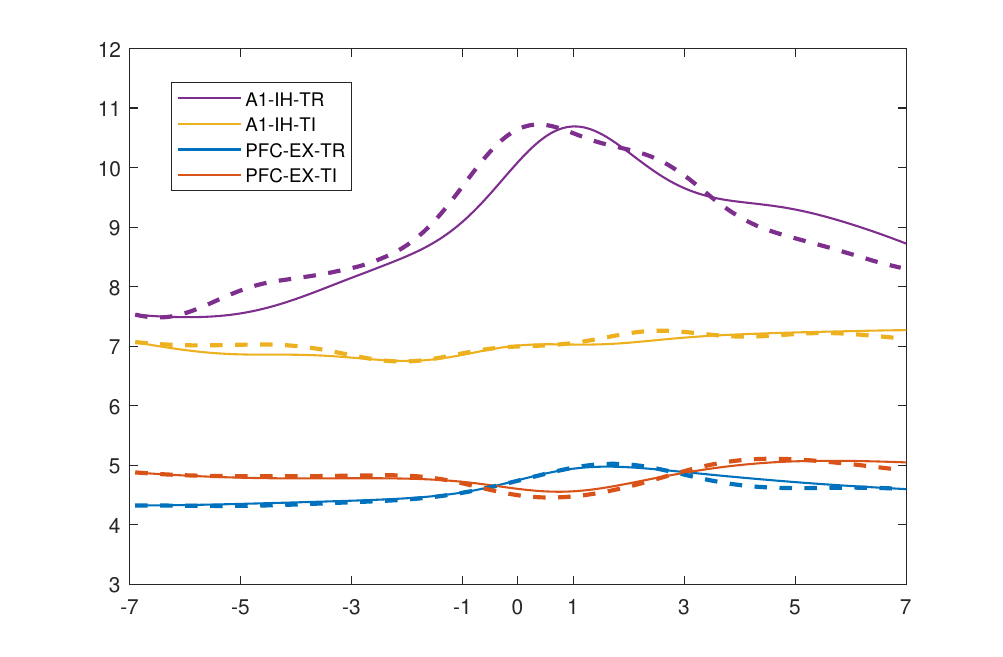}
  \caption{Reconstructing the firing rate dynamics in rodents'
    brain\cite{CCR-MRD:14-crcns}. The solid lines are the experimental
    data; the dashed lines are the dynamics reconstructed by system
    model \eqref{eq_dstmodel}.}
  \label{Fig_rodents}
\end{figure}

Given the network model and the given data set, we employ Algorithm
\ref{Algorithm_Main2} to reconstruct the firing rate dynamics. Here,
we slightly modify the definitions of $\mathcal{P}$ and $\bm{h}$ to
allow the excitatory nodes to have self-loops\footnote{In model
  \eqref{eq_dstmodel}, we assumed that the diagonal entries of $W_D$
  are zero. In the simulation we now allow the two entries to be
  positive, which are corresponding to the two excitatory PFC
  nodes. This modification causes minimal impact to Theorem
  \ref{TH_alg2}.}.  After obtaining the identified the system
parameters, we use the same initial state at $x(t=-7)$, to compare the
experimental data and the firing rate dynamics reconstructed by our
model in Fig.~\ref{Fig_rodents}.  One can see that the identified
linear-threshold network model is able to capture the trends of the
real experimental data.

Apart from the consistency between the reconstructed dynamics and the
data, we also highlight some of our observations from the identified
matrices $W_D$ and $B_D$.  Note that due to the discretization of the
system model and the normalization of data, it is not the absolute
values but the relative ones in $W_D$ and $B_D$ that we should
consider.  For the system matrix,
\begin{itemize}
\item stronger connections ($0.03\sim0.1$) are observed from the two
  PFC-EX (excitatory) nodes to the two A1-IH (inhibitory) nodes,
  regardless of their relevance to the task;
\item In contrast, the connections from the A1-IH nodes to the PFC-EX
  nodes are weak ($\le0.012$);
\item A stronger edge from the PFC-EX-TR (task relevant) to the
  PFC-EX-IR (irrelevant) is observed ($\approx0.04$), but not vice
  versa ($\le0.01$);
\item The connections between the two A1-IH nodes are small
  ($\le0.005$).
\end{itemize}
These observations are in agreement with the hierarchical structure in
selective listening~\cite{EN-JC:21-tacII}, where the level of PFC is
higher than that of A1. Thus, their activities show a single direction
of impact, from PFC-EX neurons to A1-IH neurons.

For the input matrix, major input signals includes stimuli, the
activities of PFC-IH nodes and A1-EX nodes,
\begin{itemize}
\item The stimuli input has stronger impact to the task-relevant nodes
  in both the PFC and A1 areas;
\item The PFC-IH nodes have a larger impact to the PFC-EX nodes;
\item The A1-EX nodes have a larger impact to the A1-IH nodes.
\end{itemize}
These observations show that task-relevant nodes are more sensitive to
the stimuli, which is consistent with the way they are classified, cf.
Fig.~\ref{Fig_neurons}. The results also show that neurons in the same
areas have stronger interactions, which is anatomically reasonable.

%

%

%
%


\section{Conclusions}
We have introduced computationally efficient algorithms to reconstruct
aggregate firing rate dynamics of brain neural networks modeled by
linear-threshold networks.  Central to our approach is a two-step
identification process for model parameters, where we first
reformulate the problem into a scalar variable optimization and then
compute other variables based on the identified scalar variable. Such
decomposition significantly improves computational efficiency, with
guaranteed correctness of the identification results. We have also
considered the impact of measurement noise and proposed a modified
version of the algorithm whose identification error is guaranteed to
be linearly bounded by the magnitude of the error. We have validated
the effectiveness of both algorithms in simulation and on experimental
data.  Future work will leverage the results of the paper in the
design of schemes for the data-driven regulation of neuronal firing
activity, explore the implications for the treatment of brain
disorders, and use real-time data to predict the firing patterns of
animal subjects and its relationship with various cognitive
processeses.

\section*{Appendix}

\begin{proof}[Proof of Lemma~\ref{LM_choiceEC}]
  Due to the one-to-one correspondence (bijection) between $E(\alpha)$
  and $C(\alpha)$, in the following, we only prove the statements of
  Lemma \ref{LM_choiceEC} for $E(\alpha)$.  Since
  $\maxv(\mathcal{X}^+-\alpha \mathcal{X})$ is a piece-wise linear
  function to $\alpha$, it is easy to observe that $E(\alpha)$ changes
  piece-wisely with $\alpha$. Furthermore, from the definition of
  $E(\alpha)$ in equations \eqref{eq_defcE}, we can see that its
  entries ($=0,\pm1$) depend on three types of data in
  $\left(\mathcal{X}^+ - \alpha\mathcal{X}\right)$: the ones active
  the upper saturation threshold
  ($\maxv\left(\mathcal{X}^+ - \alpha\mathcal{X}\right)$); the ones do
  not active saturation thresholds; and the ones that active the lower
  saturation threshold ($0$).  For the convenience of presentation, we
  use three sets $\mathcal{S}$, $\mathcal{M}$, $\mathcal{Z}$, to
  denote these entries ($1$, $0$, $-1$), respectively. Note that the
  matrix $E(\alpha)$ changes, only if the sets $\mathcal{S}$,
  $\mathcal{M}$, $\mathcal{Z}$ change their elements. In order to
  detect such changes, we gradually change the value of $\alpha$ from
  $0$ to $1$, and introduce several \textit{markers} on this range,
  denoted by $\psi_{\ell}$, $\ell=1,2,\cdots$. These $\psi_{\ell}$ are
  located at the transition points of $\alpha$, such that when
  $\alpha$ moves across $\psi_{\ell}$, certain entries of the vector
  $\left(\mathcal{X}^+ - \alpha\mathcal{X}\right)$ will shift from set
  $\mathcal{M}$ to sets $\mathcal{S}$ or $\mathcal{Z}$
  (correspondingly, some other entries of the vector will leave the
  sets $\mathcal{S}$ or $\mathcal{Z}$ and join set
  $\mathcal{M}$). Based on these markers $\psi_{\ell}$, we can
  partition the feasible region of $\alpha$ in to a finite number of
  segments, and we know that when $\alpha$ is on certain segment,
  i.e., $(\psi_{\ell},\psi_{\ell+1})$, the matrix $C(\alpha)$ does not
  change.

  Now to check the total number of possible $\psi_{\ell}$ one can
  create on $(0,1)$, we first consider the element exchange between
  $\mathcal{S}$ and $\mathcal{M}$. Note that
  $\maxv(\mathcal{X}^+-\alpha \mathcal{X})=\maxv(\mathcal{X}^+[i]-\alpha
  \mathcal{X}[i])$, $i=1,\cdots,nT_d$.  Since
  $\maxv(\mathcal{X}^+[i]-\alpha \mathcal{X}[i])$ is convex on
  $\alpha$, for each $i$, the line
  $\mathcal{X}^+[i]-\alpha \mathcal{X}[i]$ can intersect
  $\maxv(\mathcal{X}^+[i]-\alpha \mathcal{X}[i])$ only once, either for
  a continuous interval of $\alpha$ or on a particular point. Thus, on
  different intervals $\alpha\in(\psi_{\ell},\psi_{\ell+1})$, the
  corresponding $\mathcal{S}$ sets must be disjoint. Since the vector
  $(\mathcal{X}^+-\alpha \mathcal{X})$ has $nT_d$ entries, which can
  be partitioned into at most $nT_d$ disjoint sets, the change on
  $\mathcal{S}$ can lead to a at most $nT_d$ number of
  $\psi_{\ell}$. Similarly, for the element exchange between
  $\mathcal{Z}$ and $\mathcal{M}$, since each
  $\mathcal{X}^+[i]-\alpha \mathcal{X}[i]$ can only intersect $0$ for
  one time, it can also create a at most $nT_d$ number of
  $\psi_{\ell}$. Bringing these two conditions together, one has at
  most $2nT_d$ number of $\psi_{\ell}$ on $(0,1)$.

  Finally, for each open sets $(\psi_{\ell},\psi_{\ell+1})$, we have a
  fixed $E(\alpha)$. If we take $\psi_0=1$ and $\psi_{\max}=1$, the
  intervals between $\psi_0, \psi_1, \psi_2,\cdots,\psi_{\max}$ can
  lead to at most $2nT_d+1$ number of $E(\alpha)$. However, it is
  worth mentioning that the union of these open sets
  $(\psi_{\ell},\psi_{\ell+1})$ does not include the marker points
  $\psi_{\ell}$, $\ell=1,2,\cdots$. Actually, on these points, certain
  entries of the vectors
  $\left(\mathcal{X}^+ - \alpha\mathcal{X}\right)$ are intersecting,
  and take the greatest/zero values simultaneously. Thus, the
  $E(\alpha=\psi_{\ell})$ will be different from both
  $E(\alpha<\psi_{\ell})$ and $E(\alpha>\psi_{\ell})$. Considering
  this fact, we will have an extra $2nT_d$ number of $E(\alpha)$ on
  these $\psi_{\ell}$ points. This, together with the previous
  $2nT_d+1$, lead to at most $4nT_d+1$ number of $E(\alpha)$. This
  completes the proof.
\end{proof}

\bibliographystyle{IEEEtran}
\bibliography{alias,New,Main-add,JC,Main}

\begin{thebibliography}{10}
\providecommand{\url}[1]{#1}
\csname url@samestyle\endcsname
\providecommand{\newblock}{\relax}
\providecommand{\bibinfo}[2]{#2}
\providecommand{\BIBentrySTDinterwordspacing}{\spaceskip=0pt\relax}
\providecommand{\BIBentryALTinterwordstretchfactor}{4}
\providecommand{\BIBentryALTinterwordspacing}{\spaceskip=\fontdimen2\font plus
\BIBentryALTinterwordstretchfactor\fontdimen3\font minus
  \fontdimen4\font\relax}
\providecommand{\BIBforeignlanguage}[2]{{%
\expandafter\ifx\csname l@#1\endcsname\relax
\typeout{** WARNING: IEEEtran.bst: No hyphenation pattern has been}%
\typeout{** loaded for the language `#1'. Using the pattern for}%
\typeout{** the default language instead.}%
\else
\language=\csname l@#1\endcsname
\fi
#2}}
\providecommand{\BIBdecl}{\relax}
\BIBdecl

\bibitem{XW-JC:21-cdc}
X.~Wang and J.~Cort\'{e}s, ``Data-driven reconstruction of firing rate dynamics
  in brain networks,'' in \emph{{IEEE} Conf.\ on Decision and Control}, Austin,
  Texas, Dec. 2021, pp. 6456--6461.

\bibitem{WG-AKK-HM-AVMH:97}
W.~Gerstner, A.~K. Kreiter, H.~Markram, and A.~V. Herz, ``Neural codes: firing
  rates and beyond,'' \emph{Proceedings of the National Academy of Sciences},
  vol.~94, no.~24, pp. 12\,740--12\,741, 1997.

\bibitem{PA-RF-IL-AM:12}
P.~Alcami, R.~Franconville, I.~Llano, and A.~Marty, ``Measuring the firing rate
  of high-resistance neurons with cell-attached recording,'' \emph{Journal of
  Neuroscience}, vol.~32, no.~9, pp. 3118--3130, 2012.

\bibitem{NEC-AS-AK:06}
N.~E. Crone, A.~Sinai, and A.~Korzeniewska, ``High-frequency gamma oscillations
  and human brain mapping with electrocorticography,'' \emph{Progress in brain
  research}, vol. 159, pp. 275--295, 2006.

\bibitem{COB-DSB-VMP:18}
C.~O. Becker, D.~S. Bassett, and V.~M. Preciado, ``Large-scale dynamic modeling
  of task-fmri signals via subspace system identification,'' \emph{Journal of
  Neural Engineering}, vol.~15, no.~6, p. 066016, 2018.

\bibitem{DSB-OS:17}
D.~S. Bassett and O.~Sporns, ``Network neuroscience,'' \emph{Nature
  Neuroscience}, vol.~20, no.~3, p. 353, 2017.

\bibitem{MR-OS:10}
M.~Rubinov and O.~Sporns, ``Complex network measures of brain connectivity:
  uses and interpretations,'' \emph{NeuroImage}, vol.~52, no.~3, pp.
  1059--1069, 2010.

\bibitem{MS-ARM-VJ-GD-PR:18}
M.~Schirner, A.~R. McIntosh, V.~Jirsa, G.~Deco, and P.~Ritter, ``Inferring
  multi-scale neural mechanisms with brain network modelling,'' \emph{Elife},
  vol.~7, p. e28927, 2018.

\bibitem{TM-GB-DSB-FP:19}
T.~Menara, G.~Baggio, D.~S. Bassett, and F.~Pasqualetti, ``A framework to
  control functional connectivity in the human brain,'' in \emph{{IEEE} Conf.\
  on Decision and Control}.\hskip 1em plus 0.5em minus 0.4em\relax Nice,
  France: IEEE, 2019, pp. 4697--4704.

\bibitem{PM-RC-JBM:12}
P.~Molenberghs, R.~Cunnington, and J.~B. Mattingley, ``Brain regions with
  mirror properties: a meta-analysis of 125 human fmri studies,''
  \emph{Neuroscience \& Biobehavioral Reviews}, vol.~36, no.~1, pp. 341--349,
  2012.

\bibitem{ES-BS-IV-HMP-IN-MS-IS:15}
E.~Slomowitz, B.~Styr, I.~Vertkin, H.~Milshtein-Parush, I.~Nelken, M.~Slutsky,
  and I.~Slutsky, ``Interplay between population firing stability and single
  neuron dynamics in hippocampal networks,'' \emph{Elife}, vol.~4, p. e04378,
  2015.

\bibitem{DS-SVS:14}
D.~Sritharan and S.~V. Sarma, ``Fragility in dynamic networks: application to
  neural networks in the epileptic cortex,'' \emph{Neural Computation},
  vol.~26, no.~10, pp. 2294--2327, 2014.

\bibitem{DE-DS-SVS:15}
D.~Ehrens, D.~Sritharan, and S.~V. Sarma, ``Closed-loop control of a fragile
  network: application to seizure-like dynamics of an epilepsy model,''
  \emph{Frontiers in Neuroscience}, vol.~9, p.~58, 2015.

\bibitem{PD-LFA:01}
P.~Dayan and L.~F. Abbott, \emph{Theoretical Neuroscience: Computational and
  Mathematical Modeling of Neural Systems}, ser. Computational
  Neuroscience.\hskip 1em plus 0.5em minus 0.4em\relax Cambridge, MA: MIT
  Press, 2001.

\bibitem{EN-JC:21-tacI}
E.~Nozari and J.~Cort\'es, ``Hierarchical selective recruitment in
  linear-threshold brain networks. {P}art {I}: Intra-layer dynamics and
  selective inhibition,'' \emph{IEEE Transactions on Automatic Control},
  vol.~66, no.~3, pp. 949--964, 2021.

\bibitem{FC-AA-FP-JC:21-csl}
F.~Celi, A.~Allibhoy, F.~Pasqualetti, and J.~Cort\'es, ``Linear-threshold
  dynamics for the study of epileptic events,'' \emph{IEEE Control Systems
  Letters}, vol.~5, no.~4, pp. 1405--1410, 2021.

\bibitem{ZH-ZW:13}
Z.~Hou and Z.~Wang, ``From model-based control to data-driven control: Survey,
  classification and perspective,'' \emph{Information Sciences}, vol. 235, pp.
  3--35, 2013.

\bibitem{KJA-PE:71}
K.~J. {\r Astr{\"o}m} and P.~Eykhoff, ``System identification: a survey,''
  \emph{Automatica}, vol.~7, no.~2, pp. 123--162, 1971.

\bibitem{XH-RJM-SC-CJH-KL-GWI:08}
X.~Hong, R.~J. Mitchell, S.~Chen, C.~J. Harris, K.~Li, and G.~W. Irwin, ``Model
  selection approaches for non-linear system identification: a review,''
  \emph{International Journal of Systems Science}, vol.~39, no.~10, pp.
  925--946, 2008.

\bibitem{AB-RJ-DPW-19}
A.~Bakshi, R.~Jayaram, and D.~P. Woodruff, ``Learning two layer rectified
  neural networks in polynomial time,'' in \emph{Conference on Learning
  Theory}, vol.~99, 2019, pp. 195--268.

\bibitem{FA-EDS-VM:93}
F.~Albertini, E.~D. Sontag, and V.~Maillot, ``Uniqueness of weights for neural
  networks,'' \emph{Artificial Neural Networks for Speech and Vision}, pp.
  115--125, 1993.

\bibitem{GH:03}
G.~Huang, ``Learning capability and storage capacity of two-hidden-layer
  feedforward networks,'' \emph{IEEE transactions on neural networks}, vol.~14,
  no.~2, pp. 274--281, 2003.

\bibitem{JCE:86}
J.~Eccles, ``Chemical transmission and {D}ale's principle,'' in \emph{Progress
  in Brain Research}.\hskip 1em plus 0.5em minus 0.4em\relax Elsevier, 1986,
  vol.~68, pp. 3--13.

\bibitem{EN-JC:21-tacII}
E.~Nozari and J.~Cort\'es, ``Hierarchical selective recruitment in
  linear-threshold brain networks. {P}art {II}: Inter-layer dynamics and
  top-down recruitment,'' \emph{IEEE Transactions on Automatic Control},
  vol.~66, no.~3, pp. 965--980, 2021.

\bibitem{RMB-DJS:02}
R.~M. Bruno and D.~J. Simons, ``Feedforward mechanisms of excitatory and
  inhibitory cortical receptive fields,'' \emph{Journal of Neuroscience},
  vol.~22, no.~24, pp. 10\,966--10\,975, 2002.

\bibitem{SB-LV:09}
S.~Boyd and L.~Vandenberghe, \emph{Convex Optimization}.\hskip 1em plus 0.5em
  minus 0.4em\relax Cambridge, UK: Cambridge University Press, 2009.

\bibitem{AR-PW:80}
A.~Ruhe and P.~{\AA}. Wedin, ``Algorithms for separable nonlinear least squares
  problems,'' \emph{SIAM Review}, vol.~22, no.~3, pp. 318--337, 1980.

\bibitem{JCW-PR-IM-BLMDM:05}
J.~C. Willems, P.~Rapisarda, I.~Markovsky, and B.~L.~M. {De Moor}, ``A note on
  persistency of excitation,'' \emph{Systems \& Control Letters}, vol.~54,
  no.~4, pp. 325--329, 2005.

\bibitem{DC-SW:87}
D.~Coppersmith and S.~Winograd, ``Matrix multiplication via arithmetic
  progressions,'' in \emph{Proceedings of the nineteenth annual ACM symposium
  on Theory of computing}, 1987, pp. 1--6.

\bibitem{EN-JS-LC-EJC-XH-MAB-ASM-GJP-SDB:20}
E.~Nozari, J.~Stiso, L.~Caciagli, E.~J. Cornblath, X.~He, M.~A. Bertolero,
  A.~S. Mahadevan, G.~J. Pappas, and D.~S. Bassett, ``Is the brain
  macroscopically linear? {A} system identification of resting state
  dynamics,'' \emph{arXiv preprint arXiv:2012.12351}, 2020.

\bibitem{KAM-CLM-DS:89}
K.~A. Mc{S}hane, C.~L. Monma, and D.~Shanno, ``An implementation of a
  primal-dual interior point method for linear programming,'' \emph{Journal on
  Computing}, vol.~1, no.~2, pp. 70--83, 1989.

\bibitem{CCR-MRD:14}
C.~C. Rodgers and M.~R. DeWeese, ``Neural correlates of task switching in
  prefrontal cortex and primary auditory cortex in a novel stimulus selection
  task for rodents,'' \emph{Neuron}, vol.~82, no.~5, pp. 1157--1170, 2014.

\bibitem{CCR-MRD:14-crcns}
\BIBentryALTinterwordspacing
------, ``Spiking responses of neurons in rodent prefrontal cortex and auditory
  cortex during a novel stimulus selection task,'' CRCNS.org, 2014. [Online].
  Available: \url{http://dx.doi.org/10.6080/K0W66HPJ}
\BIBentrySTDinterwordspacing

\bibitem{PS-GT-RL-EHB:98}
P.~Somogyi, G.~Tamasab, R.~Lujan, and E.~H. Buhl, ``Salient features of
  synaptic organisation in the cerebral cortex,'' \emph{Brain Research
  Reviews}, vol.~26, no.~2, pp. 113--135, 1998.

\bibitem{JDM-AB-DJF-RR-JDW-XC-CP-TP-HS-DL-XW:14}
J.~D. Murray, A.~Bernacchia, D.~J. Freedman, R.~Romo, J.~D. Wallis, X.~Cai,
  C.~Padoa-Schioppa, T.~Pasternak, H.~Seo, D.~Lee, and X.~Wang, ``A hierarchy
  of intrinsic timescales across primate cortex,'' \emph{Nature Neuroscience},
  vol.~17, no.~12, p. 1661, 2014.

\end{thebibliography}

\begin{IEEEbiography}[{\includegraphics[width=1.0
    in,height=1.35in,clip,keepaspectratio]{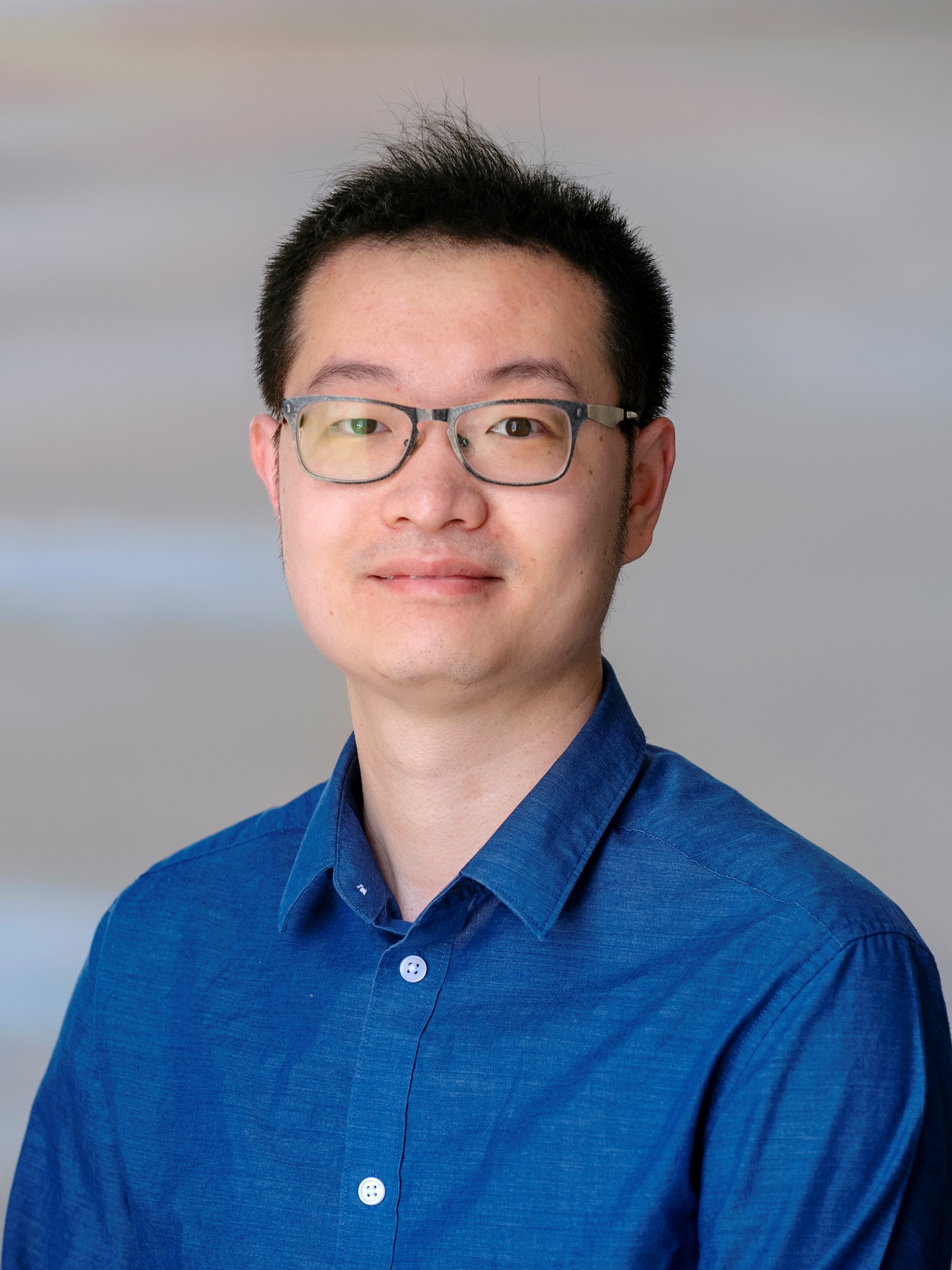}}]{Xuan Wang}{\space}
  is an assistant professor with the Department of Electrical and
  Computer Engineering at George Mason University.  He received his
  Ph.D. degree in autonomy and control, from the School of Aeronautics
  and Astronautics, Purdue University in 2020. He was a post-doctoral
  researcher with the Department of Mechanical and Aerospace
  Engineering at the University of California, San Diego from 2020 to
  2021. His research interests include multi-agent control and
  optimization; resilient multi-agent coordination; system
  identification and data-driven control of network dynamical systems.
\end{IEEEbiography}

\begin{IEEEbiography}[{\includegraphics[width=1.0
    in,height=1.35in,clip,keepaspectratio]{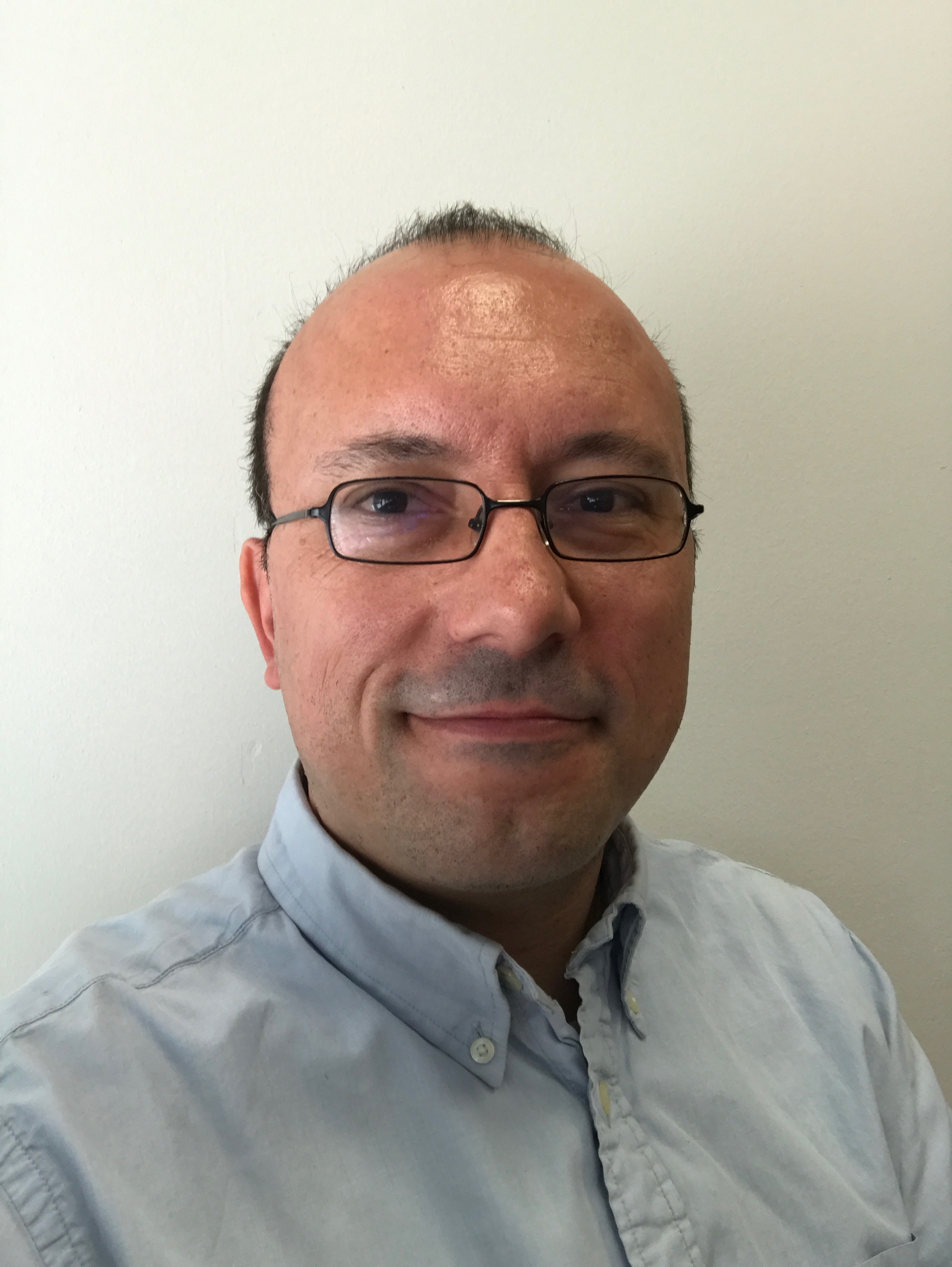}}]{Jorge Cort\'es}{\space}
  received the Licenciatura degree in mathematics from Universidad de
  Zaragoza, Zaragoza, Spain, in 1997, and the Ph.D. degree in
  engineering mathematics from Universidad Carlos III de Madrid,
  Madrid, Spain, in 2001. He held postdoctoral positions with the
  University of Twente, Twente, The Netherlands, and the University of
  Illinois at Urbana-Champaign, Urbana, IL, USA. He was an Assistant
  Professor with the Department of Applied Mathematics and Statistics,
  University of California, Santa Cruz, CA, USA, from 2004 to 2007. He
  is a Professor in the Department of Mechanical and Aerospace
  Engineering, University of California, San Diego, CA, USA.  He is a
  Fellow of IEEE, SIAM, and IFAC.  His research interests include
  distributed control and optimization, network science, non-smooth
  analysis, reasoning and decision making under uncertainty, network
  neuroscience, and multi-agent coordination in robotic, power, and
  transportation networks.
\end{IEEEbiography}

\end{document}